\Crefname{theorem}{Theorem}{Theorems}
\Crefname{lem}{Lemma}{Lemmata}
\Crefname{defn}{Definition}{Definitions}
\Crefname{equation}{Eq.}{Equations}
\theoremstyle{thmstyleone}
\newtheorem{theorem}{Theorem}[section]
\newtheorem{prop}[theorem]{Proposition}
\newtheorem{lem}[theorem]{Lemma}
\newtheorem{corollary}[theorem]{Corollary}
\newtheorem{defn}[theorem]{Definition}
\newtheorem{remark}[theorem]{Remark}
\newtheorem*{remark*}{Remark}
\DeclareMathOperator{\dist}{dist}
\DeclareMathOperator{\supp}{supp}
\DeclareMathOperator{\diam}{diam}
\DeclarePairedDelimiter\floor{\lfloor}{\rfloor}
\newcommand{\Srat}{\mathbb{S}_{\mathbb Q}}
\newcommand{\Z}{{\mathbb{Z}}}
\newcommand{\N}{{\mathbb{N}}}
\newcommand{\R}{{\mathbb{R}}}
\newcommand{\C}{{\mathbb{C}}}
\begin{document}

\title[Almost everywhere ergodicity in quantum lattice models]{Almost everywhere ergodicity in quantum lattice models}
\author*[1]{\fnm{Dimitrios} \sur{Ampelogiannis}}\email{dimitrios.ampelogiannis@kcl.ac.uk}
\author*[1]{\fnm{Benjamin} \sur{Doyon}}\email{benjamin.doyon@kcl.ac.uk}

\affil[1]{\orgdiv{Department of Mathematics}, \orgname{ King's College London}, \orgaddress{\street{Strand WC2R 2LS}, \city{UK}}}

\abstract{
We rigorously examine, in generality, the ergodic properties of quantum lattice models with short range interactions, in the $C^*$ algebra formulation of statistical mechanics. Ergodicity results, in the context of group actions on $C^*$ algebras, assume that the algebra is asymptotically abelian, which is not generically the case for time evolution. The Lieb-Robinson bound tells us that, in a precise sense, the spatial extent of any time-evolved local operator grows linearly with time. This means that the algebra of observables is asymptotically abelian in a space-like region, and implies a form of ergodicity outside the light-cone. But what happens within it? We show that the long-time limit of the $n$-th moment of a ray-averaged observable, along space-time rays of almost every speed, converges to the $n$-th power of its expectation in the state (i.e.\ its ensemble average). Thus ray averages do not fluctuate in the long time limit. This is a statement of ergodicity, and holds in any state that is invariant under space-time translations and that satisfies weak clustering properties in space. The ray averages can be performed in a way that accounts for oscillations, showing that ray-averaged observables cannot sustainably oscillate in the long time limit. We also show that in the GNS representation of the algebra of observables, for any KMS state with the above properties, the long-time limit of the ray average of any observable converges (in the strong operator topology) to the ensemble average times the identity, again along space-time rays of almost every speed. This is a strong version of ergodicity, and indicates that, as operators, observables get ``thinner'' almost everywhere within the light-cone. A similar statement holds under oscillatory averaging.
}

\maketitle


\section{Introduction}
In this paper we are concerned with the ergodic properties of quantum spin lattice systems of arbitrary spatial dimension $D$, with exponentially decaying (including finite range) interactions. Our results are rigorous and expressed in the algebraic formulation of quantum statistical mechanics, see the Monographs \cite{bratteli_operator_1987,bratteli_operator_1997,naaijkens_quantum_2017}. The results take their roots in an ergodicity theorem proved in \cite{doyon2022hydro}, where quantum spin chains ($D=1$) with finite range interactions are discussed. A discussion of various physical aspects of the results and their links with other notions of ergodicity is given in \cite{ampelogiannis2021hydro}.

The main statement that we will prove is that the long-time averaging of any observable $A$ (any element of the $C^*$ algebra) converges, in the strong operator topology of the Gelfand-Naimark-Segal (GNS) representation of a state $\omega$, to its expectation in the state $\omega(A)$, times the identity operator. This holds for almost every speed with respect to the Lebesgue measure, and all rational directions on the lattice; we will refer to this as ``almost everywhere" or ``almost every space-time ray". The result holds in every space invariant (that is, invariant under spatial translations) factor state $\omega$ satisfying the Kubo-Martin-Schwinger (KMS) condition \cite[Section 5.3.1]{bratteli_operator_1997}. These include any KMS state of quantum spin lattice models with sufficiently fast-decaying interactions, at high enough temperature.

The result implies that convergence holds in every correlation function. Thus, connected correlation functions (Ursell functions) vanish in the limit, and the $n$-th moment of the ray-average of an observable converges to the $n$-th power of its expectation in the state. We will show, by using the Lieb-Robinson bound \cite{lieb_finite_1972}, that in fact, the results for two-point correlation functions and for the $n$-th moments hold in every space invariant factor state, not necessarily KMS.

The averaging can also be taken with respect to oscillatory factors, in a way that accounts for oscillations, for any wavelength and frequency. We show that this vanishes in the long time limit for almost every ray. We will also show that the property of the state being factor can be replaced by an appropriate property of clustering in space.

The result of time-averaging in KMS states is a strong, operatorial version of ergodicity. It indicates that the ray-average of an observable $A$ along almost every ray, including time-like rays of speed as small as desired, converges to its ensemble average $\omega(A)$. The result for the 2nd moment is a mean-square ergodicity result. It implies that the variance, with respect to the state, of a ray-averaged observable vanishes at long times, i.e.\ the ray-averaged observable does not fluctuate within the state (after a long time, almost everywhere).

Ergodicity -- the convergence under long-time averaging to ensemble averages -- is a difficult problem in many-body quantum systems. The present paper gives, we believe, the first general, rigorous results of this type that do not require any (generally hard-to-prove) assumptions of assymptotic abelianness in timelike directions. Further, in our understanding, the extension of the notion of ergodicity to averages with respect to oscillatory factors has not been considered previously in the literature. Our results show in particular that coherent asymptotic oscillatory behaviours of correlations are not possible on any set of rays covering a range of speeds of non-zero Lebesgue measure.

Ergodicity is closely related to the problem of {\em return to equilibrium}. This problem was first formulated, and basic results obtained, in \cite{robinson_return_1973,robinsonAlgebrasQuantumStatistical1976}. The connection of our results to some of these aspects is discussed in \cite{ampelogiannis2021hydro}.

The present ergodicity results indicate that a lot of information, initially encoded in the microscopic dynamics, is lost at larger scales. Since the results are general, for any $D$-dimensional quantum lattice model, they show that almost-everywhere ergodicity does not require particular properties of the dynamics, such as ``chaos". Rather, it is a consequence of ``extensivity", as encoded by the short-range interaction, and the fact that the state is factor (hence clusters at large distances).  

The loss of information at large scales of space-time can be formalised by the phenomenon of hydrodynamic projection: intuitively, the projection of observables onto conserved quantities at large scales. The phenomenon of hydrodynamic projection is proved rigorously in \cite{doyon2022hydro} for $D=1$ and finite range interactions, where it is seen to emerge as a consequence of the simplest statement of almost-everywhere ergodicity. One can extend these results to any $D$, as an application of some of the methods developed here. This involves quite a few additional mathematical constructions, and the statement and proof is reported in \cite{ampelogiannis2021hydro}; here, we concentrate on a full analysis of almost-everywhere ergodicity.

The paper is organised as follows. In Section \ref{sectsetup} we discuss the precise setup in which we work: the $C^*$ algebraic formulation for quantum spin lattices. In Section \ref{sectoverview} with give an overview of the main theorems, with the precise mathematical statements. In Section \ref{section:ergodicity_abelianness} we obtain simple but important results  outside the Lieb-Robinson cone, in space-like regions of the lattice. These will form the basis for the proof of ergodicity results within the Lieb-Robinson cone, which are provided in Sections \ref{sectproofergodicityprojection} - \ref{section:frequencyproof}. 

\bmhead{Acknowledgments}
This work has benefited from discussions with Petros Meramveliotakis. BD is supported by EPSRC under the grant ``Emergence of hydrodynamics in many-body systems: new rigorous avenues from functional analysis", ref.~EP/W000458/1.  DA is supported by a studentship from EPSRC. 

\bmhead{Data availability statement}
Data sharing not applicable to this article as no datasets were generated or analysed during the current study.

\section{The set-up} \label{sectsetup}
We consider a lattice $\Z^D$, as a $\mathbb{Z}$-module, equipped with the $l_1$ norm. On each site $\boldsymbol n \in \Z^D$ it admits a quantum spin described by the (matrix) algebra of observables $\mathfrak{U}_{\boldsymbol n} \coloneqq M_{N_{\boldsymbol n}} ( \C)$,  $N_{\boldsymbol n} \in \N$ with the operator norm $||\cdot ||$, and for all lattice points $\boldsymbol n \in \Z^D$ the dimension of the spin matrix algebra is uniformly bounded, $N_{\boldsymbol n} \leq N$ for some $N \in \N$. To each finite $Λ \subset \Z^D$ we have the algebra 
\begin{equation}
    \mathfrak{U}_Λ\coloneqq \bigotimes_{\boldsymbol n \in Λ} \mathfrak{U}_{\boldsymbol n}
\end{equation}
and the algebra of local observables is defined  as the direct limit of the increasing net of algebras $\{ \mathfrak{U}_Λ \}_{Λ \in P_f(\Z^D)}$, where $P_f(\Z^D)$ denotes the set of finite subsets of $\Z^D$,
\begin{equation}
    \mathfrak{U}_{\rm loc} \coloneqq \lim_{\longrightarrow} \mathfrak{U}_Λ.
\end{equation}
The norm completion of $\mathfrak{U}_{\rm loc}$ defines the quasi-local C$^*$-algebra of the quantum spin lattice:
\begin{equation}
    \mathfrak{U} \coloneqq \overline{\mathfrak{U}_{\rm loc}}.
\end{equation}
For details on the definition of quasi-local C$^*$-algebras see \cite[Chapter 3.2.3]{naaijkens_quantum_2017}. 

Throughout this work we will use the following quantities:
The support of an observable $A \in \mathfrak{U}_{\rm loc}$, and the distance of the supports of $A,B \in \mathfrak{U}_{\rm loc}$ are defined as
\begin{equation}
    \supp(A) = \bigcap \{ X \subset \mathbb{Z}^D : A \in \mathfrak{U}_X \} 
\end{equation}
and
\begin{equation}
    \dist(A,B) \coloneqq \dist( \supp(A), \supp(B))
\end{equation}
respectively, where the distance between two subsets $X,Y$ of $\mathbb{Z}^D$ is 
\begin{equation}
    \dist(X,Y) = \min \{ |x-y| : x \in X , y \in Y \}.
\end{equation}
The diameter of subsets $X$ of $\mathbb{Z}^D$ is
\begin{equation}
    \diam(X) = \max \{ |x-y|: x,y \in X \}
\end{equation}
and, finally,  $|X|$ denotes the number of elements in $X$.

The dynamics are introduced by group actions on $\mathfrak{U}$, and  states of the system are the positive linear functionals:

\begin{defn}[Dynamical System] \label{defn:dynamicalsystem}
A dynamical system of a  quantum spin lattice is a triple $(\mathfrak{U},ι,τ)$ where $\mathfrak{U}$ is a quasi-local $C^*$-algebra, $τ$ is a strongly continuous representation of the group $\R$ by $^*$-automorphisms $\{ τ_t : \mathfrak{U} \xrightarrow{\sim} \mathfrak{U} \}_{t \in \R}$, and $ι$ is a representation of the translation group $\Z^D$ by $^*$-automorphisms $\{ ι_{\boldsymbol n}: \mathfrak{U} \xrightarrow{\sim} \mathfrak{U} \}_{\boldsymbol n \in \Z^D}$, such that for any $Λ \in P_f(\Z^D)$: $A \in \mathfrak{U}_Λ \implies ι_{\boldsymbol n }(A) \in \mathfrak{U}_{Λ+\boldsymbol n }$ for all $\boldsymbol n \in \Z^D$. We further assume that $τ$ is such that $τ_t ι_{\boldsymbol n} = ι_{\boldsymbol n} τ_t  $, $\forall t \in \R$, $\boldsymbol n \in \Z^D$, i.e.\ time evolution is homogeneous. 
\end{defn}

\begin{defn}[States, Invariance] \label{defn:state-invariance}
A state  of a dynamical system $(\mathfrak{U},ι,τ)$ is a positive linear functional $ω: \mathfrak{U} \to \C$ such that $\norm{ω}=1$. The set of states is denoted by $E_{\mathfrak{U}}$. A state $ω \in E_{\mathfrak{U}}$ is called space invariant if $ω( ι_{\boldsymbol n}(A) ) = ω(A)$, $\forall A \in \mathfrak{U}$, $\boldsymbol n \in \Z^D$ and time invariant if  $ω( τ_t(A) ) = ω(A)$, $\forall A \in \mathfrak{U}$, $t \in \R$. We will refer to a space and time invariant state simply as invariant.
\end{defn}
Space translations $ι_{\boldsymbol n}$ are naturally defined on $\mathfrak{U}_{\rm loc}$ and extended (by continuity) to $\mathfrak{U}$, see \cite[Chapter 3.2.6]{naaijkens_quantum_2017}. Time evolution is defined explicitly from the interaction of the quantum spin lattice, see \cite[Chapter 6.2]{bratteli_operator_1997}. An interaction is defined as a map $Φ: P_f(\Z^D) \to \mathfrak{U}_{\rm loc}$, s.t. $Φ(Λ) \in \mathfrak{U}_Λ$ and $Φ(Λ)=Φ^*(Λ)$, $\forall Λ \in P_f(\Z^D)$.  The Hamiltonian associated with any $Λ \in P_f(\Z^D)$ is
\begin{equation}
    H_Λ \coloneqq \sum_{X \subset Λ} Φ(X)
\end{equation}
which defines the local time evolution as
\begin{equation}
    τ_t^{Λ} (A) \coloneqq e^{itH_Λ}A e^{-itH_Λ},\quad A \in \mathfrak{U}_Λ ,\; t\in \R.
\end{equation}
Time evolution of the infinite system is defined when the limit $\lim_{Λ \to \infty}τ_t^{Λ}(A)$ exists in the norm for all $A \in \mathfrak{U}_{\rm loc}$ and can be uniquely extended to a strongly continuous $^*$-automorphism $τ_t$ of $\mathfrak{U}$. This can be proved for a large class of interactions, including finite range and exponentially decaying ones \cite[Theorem 6.2.11]{bratteli_operator_1997}. In particular, this holds for dynamical systems $(\mathfrak{U},ι,τ)$ with interaction that satisfies for some $λ>0$:
\begin{equation} 
    \norm{Φ}_λ \coloneqq \sup_{\boldsymbol n \in \Z^D} \sum_{X \ni \boldsymbol n} \norm{Φ(X)} |X| N^{2|X|} e^{λ \diam(X)} < \infty .
\end{equation}
which in the case of space-translation invariant interaction simplifies to 
\begin{equation} \label{eq:interaction}
    \norm{Φ}_λ \coloneqq  \sum_{X \ni \boldsymbol 0} \norm{Φ(X)} |X| N^{2|X|} e^{λ \diam(X)} < \infty .
\end{equation}
In what follows we restrict ourselves to such systems. \textbf{Our results hold for all quantum spin lattice models with interactions that satisfy \Cref{eq:interaction}.} This allows $Φ$ to include $m$-body interactions for any $m$, as long as the interaction drops at least exponentially with $m$ and with the distance. \textbf{This includes any finite range  and any two-body exponentially decaying interactions}.

An important tool, which we use extensively throughout this work, for dealing with C$^*$ algebras is the Gelfand–Naimark–Segal (GNS) representation, see \cite[Chapter 2.3]{bratteli_operator_1987} and in particular \cite[Theorem 2.3.16, Corollary 2.3.17]{bratteli_operator_1987}:
\begin{prop}[GNS representation] \label{prop:gns}
Given a state $ω \in E_{\mathfrak{U}}$ of a unital C$^*$ algebra $\mathfrak{U}$ there exists  a (unique, up to unitary equivalence)  triple $(H_ω,π_ω,Ω_ω)$ where $H_ω$ is a Hilbert space with inner product $\langle \cdot, \cdot \rangle$, $π_ω$ is a representation of the C$^*$ algebra by bounded operators acting on $H_ω$ and $Ω_ω$ is a cyclic vector for $π_ω$, i.e.\ the $\operatorname{span} \{ π_ω(A)Ω_ω : A \in \mathfrak{U}\}$ is dense in $H_ω$, such that
\begin{equation}
    ω(A) = \langle Ω_ω , π_ω(A) Ω_ω \rangle ,  \ \ A \in \mathfrak{U}.
\end{equation}
If additionally we have a group $G$ of automorphisms $\{ τ_g \}_{g \in G}$ of $\mathfrak{U}$ and $ω$ is $τ$-invariant, then there exists a representation of $G$ by unitary operators $U_ω(g)$ acting on $H_ω$. This representation is uniquely determined by 
\begin{equation}
    U_ω(g) π_ω(A) U_ω(g)^* = π_ω (τ_g (A)) \ , \ \ A \in \mathfrak{U}, g \in G
\end{equation}
and invariance of the cyclic vector
\begin{equation}
    U_ω(g)Ω_ω = Ω_ω \ , \forall g \in G.
\end{equation}
\end{prop}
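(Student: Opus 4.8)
The final statement is the classical Gelfand--Naimark--Segal construction (here quoted from \cite{bratteli_operator_1987}), so the plan is to carry out the standard steps and then append the equivariant part. First I would equip $\mathfrak{U}$ with the sesquilinear form $\langle A,B\rangle_\omega \coloneqq \omega(A^*B)$; positivity of $\omega$ makes this a positive semi-definite semi-inner product, and the Cauchy--Schwarz inequality it satisfies shows that $N_\omega \coloneqq \{A \in \mathfrak{U} : \omega(A^*A)=0\}$ is a closed left ideal. I would then define $H_\omega$ as the completion of the pre-Hilbert space $\mathfrak{U}/N_\omega$, write $[A]$ for the class of $A$, and set $\Omega_\omega \coloneqq [\mathbf{1}]$.

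Next I would define $\pi_\omega(A)[B] \coloneqq [AB]$, which is well defined precisely because $N_\omega$ is a left ideal. The key estimate is boundedness: since $\|A\|^2\mathbf{1} - A^*A \ge 0$ in $\mathfrak{U}$, it equals $C^*C$ for some $C \in \mathfrak{U}$, and then $\omega(B^*A^*AB) \le \|A\|^2\,\omega(B^*B)$ because $\omega\big((CB)^*(CB)\big) \ge 0$; hence $\|\pi_\omega(A)[B]\| \le \|A\|\,\|[B]\|$ and $\pi_\omega(A)$ extends to a bounded operator on $H_\omega$. That $\pi_\omega$ is a $*$-homomorphism is immediate from $\langle [B],\pi_\omega(A)[C]\rangle_\omega = \omega(B^*AC) = \langle \pi_\omega(A^*)[B],[C]\rangle_\omega$ together with multiplicativity of $A \mapsto (B \mapsto AB)$. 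Cyclicity of $\Omega_\omega$ holds since $\pi_\omega(A)\Omega_\omega = [A]$ and these classes are dense by construction, and the reconstruction identity is $\langle \Omega_\omega, \pi_\omega(A)\Omega_\omega\rangle_\omega = \omega(\mathbf{1}^*A\,\mathbf{1}) = \omega(A)$. Uniqueness up to unitary equivalence follows by the usual argument: given another triple $(H',\pi',\Omega')$, the assignment $\pi_\omega(A)\Omega_\omega \mapsto \pi'(A)\Omega'$ preserves norms (both sides compute $\omega(A^*A)$), hence extends to a unitary intertwiner fixing the cyclic vectors.

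For the equivariant part, given a $\tau$-invariant state and $g \in G$ I would define $U_\omega(g)$ on the dense subspace by $U_\omega(g)\pi_\omega(A)\Omega_\omega \coloneqq \pi_\omega(\tau_g(A))\Omega_\omega$. This is isometric because $\|\pi_\omega(\tau_g A)\Omega_\omega\|^2 = \omega\big(\tau_g(A^*A)\big) = \omega(A^*A)$ by invariance, so it extends to a unitary on $H_\omega$; the relations $\tau_g\tau_h = \tau_{gh}$ and $\tau_g(\mathbf{1}) = \mathbf{1}$ give the group homomorphism property and $U_\omega(g)\Omega_\omega = \Omega_\omega$, while evaluating on $\pi_\omega(B)\Omega_\omega$ yields the covariance relation $U_\omega(g)\pi_\omega(A)U_\omega(g)^* = \pi_\omega(\tau_g(A))$. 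Uniqueness of $U_\omega$ subject to this covariance and the invariance of $\Omega_\omega$ is then forced by cyclicity. The only genuinely delicate point --- the one I would flag as the main obstacle --- is the boundedness of $\pi_\omega(A)$: everything else is bookkeeping modulo the null ideal, but this step is where the order structure of the $C^*$-algebra (existence of the square root realising $\|A\|^2\mathbf{1} - A^*A$ as $C^*C$) is essential, and it is also what guarantees $\pi_\omega$ is norm-decreasing and hence extends continuously from the dense subspace to all of $H_\omega$. (If in addition one wanted strong continuity of $g \mapsto U_\omega(g)$ for $G=\R$, that would follow from strong continuity of $\tau$ via an $\varepsilon/3$ argument on the dense span, but the statement as given only asserts the algebraic representation.)
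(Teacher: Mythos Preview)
Your argument is the standard GNS construction and is correct. Note, however, that the paper does not give its own proof of this proposition: it is quoted as a known result with explicit reference to \cite[Theorem 2.3.16, Corollary 2.3.17]{bratteli_operator_1987}, so there is nothing to compare against beyond observing that your write-up reproduces exactly the classical argument found in that reference.
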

The state of thermal equilibrium is characterised by the Kubo-Martin-Schwinger (KMS) condition, see \cite[Section 5.3.1]{bratteli_operator_1997}. We give the definition:
\begin{defn}[KMS state] \label{defn:KMS}
Consider a dynamical system $(\mathfrak{U},ι,τ)$. A state $ω$ is called a $(τ,β)$-KMS state, at inverse temperature $β$, if 
\begin{equation}
    ω(A τ_{iβ}B) = ω(BA) \label{eq:KMS_state}
\end{equation}
for all $A,B$ in a dense (with respect to the norm topology) $τ$-invariant $^*$-subalgebra of $\mathfrak{U}_τ$, where $\mathfrak{U}_τ$ is the set of entire analytic elements for $τ$.
\end{defn}
It is established that the set of analytic elements $\mathfrak{U}_τ$ is (norm) dense in $\mathfrak{U}$ for any strongly continuous one parameter group of automorphisms $τ$, see \cite[Proposition IV.4.6]{simon_statistical_2014}.  Hence, any dense subset of $\mathfrak{U}_τ$ is again dense in $\mathfrak{U}$.

\section{Overview of the main results}\label{sectoverview}

\subsection{Ergodic Theorems for space-time ray averaging} \label{subsection:ergodicity}

A central result in quantum spin lattice models, with interactions that decay sufficiently fast such as those we consider (\Cref{eq:interaction}), is the Lieb-Robinson bound, which limits the spread of the support of time-evolved observables, so that the ``strength" of the operator is exponentially decaying outside a light-cone defined by the Lieb-Robinson velocity. The bound was first stated in \cite{lieb_finite_1972} for finite range interactions. The Lieb-Robinson bound implies that certain states will be ergodic, in an appropirate sense, within a space-like region. We will call such states space-like ergodic:
\begin{defn}[Space-like ergodic state]
Consider a dynamical system $(\mathfrak{U}, ι, τ)$  as defined in \Cref{defn:dynamicalsystem}. A state $ω \in E_{\mathfrak{U}}$ is called space-like ergodic if it is space and time translation invariant and there exists a $υ_c>0$ such that for any $A,B \in \mathfrak{U}_{\rm loc}$, $\boldsymbol n \in \mathbb{Z}^D$ and $υ \in \hat{ {\mathbb{R}}}= \mathbb{R} \cup \{ -\infty, \infty \}$ with $|υ| > υ_c$ it holds that:
\begin{equation}
  \lim_N  \frac{1}{N} \sum_{m=0}^{N-1} ω \big( ι_{\boldsymbol n}^m τ_{υ^{-1}|\boldsymbol n|}^m (A) B \big) = ω(A) ω(B).
\end{equation}
\label{def:spacelike-ergodic}
\end{defn}

In \Cref{section:ergodicity_abelianness} we show that any factor (also called factorial, see \cite[p. 81, Def. 2.4.8]{bratteli_operator_1987}) invariant state is space-like ergodic with $υ_c=υ_{LR}>0$ the Lieb-Robinson velocity, in dynamical systems with interactions satisfying \eqref{eq:interaction}. This follows from the fact that the algebra $\mathfrak{U}$ is asymptotically abelian (in the sense of \cite[Def. 3]{kastler_invariant_1966}) for space-time translations inside a space-like region , i.e.\ $\norm{ [ι_{\boldsymbol{υ}t} τ_{t} A,B]} \rightarrow 0$ when $t\to \infty$ and $|\boldsymbol{υ}|>υ_{LR}$, by virtue of the Lieb-Robinson bound. For asymptotically abelian algebras it is established that factor states have clustering properties, see \cite[Example 4.3.24]{bratteli_operator_1987}. Hence, factor invariant states are an example of space-like ergodic states. Our goal is to show that the ergodic properties of such states also extend (almost everywhere) inside the light-cone $|\boldsymbol{υ}| < υ_{LR}$.

As any extremal (pure) state of a $C^*$ algebra is also factorial, the results hold for pure invariant states. Physically, this means that the results will hold for systems that are in a single thermodynamic phase. A special case is that of thermal states: at high enough temperatures there is a unique KMS state \cite[Proposition 6.2.45]{bratteli_operator_1997}.  This state is factorial, as an immediate consequence of \cite[ Theorem 5.3.30]{bratteli_operator_1997}.  By the result of existence of space invariant KMS states \cite[pg. 296]{bratteli_operator_1997}, this unique state will also be space invariant.  Note that time invariance is an immediate consequence of the KMS condition. KMS states are not limited to thermal states for the time evolution of the system. Indeed, one may consider time-stationary KMS states that are KMS with respect to a different ``time evolution", generated by a different short-range Hamiltonian, that commutes with the physical time evolution. These are naturally identified with ``generalised Gibbs ensembles" \cite{rigol_relaxation_2007}, in non-equilibrium many-body physics of integrable systems. \textbf{Hence our results will hold for any quantum spin lattice model, with sufficiently fast decaying interactions as expressed in \Cref{eq:interaction}, in a state of thermal equilibrium or a generalised Gibbs ensemble, at high enough (generalised) temperature.} 

At lower temperatures, there may exist multiple  thermodynamic phases described by the set of $(τ,β)$-KMS states, but this set is convex and its extremal points are  factor states \cite[ Theorem 5.3.30]{bratteli_operator_1997}. Hence,\textbf{ at lower temperatures the results still hold in space-invariant extremal KMS states}.

\textbf{The question that is raised, given that space-like ergodicity is satisfied in physically relevant states, is what happens within the light-cone?} There exist results concerning ergodicity but for the group action of space translations \cite[Section IV]{israel_convexity_1979}, can these be shown for time evolution?

We divide our results into two categories: those that do not necessitate the KMS condition, and those that do. The former are therefore more general, but the latter are much stronger.

\subsubsection{Results not requiring the KMS condition}

Our main theorem without using the KMS condition is that ergodicity still holds almost everywhere within the light-cone:

\begin{theorem}[Almost Everywhere Ergodicity]
Consider a dynamical system $(\mathfrak{U}, ι, τ)$, with interaction that satisfies \Cref{eq:interaction}  and a space-like ergodic state $ω\in E_{\mathfrak{U}}$, in the sense of Definition \ref{def:spacelike-ergodic}. It follows that for all $A,B \in \mathfrak{U}$, any rational direction $\boldsymbol{q} \in \Srat^{D-1} \coloneqq   \{\boldsymbol x / |\boldsymbol x|:\boldsymbol x\in \mathbb{Z}^D \setminus \{\boldsymbol0\} \}$ and \textbf{almost every speed} $v \in \mathbb{R}$ (with respect to the Lebesgue measure):
\begin{equation}
    \lim_{T \to \infty} \frac{1}{T} \int_0^T  ω \bigg( ι_{\floor{ \boldsymbol{v}t}}τ_t (A) B \bigg)  \,dt = ω(A) ω(B) \label{eq:maintheorem}
\end{equation}
where $\boldsymbol{v}= v \boldsymbol{q}$, and where for any vector $\boldsymbol{a}=(a_1,a_2,...,a_D)$ we denote $\floor{\boldsymbol{a}} \coloneqq ( \floor{a_1},...,\floor{a_D})$.
\label{th:maintheorem}
\end{theorem}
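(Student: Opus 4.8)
The plan is to pass to the GNS representation $(H_\omega,\pi_\omega,\Omega_\omega)$ of the (space- and time-invariant) state $\omega$ and exploit that, by \Cref{prop:gns}, the commuting automorphism groups $\{\tau_t\}_{t\in\R}$ and $\{\iota_{\boldsymbol n}\}_{\boldsymbol n\in\Z^D}$ are implemented by commuting unitary groups $V(t):=U_\omega(\tau_t)$, $W(\boldsymbol n):=U_\omega(\iota_{\boldsymbol n})$ that fix $\Omega_\omega$; by the spectral theorem for this abelian family there is a joint projection-valued measure $E$ on $\R\times\mathbb T^D$. Using $V(t)^*\Omega_\omega=W(\boldsymbol n)^*\Omega_\omega=\Omega_\omega$, the ray average becomes a single matrix element,
\begin{equation}
  \frac1T\int_0^T \omega\big(\iota_{\floor{\boldsymbol v t}}\tau_t(A)B\big)\,dt
  = \big\langle \pi_\omega(A^*)\Omega_\omega,\, M_T(v)\,\pi_\omega(B)\Omega_\omega\big\rangle,\qquad
  M_T(v):=\frac1T\int_0^T V(-t)\,W(-\floor{\boldsymbol v t})\,dt .
\end{equation}
The uniform bound $|\omega(\iota_{\boldsymbol m}\tau_t(A)B)-\omega(\iota_{\boldsymbol m}\tau_t(A')B')|\le\norm{A-A'}\norm B+\norm{A'}\norm{B-B'}$ lets us work with $A,B\in\mathfrak U_{\rm loc}$ where convenient; in fact the exceptional set of speeds will turn out not to depend on $A,B$.

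The first step is to compute the spectral symbol of $M_T(v)$. Write $\boldsymbol q=\boldsymbol p/|\boldsymbol p|$ with $\boldsymbol p\in\Z^D\setminus\{\boldsymbol0\}$ and set $\tau_0:=|\boldsymbol p|/|v|$ for $v\neq0$. The elementary but essential observation is that $\boldsymbol v\tau_0\in\Z^D$, so $t\mapsto\floor{\boldsymbol v t}$ is periodic up to an integer shift, $\floor{\boldsymbol v(k\tau_0+s)}=\operatorname{sign}(v)\,k\,\boldsymbol p+\floor{\boldsymbol v s}$. Decomposing $[0,T]$ into full periods plus a bounded remainder and summing the resulting geometric series, one gets, for every $(e,\boldsymbol\theta)$ and every $v\neq0$,
\begin{equation}
  \frac1T\int_0^T e^{-iet}\,e^{-i\boldsymbol\theta\cdot\floor{\boldsymbol v t}}\,dt
  \ \xrightarrow[\,T\to\infty\,]{}\ \frac{c(e,\boldsymbol\theta,v)}{\tau_0}\,\mathds{1}_{S_v}(e,\boldsymbol\theta),
\end{equation}
where $S_v:=\{(e,\boldsymbol\theta):e+v\,\boldsymbol q\cdot\boldsymbol\theta\in\tfrac{2\pi|v|}{|\boldsymbol p|}\Z\}$ and $c(e,\boldsymbol\theta,v):=\int_0^{\tau_0}e^{-ies-i\boldsymbol\theta\cdot\floor{\boldsymbol v s}}\,ds$, so that $|c/\tau_0|\le1$ and $c(0,\boldsymbol0,v)=\tau_0$. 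Since the left-hand side is bounded by $1$, dominated convergence against the finite complex measure $\langle\pi_\omega(A^*)\Omega_\omega,E(\cdot)\pi_\omega(B)\Omega_\omega\rangle$ gives $\lim_T\langle\pi_\omega(A^*)\Omega_\omega,M_T(v)\pi_\omega(B)\Omega_\omega\rangle=\int (c/\tau_0)\,\mathds{1}_{S_v}\,d\langle\pi_\omega(A^*)\Omega_\omega,E(\cdot)\pi_\omega(B)\Omega_\omega\rangle$.

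The second step controls $E(S_v)$ via space-like ergodicity. By von Neumann's mean ergodic theorem the discrete average in \Cref{def:spacelike-ergodic} taken along the lattice direction $\boldsymbol n=\boldsymbol p$ equals a matrix element of the spectral projection $E(\{e|\boldsymbol p|/v+\boldsymbol\theta\cdot\boldsymbol p\in2\pi\Z\})=E(S_v)$; hence space-like ergodicity for $|v|>\upsilon_c$, together with cyclicity of $\Omega_\omega$ (so that $\{\pi_\omega(A^*)\Omega_\omega\}$ and $\{\pi_\omega(B)\Omega_\omega\}$ are dense), forces $E(S_v)=|\Omega_\omega\rangle\langle\Omega_\omega|$ for all $|v|>\upsilon_c$; in particular $E(\{(0,\boldsymbol0)\})=|\Omega_\omega\rangle\langle\Omega_\omega|$. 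It therefore suffices to prove $E\big(S_v\setminus\{(0,\boldsymbol0)\}\big)=0$ for Lebesgue-a.e.\ $v\in\R$: for such $v$ only the atom at $(0,\boldsymbol0)$ contributes to the spectral integral above, and $c(0,\boldsymbol0,v)/\tau_0=1$, so $\lim_T\frac1T\int_0^T\omega(\iota_{\floor{\boldsymbol v t}}\tau_t(A)B)\,dt=\langle\pi_\omega(A^*)\Omega_\omega,\Omega_\omega\rangle\langle\Omega_\omega,\pi_\omega(B)\Omega_\omega\rangle=\omega(A)\omega(B)$ for all $A,B\in\mathfrak U$. The case $|v|>\upsilon_c$ is already done, so only $|v|\le\upsilon_c$ remains, and I would treat it by Fubini: with $\tilde E:=E(\,\cdot\setminus\{(0,\boldsymbol0)\})$ and $\psi$ ranging over a countable dense set,
\begin{equation}
  \int_{-\upsilon_c}^{\upsilon_c}\langle\psi,\tilde E(S_v)\psi\rangle\,dv
  = \int \big|\{v\in[-\upsilon_c,\upsilon_c]:(e,\boldsymbol\theta)\in S_v\}\big|\ d\langle\psi,\tilde E(\cdot)\psi\rangle .
\end{equation}
For $e\neq0$ the relation $e+v\boldsymbol q\cdot\boldsymbol\theta\in\frac{2\pi|v|}{|\boldsymbol p|}\Z$ has at most one solution $v$ per integer, so the $v$-set inside is countable; for $e=0$ it is a single point unless $\boldsymbol q\cdot\boldsymbol\theta\in\frac{2\pi}{|\boldsymbol p|}\Z$, i.e.\ unless $(e,\boldsymbol\theta)\in B_0:=\{(0,\boldsymbol\theta)\neq(0,\boldsymbol0):\boldsymbol q\cdot\boldsymbol\theta\in\frac{2\pi}{|\boldsymbol p|}\Z\}$, in which case it is all of $\R\setminus\{0\}$. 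But $B_0\subset S_v$ for \emph{every} $v$, so $\tilde E(B_0)\le\tilde E(S_v)=0$ for any $|v|>\upsilon_c$; hence $\langle\psi,\tilde E(\cdot)\psi\rangle(B_0)=0$ and the integrand above vanishes $\langle\psi,\tilde E(\cdot)\psi\rangle$-almost everywhere, so $\langle\psi,\tilde E(S_v)\psi\rangle=0$ for a.e.\ $v$. Intersecting over the countable dense set yields $\tilde E(S_v)=0$, i.e.\ $E(S_v)=|\Omega_\omega\rangle\langle\Omega_\omega|$, for a.e.\ $v\in\R$, which finishes the argument.

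I expect the difficulty to be twofold. The technical nuisance is the floor-function bookkeeping behind the symbol $c/\tau_0$ and the resonance set $S_v$: because $\floor{\cdot}$ ``folds'' the momentum, $S_v$ is a union of parallel hyperplanes rather than the naive single hyperplane $\{e+v\boldsymbol q\cdot\boldsymbol\theta=0\}$, so the periodicity identity and the value $c(0,\boldsymbol0,v)=\tau_0$ must be checked carefully. The substantive point is the measure-theoretic dichotomy in the last step: pinning down exactly which spectral points $(e,\boldsymbol\theta)$ can be resonant for a positive-measure set of speeds — only those in $B_0\cup\{(0,\boldsymbol0)\}$ — and using space-like ergodicity applied to the single direction $\boldsymbol p$ to show that all spectral mass sitting there is already carried by $\Omega_\omega$, so that the Fubini estimate closes. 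It should also be noted that no ``all $v$'' statement is available: ballistic/hydrodynamic modes can put spectral weight on $\{e+v_*\boldsymbol q\cdot\boldsymbol\theta=0\}$ for isolated speeds $v_*$, which is precisely why the conclusion holds only almost everywhere.
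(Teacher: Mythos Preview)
Your argument is correct and reaches the same conclusion, but the route is genuinely different from the paper's. The paper never invokes the joint spectral theorem; instead it works directly with the single unitary $U_\omega(\boldsymbol n,\upsilon^{-1}|\boldsymbol n|)$ and its von Neumann projection $P_{\upsilon,\boldsymbol n}$, and the heart of the matter is an \emph{orthogonality lemma} (\Cref{lem:orthogonal}): for $\upsilon\neq w$, any $\Psi\in P_{\upsilon,\boldsymbol n}H_\omega\cap\Omega_\omega^\perp$ and $\Psi'\in P_{w,\boldsymbol n}H_\omega\cap\Omega_\omega^\perp$ are orthogonal. This is proved by a rational-approximation trick that ``rotates'' the pair $(\upsilon,w)$ into the space-like region, where \Cref{lem:1} applies. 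Separability of $H_\omega$ then forces the set of speeds with $\operatorname{rank}P_{\upsilon,\boldsymbol n}>1$ to be countable. Your approach replaces this orthogonality/separability mechanism with a Fubini computation against the joint projection-valued measure $E$: you identify the resonance set $S_v$ explicitly and count, for fixed $(e,\boldsymbol\theta)$, the speeds $v$ for which $(e,\boldsymbol\theta)\in S_v$. The paper's method is more elementary (only the mean ergodic theorem, no SNAG), and the orthogonality of the $P_{\upsilon,\boldsymbol n}$ on $\Omega_\omega^\perp$ is a structural fact of independent interest. Your method is more transparent about \emph{why} the exceptional set is null --- the spectral picture makes visible exactly which $(e,\boldsymbol\theta)$ can be resonant for a positive-measure set of speeds --- and it adapts more directly to the oscillatory versions (\Cref{th:frequencyproj}), where the paper has to rerun the orthogonality argument. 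One small point to tighten: the applicability of the joint spectral theorem to the mixed group $\R\times\Z^D$ (SNAG for locally compact abelian groups, using strong continuity of $\tau$) and the measurability of $(v,e,\boldsymbol\theta)\mapsto\mathds 1_{S_v}(e,\boldsymbol\theta)$ needed for Fubini should be stated explicitly, but both are routine.
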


The proof is done in \Cref{section:ergodicityproof}. In the end we show that \textbf{in the case of KMS states this result can be extended to any number of observables $B_j$ and ray averaged of observables $A_i$, see \Cref{th:general_theorem}}.

It is simple to see that \Cref{eq:maintheorem} will hold for any $v>v_c$, by the assumption of space-like ergodicity. There are general results for going from discrete to continuous time ergodic theorems, see \cite{bergelson_form_2011}. However, for $υ<υ_c$ this is non-trivial and we are not aware of any previous results that deal with this situation, besides \cite{doyon2022hydro} in the case of one-dimensional, finite-range interactions.

A few remarks about the physics of the above theorem, and of our other main results below, are in order. The equality of time-averaged quantities with their ensemble averages is a notion of ergodicity. However, conventionally ergodicity does not involve space translation: the ray is at velocity 0. The above theorem, and all our main results, are notions of ergodicity along almost every ray, but not necessarily at zero velocity (see Ref.~\cite{ampelogiannis2021hydro} where our present ergodic theorems are put within the context of conventional discussions on ergodicity and of hydrodynamics, and some of the techniques developed here are applied to conventional ergodic statements). What is the physical meaning or motivation for ergodicity at other velocities?

First, our ergodic theorems can be seen in a somewhat natural fashion as a ``thinning'' of the time evolved observable $τ_t (A)$, as it spatially spreads, over time, within the Lieb-Robinson light-cone. Indeed, looking within the light-cone at different velocities, $ ι_{\floor{ \boldsymbol{υ}t}}τ_t (A)$, we see that the time-average does not correlate with, hence is ``not seen" by, any other observable within the state $ω$, for almost every velocity; it may only be seen on a set of velocities of measure zero. This thinning property is made stronger by the operatorial version of the result below, \Cref{th:KMS_average_strong_convergence}. This sheds light on the problem of operator spreading discussed in the literature (see e.g.~\cite{nahum2018operator,keyser2018operator,khemani2018operator,rakov2018diffusive}): we see that under time averaging, much of the structure of operator growth is washed out.

Second, the notion that time averages reproduce ensemble averages is based on the physical idea that a measure of some observable, somewhere in space, in fact will naturally be time averaged, because of the finite measurement time. From this viewpoint, one may argue that time averaging does not have to be in pure-time direction: for instance, a gas may be in a Galilean-boosted thermal state (there may be a nonzero overall velocity), and time averaging is equivalent to ray averaging of the non-boosted gas. Then, the question of ergodicity at almost every velocity is the relevant one. Our results are for lattice gases; it will be interesting to study systems on continuous space, and almost-every-boost ergodicity results.

Third, one of the most important applications of Theorem \ref{th:maintheorem} is to hydrodynamic projections \cite{doyon2022hydro,ampelogiannis2021hydro}. Because hydrodynamics only makes predictions for fluid-cell averaged observables, it turns out that it is sufficient to have ergodicity at almost every velocity in order to derive general hydrodynamic projection results (see \cite{ampelogiannis2021hydro} for the details).

Finally, one may ask about the meaning of the measure-zero sets $\mathcal{K}_{\boldsymbol q}$ (one for every direction $\boldsymbol{q}\in\Srat^{D-1})$ of velocities $v$ where \eqref{eq:maintheorem} fails. For generic interacting models, we expect these sets to be empty -- all local observables decorrelate in time, either exponentially or, at the speeds of hydrodynamic modes, algebraically. But examples of models where a $\mathcal{K}_{\boldsymbol q}$ is not empty are the trivial lattice models without inter-site interaction (only site self-energy): one can easily construct such models where some single-site observable $A$ is time invariant. In fact, these are particular examples of {\em constrained models}, and we believe more general, interacting constrained models, actively studied recently (see e.g.~\cite{PhysRevX.10.021051}), may have non-trivial $\mathcal{K}_{\boldsymbol q}$. It would be interesting to investigate this further.

\begin{remark}
It is worth noting that the restriction to the rational directions, which has  measure $0$ in $\mathbb{R}^{D-1}$, is rather natural: as we are on a discrete space, the relevant directions are the ones in $\Srat^{D-1}$, not $\mathbb{S}^{D-1}$ (the unit sphere in $\mathbb{R}^D$). Hence, ``almost everywhere'' indeed means for almost every space-time ray of the lattice: space is discrete. Further, for applications to hydrodynamic projection theorems \cite{ampelogiannis2021hydro}, it turns out that considering rational directions is sufficient, as these are dense in $\mathbb S^{D-1}$. One may nevertheless ask what happens along irrational directions, which we leave for future work.
\end{remark}

Ergodic theorems usually involve the convergence of series (or integrals) of operators into projections. In our case, we will express  \Cref{eq:maintheorem} in the GNS representation (\Cref{prop:gns}) and use von Neumann's mean ergodic Theorem \ref{th:neumann} ( see also \cite[Theorem II.11]{reed_i_1981}) to get a projection.  This projection will be shown  to have rank one almost everywhere, by use of space-like erogidicity, the group properties of time and space translations and the separability of the GNS Hilbert space of a quantum spin lattice:

\begin{theorem}[Projection]
Consider a dynamical system $(\mathfrak{U}, ι, τ)$ with interaction that satisfies \Cref{eq:interaction} and a space-like ergodic state $ω\in E_{\mathfrak{U}}$, in the sense of Definition \ref{def:spacelike-ergodic}. Let $(H_ω,π_ω,Ω_ω)$ be the GNS representation associated to $ω$ and consider the unitary representation $U_ω(\boldsymbol n,t)$ of the group $G=\mathbb{Z}^D \times \mathbb{R}$  acting on $H_ω$ (\Cref{prop:gns}). It follows for any $\boldsymbol n \in \mathbb{Z}^D$, that for almost all $υ \in \mathbb{R}$ the orthogonal projection $P_{υ,\boldsymbol n}$ on the subspace of  $H_ω$ formed by vectors invariant under $U_ω(\boldsymbol n,υ^{-1}|\boldsymbol n|)$ is the rank one projection on $Ω_ω$ .
\label{th:rankone}
\end{theorem}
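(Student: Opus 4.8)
The plan is to move to the GNS representation and simultaneously diagonalise the spatial translation $V := U_\omega(\boldsymbol n,0)$ together with the self-adjoint generator $L$ of the time-translation group $T_s := U_\omega(\boldsymbol 0,s) = e^{isL}$ (Stone's theorem), thereby reducing the statement to a measure-theoretic property of their joint spectral measure. We may assume $\boldsymbol n \neq \boldsymbol 0$ (for $\boldsymbol n = \boldsymbol 0$ one has $U_\omega(\boldsymbol 0,0) = \id$ and the statement is degenerate). Since $G = \Z^D \times \R$ is abelian, $V$ and $L$ commute, so by the spectral theorem they admit a joint projection-valued measure $E(\cdot)$ on $\mathbb{T} \times \R$, where $\mathbb{T} = \R/2\pi\Z$ carries the spectrum of $V$ (coordinate $\theta$) and $\R$ that of $L$ (coordinate $\epsilon$). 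The group law gives $U_\omega(\boldsymbol n,s) = V\,T_s = \int e^{i(\theta+s\epsilon)}\,dE(\theta,\epsilon)$, hence
\[
  P_{\upsilon,\boldsymbol n} = E(Z_\upsilon), \qquad Z_\upsilon := \{(\theta,\epsilon) \in \mathbb{T}\times\R : \theta + \upsilon^{-1}|\boldsymbol n|\,\epsilon \in 2\pi\Z\}.
\]
As $\omega$ is invariant, $\Omega_\omega$ is fixed by $V$ and by every $T_s$, so $\Omega_\omega \in \operatorname{ran} E(\{(0,0)\})$; and $(0,0) \in Z_\upsilon$ for all $\upsilon$, whence $\Omega_\omega \in \operatorname{ran} P_{\upsilon,\boldsymbol n}$ always. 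It therefore suffices to prove (A) $\operatorname{ran} E(\{(0,0)\}) = \C\,\Omega_\omega$ and (B) $E\big(Z_\upsilon \setminus \{(0,0)\}\big) = 0$ for Lebesgue-almost every $\upsilon$.

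For (A), fix a finite $\upsilon_0$ with $|\upsilon_0| > \upsilon_c$, put $s_0 := \upsilon_0^{-1}|\boldsymbol n| \neq 0$ and $W := U_\omega(\boldsymbol n,s_0)$. The covariance relations of \Cref{prop:gns} give $\pi_\omega\big(\iota_{\boldsymbol n}^m\tau_{s_0}^m(A)\big) = W^m\pi_\omega(A)W^{-m}$, and $W\Omega_\omega = \Omega_\omega$, so
\[
  \frac1N\sum_{m=0}^{N-1}\omega\big(\iota_{\boldsymbol n}^m\tau_{s_0}^m(A)B\big) = \Big\langle \pi_\omega(A^*)\Omega_\omega,\ \Big(\tfrac1N\sum_{m=0}^{N-1}W^{-m}\Big)\pi_\omega(B)\Omega_\omega \Big\rangle .
\]
By von Neumann's mean ergodic theorem \cite[Theorem II.11]{reed_i_1981} the averaged operator converges strongly to the projection $Q$ onto the $W$-invariant subspace, while space-like ergodicity (\Cref{def:spacelike-ergodic}) forces the left side to $\omega(A)\omega(B) = \big\langle \pi_\omega(A^*)\Omega_\omega,\ \langle \Omega_\omega, \pi_\omega(B)\Omega_\omega\rangle\,\Omega_\omega \big\rangle$. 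Since $\{\pi_\omega(A^*)\Omega_\omega : A \in \mathfrak{U}_{\rm loc}\}$ and $\{\pi_\omega(B)\Omega_\omega : B \in \mathfrak{U}_{\rm loc}\}$ are dense in $H_\omega$ (cyclicity of $\Omega_\omega$ and norm-density of $\mathfrak{U}_{\rm loc}$), this identifies $Q$ with the rank-one projection onto $\C\,\Omega_\omega$, i.e.\ the $W$-invariant subspace equals $\C\,\Omega_\omega$. Finally, any $\psi \in \operatorname{ran} E(\{(0,0)\})$ satisfies $V\psi = \psi$ and $L\psi = 0$, hence $W\psi = V T_{s_0}\psi = \psi$, so $\psi \in \C\,\Omega_\omega$; combined with $\Omega_\omega \in \operatorname{ran} E(\{(0,0)\})$ this is (A).

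For (B), invoke separability of $H_\omega$: choose a countable dense set $\{\psi_j\}_j \subset H_\omega$ and the finite Borel measures $\mu_j(\cdot) := \langle \psi_j, E(\cdot)\psi_j\rangle$ on $\mathbb{T}\times\R$, for which $E(S) = 0$ iff $\mu_j(S) = 0$ for all $j$. Put $R := \{(\upsilon,\theta,\epsilon) \in (\R\setminus\{0\})\times(\mathbb{T}\times\R) : \theta + \upsilon^{-1}|\boldsymbol n|\,\epsilon \in 2\pi\Z,\ (\theta,\epsilon)\neq(0,0)\}$; its $\upsilon$-slice is $Z_\upsilon \setminus \{(0,0)\}$, while its slice at fixed $(\theta,\epsilon)\neq(0,0)$ is empty when $\epsilon = 0$ (then $\theta \notin 2\pi\Z$) and equals the countable set $\{|\boldsymbol n|\,\epsilon/(2\pi k - \theta) : k \in \Z\}$ when $\epsilon \neq 0$ — in either case Lebesgue-null. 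Hence, by Tonelli,
\[
  \int_\R \mu_j\big(Z_\upsilon \setminus \{(0,0)\}\big)\,d\upsilon = \int_{\mathbb{T}\times\R} \operatorname{Leb}\big(\{\upsilon : (\upsilon,\theta,\epsilon) \in R\}\big)\,d\mu_j(\theta,\epsilon) = 0,
\]
so $\mu_j\big(Z_\upsilon \setminus \{(0,0)\}\big) = 0$ for a.e.\ $\upsilon$; the union over $j$ of the countably many exceptional sets is Lebesgue-null, giving $E\big(Z_\upsilon \setminus \{(0,0)\}\big) = 0$ for a.e.\ $\upsilon$, which is (B). Combining (A) and (B), for a.e.\ $\upsilon$ we get $\operatorname{ran} P_{\upsilon,\boldsymbol n} = \operatorname{ran} E(Z_\upsilon) = \operatorname{ran} E(\{(0,0)\}) = \C\,\Omega_\omega$, and $\|\Omega_\omega\| = 1$, so $P_{\upsilon,\boldsymbol n}$ is the rank-one projection on $\Omega_\omega$.

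I expect (A) to be the main obstacle: it is where the Ces\`aro-average hypothesis of space-like ergodicity must be converted, via von Neumann's theorem and the density of $\pi_\omega(\mathfrak{U}_{\rm loc})\Omega_\omega$, into the assertion that the \emph{single} unitary $W = U_\omega(\boldsymbol n,s_0)$ has one-dimensional fixed subspace, and then — crucially via the abelian group identity $U_\omega(\boldsymbol n,s_0) = U_\omega(\boldsymbol n,0)\,U_\omega(\boldsymbol 0,s_0)$ — into the identification of the joint-invariant subspace of the commuting pair $(V,L)$ with $\C\,\Omega_\omega$, which is what lets a single ``good'' speed $\upsilon_0$ outside the light cone control the spectral atom at the origin. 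Step (B) is then a routine Tonelli argument, its only genuinely needed ingredient being separability of the GNS space, which supplies a scalar measure equivalent to $E$.
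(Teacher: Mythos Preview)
Your proof is correct and takes a genuinely different route from the paper. The paper proceeds via two preparatory lemmata: first showing (from space-like ergodicity and von Neumann's theorem) that $P_{\upsilon,\boldsymbol n}^{(r)}$ is rank one for \emph{every} $|\upsilon|>\upsilon_c$ and every power $r\geq 1$; then, by a clever group-theoretic manipulation, that for $\upsilon\neq w$ any non-trivial vectors in $P_{\upsilon,\boldsymbol n}H_\omega\ominus\C\Omega_\omega$ and $P_{w,\boldsymbol n}H_\omega\ominus\C\Omega_\omega$ are mutually orthogonal (this is where the ``all $r$'' version is used, since the shifting trick lands one on a power $U_\omega^q$ at a space-like speed). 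Separability then forces the set of bad $\upsilon$ to be \emph{countable}.

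Your argument instead diagonalises $V=U_\omega(\boldsymbol n,0)$ and the Stone generator $L$ simultaneously, writes $P_{\upsilon,\boldsymbol n}=E(Z_\upsilon)$, and splits the problem into identifying the spectral atom at $(0,0)$ (your (A), which needs space-like ergodicity only at a \emph{single} speed $\upsilon_0$ and only for $r=1$) and a Tonelli argument over $\upsilon$ (your (B)). This is cleaner and uses strictly less of the hypothesis: you never need the full Lemma~5.1 for all powers, nor the orthogonality Lemma~5.2. The price is a slightly weaker conclusion --- your bad set is merely Lebesgue-null rather than countable --- but that is exactly what the theorem asserts. Both proofs rely on separability of $H_\omega$ in an essential way, though for different purposes: the paper to bound the cardinality of an orthonormal family, you to manufacture a scalar measure equivalent to $E$. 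Your spectral approach also makes the later oscillatory extension (Theorem~\ref{th:frequencyproj}) essentially immediate: one simply replaces $Z_\upsilon$ by its shifted analogue and repeats the Tonelli step, whereas the paper has to rerun the orthogonality lemma.
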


In fact there is an equivalence between Eq. \ref{eq:maintheorem} and $P_{υ,\boldsymbol n}$ being rank one, as will be made clear by the proof, see also \cite[Section 2]{fidaleo2014Nonconventional}. Relevant results include \cite[Theorems 4.3.17, 4.3.20, 4.3.22]{bratteli_operator_1997} and \cite[Proposition 6.3.5]{ruelle_statistical_1974}, which, however,  assume some form of asymptotic abelianness or clustering for the whole group, while we have it only in a space-like region.

The proofs of the Ergodicity and Projection Theorems are done in \Cref{section:ergodicityproof}. A related result, proven in \Cref{section:meansquaredproof},  is the vanishing of the variance, with respect to the state, of ray averaged observables at the long time limit $T \to \infty$:
\begin{theorem}[Mean-square ergodicity] \label{th:meansquared}
 Consider a dynamical system $(\mathfrak{U}, ι, τ)$  with interaction that satisfies \Cref{eq:interaction}   and a space-like ergodic state $ω\in E_{\mathfrak{U}}$, in the sense of Definition \ref{def:spacelike-ergodic}. It follows that for all $A,B \in \mathfrak{U}$, any rational direction $\boldsymbol{q} \in \Srat^{D-1}$ and almost every speed $v \in \mathbb{R}:$
\begin{equation}
    \lim_{T \to \infty} \frac{1}{T^2}\int_0^{T} \int_0^{T} \ ω\big( ι_{\floor{ \boldsymbol{v} t_1}}τ_{t_1} (A)  \   ι_{\floor{ \boldsymbol{v} t_2}}τ_{t_2} (B)   \big) \,dt_1 \, dt_2= ω(A) ω(B)
\end{equation}
with $\boldsymbol{v}= v \boldsymbol{q}$. In fact, this holds as a double limit $\lim_{T,T^{\prime} \to \infty} \frac{1}{T^{\prime} T}\int_0^{T}\int_0^{T^{\prime}}  (\cdots) \, dt  \, dt^{\prime}$.
\end{theorem}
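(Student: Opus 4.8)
The plan is to transport everything to the GNS representation $(H_\omega,\pi_\omega,\Omega_\omega)$ of Proposition~\ref{prop:gns} and reduce the assertion to a norm-convergence statement for ray-averaged vectors --- the same mechanism that underlies Theorems~\ref{th:maintheorem} and~\ref{th:rankone}. Writing $\xi_C:=\pi_\omega(C)\Omega_\omega$, and using that $\iota,\tau$ are $^*$-automorphisms, that $U_\omega(\boldsymbol n,t)\Omega_\omega=\Omega_\omega$, and the identity $\pi_\omega(\iota_{\boldsymbol n}\tau_t(C))\Omega_\omega=U_\omega(\boldsymbol n,t)\xi_C$, one gets
\begin{equation}
\omega\big(\iota_{\floor{\boldsymbol v t_1}}\tau_{t_1}(A)\,\iota_{\floor{\boldsymbol v t_2}}\tau_{t_2}(B)\big)=\big\langle\, U_\omega(\floor{\boldsymbol v t_1},t_1)\,\xi_{A^*}\,,\ U_\omega(\floor{\boldsymbol v t_2},t_2)\,\xi_{B}\,\big\rangle .
\end{equation}
Since the inner product commutes with the Bochner integral in each slot, the double average equals $\langle\Phi_{\xi_{A^*}}(T),\Phi_{\xi_B}(T')\rangle$ with $\Phi_\xi(S):=\tfrac1S\int_0^S U_\omega(\floor{\boldsymbol v t},t)\,\xi\,dt$. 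So the theorem reduces to the claim: for each rational direction $\boldsymbol q$ there is a Lebesgue-null set $\mathcal N_{\boldsymbol q}\subset\R$ such that, for $v\notin\mathcal N_{\boldsymbol q}$ and \emph{every} $\xi\in H_\omega$, $\Phi_\xi(S)\to\langle\Omega_\omega,\xi\rangle\,\Omega_\omega$ in norm as $S\to\infty$. It is essential that $\mathcal N_{\boldsymbol q}$ not depend on $\xi$, hence not on $A,B$.

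To prove the claim I would discretise time along the ray. Choose $\boldsymbol n\in\Z^D\setminus\{\boldsymbol0\}$ parallel to $\boldsymbol v$ with $\boldsymbol n\cdot\boldsymbol v>0$ (the case $v=0$ is a single point, discarded), put $s_0:=|\boldsymbol n|/|v|>0$, so $\boldsymbol v s_0=\boldsymbol n$. Writing $t=ms_0+r$ with $r\in[0,s_0)$, the integer part splits as $\floor{\boldsymbol v(ms_0+r)}=m\boldsymbol n+\floor{\boldsymbol v r}$, and the group law of $U_\omega$ on $\Z^D\times\R$ gives, for $S=Ms_0$,
\begin{equation}
\Phi_\xi(Ms_0)=\frac1M\sum_{m=0}^{M-1}U_\omega(\boldsymbol n,s_0)^m\,\eta,\qquad \eta:=\frac1{s_0}\int_0^{s_0}U_\omega(\floor{\boldsymbol v r},r)\,\xi\,dr .
\end{equation}
By von Neumann's mean ergodic theorem \cite[Theorem~II.11]{reed_i_1981} this converges in norm to $P_{v,\boldsymbol n}\eta$, the orthogonal projection of $\eta$ onto the $U_\omega(\boldsymbol n,v^{-1}|\boldsymbol n|)$-invariant subspace; by Theorem~\ref{th:rankone} there is a null set $\mathcal N_{\boldsymbol q}$ off which $P_{v,\boldsymbol n}$ is the rank-one projection onto $\Omega_\omega$, so the limit is $\langle\Omega_\omega,\eta\rangle\,\Omega_\omega=\langle\Omega_\omega,\xi\rangle\,\Omega_\omega$ (using $U_\omega^{*}\Omega_\omega=\Omega_\omega$ when evaluating $\langle\Omega_\omega,\eta\rangle$). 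Passing from $S=Ms_0$ to arbitrary $S$ perturbs the average by at most $2s_0\|\xi\|/S\to0$, which completes the claim.

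The theorem then follows: applying the claim to $\xi=\xi_{A^*}$ and $\xi=\xi_{B}$ at once --- legitimate precisely because $\mathcal N_{\boldsymbol q}$ is vector-independent --- and using joint continuity of the inner product, one obtains for $v\notin\mathcal N_{\boldsymbol q}$ that $\langle\Phi_{\xi_{A^*}}(T),\Phi_{\xi_B}(T')\rangle\to\langle\langle\Omega_\omega,\xi_{A^*}\rangle\Omega_\omega,\langle\Omega_\omega,\xi_B\rangle\Omega_\omega\rangle=\overline{\omega(A^{*})}\,\omega(B)=\omega(A)\,\omega(B)$ as $T,T'\to\infty$ independently, which is the stated double limit; specialising $T'=T$ gives the single-average form. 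The only genuinely substantial input is Theorem~\ref{th:rankone} (the almost-everywhere rank-one property), which we are assuming; the remaining work is bookkeeping with the floor function and the non-integer cell length $s_0$, together with the point --- crucial here, unlike in Theorem~\ref{th:maintheorem} where one slot is held fixed --- that the ray average converges in the norm (not merely weak) topology along a set of speeds chosen uniformly in the test vector, which is what licenses the two independent limits in $T$ and $T'$.
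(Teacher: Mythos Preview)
Your argument is correct and rests on the same two pillars as the paper's own proof: the discretisation of the ray average into a Ces\`aro sum of powers of $U_\omega(\boldsymbol n,v^{-1}|\boldsymbol n|)$, von Neumann's mean ergodic theorem, and Theorem~\ref{th:rankone}. The organisation differs, however. The paper routes the double limit through the Moore--Osgood theorem: it computes the iterated limit by invoking Theorem~\ref{th:maintheorem}, and separately establishes that the inner limit on $T$ is uniform in $T'$ by bounding $|I(N,T')-I(\infty,T')|$ via the norm convergence in von Neumann's theorem. You instead isolate the stronger intermediate statement that $\Phi_\xi(S)\to\langle\Omega_\omega,\xi\rangle\Omega_\omega$ in norm for \emph{every} $\xi$, with the exceptional set of speeds depending only on $\boldsymbol q$; the double limit then drops out of the joint continuity of the inner product without any appeal to Moore--Osgood or to Theorem~\ref{th:maintheorem}. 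Your packaging is more direct and makes the vector-independence of the null set explicit from the outset, while the paper's version has the advantage of fitting into the Moore--Osgood framework used elsewhere (e.g.\ in the proof of Theorem~\ref{th:meanN}). Both are valid; yours is arguably the cleaner derivation of this particular theorem.
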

We note that the usual expression of mean-square ergodicity is for the choice $A=B$. Further, we can extend this to any $n$-th moments with respect to the state:
\begin{theorem} \label{th:meanN}
Under the assumptions of \Cref{th:meansquared} we have that for any $n \in \N$,
\begin{equation}
    \lim_{T \to \infty} \frac{1}{T^n} ω \bigg( \big(\int_0^T  ι_{\floor{ \boldsymbol{v} t}}τ_{t} A  \,dt  \big)^n \bigg)= \big( ω(A) \big)^n.
\end{equation}
Again, this also holds also as a multiple limit on $T_1,T_2,\ldots, T_n$, with different observables $A_1,A_2,\ldots,A_n\in\mathfrak U$:
\begin{equation}
    \lim_{T_1,\ldots,T_n\to\infty}
    \omega\big(\prod_{j=1}^n 
    \frac1{T_j} \int_0^{T_j} dt_j \,ι_{\floor{ \boldsymbol{v} t_j}}τ_{t_j} A_j\big)= 
    \prod_{j=1}^n \omega(A_j).
\end{equation}
The integral $\int_0^T  ι_{\floor{ \boldsymbol{v} t}}τ_{t} A  \,dt  $ is well defined as a Bochner integral, see \cite[Section 3.7]{hille_functional_1996} and \Cref{appendix1}.
\end{theorem}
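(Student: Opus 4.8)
The plan is to pass to the GNS representation $(H_\omega,\pi_\omega,\Omega_\omega)$ of $\omega$ and reduce everything to the already-established mean-square statement, \Cref{th:meansquared}. Write $\pi=\pi_\omega$, $\Omega=\Omega_\omega$, fix a rational direction $\boldsymbol q$, and for $C\in\mathfrak U$, $T>0$ set
\[
C_T \coloneqq \frac1T\,\pi\Big(\int_0^T \iota_{\floor{\boldsymbol v t}}\tau_t C\,dt\Big)=\frac1T\int_0^T \pi\big(\iota_{\floor{\boldsymbol v t}}\tau_t C\big)\,dt\ \in B(H_\omega),
\]
the Bochner integrals being justified as in \Cref{appendix1} (the integrand is bounded, with $\|\iota_{\floor{\boldsymbol v t}}\tau_t C\|=\|C\|$, and piecewise norm-continuous since $\tau$ is strongly continuous and $t\mapsto\floor{\boldsymbol v t}$ is a step function). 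Since $\pi$ is a bounded linear map it commutes with the integral; since $\iota,\tau$ are $^*$-automorphisms one gets $(C_T)^*=(C^*)_T$ and $\|C_T\|\le\|C\|$; and since $\omega$ is space- and time-invariant, $\langle\Omega,C_T\Omega\rangle=\frac1T\int_0^T\omega(\iota_{\floor{\boldsymbol v t}}\tau_t C)\,dt=\omega(C)$ for every $T$. Finally, $\frac1{T^n}\omega\big((\int_0^T \iota_{\floor{\boldsymbol v t}}\tau_t A\,dt)^n\big)=\langle\Omega,(A_T)^n\Omega\rangle$ and, for the multi-parameter version, $\omega\big(\prod_{j=1}^n\frac1{T_j}\int_0^{T_j}\iota_{\floor{\boldsymbol v t_j}}\tau_{t_j}A_j\,dt_j\big)=\langle\Omega,(A_1)_{T_1}\cdots(A_n)_{T_n}\Omega\rangle$, so the whole theorem becomes a statement about products of the bounded operators $(A_j)_{T_j}$ acting on $\Omega$.

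The key lemma I would isolate is: for a common conull set of speeds $v$, one has $C_T\Omega\to\omega(C)\Omega$ in the norm of $H_\omega$ as $T\to\infty$, for each of the finitely many observables $C$ relevant below. This follows at once from \Cref{th:meansquared}: expanding and using $\langle\Omega,C_T\Omega\rangle=\omega(C)$,
\[
\|C_T\Omega-\omega(C)\Omega\|^2=\|C_T\Omega\|^2-2\,\mathrm{Re}\big(\overline{\omega(C)}\,\langle\Omega,C_T\Omega\rangle\big)+|\omega(C)|^2=\|C_T\Omega\|^2-|\omega(C)|^2,
\]
while, interchanging inner product and Bochner integral,
\[
\|C_T\Omega\|^2=\frac1{T^2}\int_0^T\!\!\int_0^T \omega\big(\iota_{\floor{\boldsymbol v t_1}}\tau_{t_1}(C^*)\,\iota_{\floor{\boldsymbol v t_2}}\tau_{t_2}(C)\big)\,dt_1\,dt_2\ \longrightarrow\ \omega(C^*)\omega(C)=|\omega(C)|^2
\]
as $T\to\infty$, by \Cref{th:meansquared} applied to the pair $(C^*,C)$ (in its diagonal double-limit form). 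Hence $\|C_T\Omega-\omega(C)\Omega\|\to0$. Applying this to each $A_j$ and each $A_j^*$, the operators $B_j(T)\coloneqq (A_j)_T-\omega(A_j)\mathbf 1$ satisfy $\|B_j(T)\|\le 2\|A_j\|$ for all $T$, and $B_j(T)\Omega\to0$ as well as $B_j(T)^*\Omega=(A_j^*)_T\Omega-\omega(A_j^*)\Omega\to0$ in $H_\omega$.

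To finish, I would substitute $(A_j)_{T_j}=\omega(A_j)\mathbf 1+B_j(T_j)$ into $\langle\Omega,(A_1)_{T_1}\cdots(A_n)_{T_n}\Omega\rangle$ and expand, keeping the factors in order: the term with no $B$-factor contributes $\prod_{j=1}^n\omega(A_j)$ (using $\langle\Omega,\Omega\rangle=1$), and there remain $2^n-1$ terms each containing at least one $B$-factor. For such a term, let $m$ be the smallest index whose factor is $B_m(T_m)$; moving its adjoint onto $\Omega$, $\langle\Omega,(\cdots)B_m(T_m)(\cdots)\Omega\rangle=\big(\prod_{j<m}\omega(A_j)\big)\langle B_m(T_m)^*\Omega,(\cdots)\Omega\rangle$, and by Cauchy–Schwarz together with $\|B_j(T)\|\le2\|A_j\|$ this term is bounded in modulus by $\|B_m(T_m)^*\Omega\|\prod_{j\neq m}2\|A_j\|$, which tends to $0$ as $T_m\to\infty$, uniformly in the remaining $T_j$'s. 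Consequently, in the simultaneous limit $T_1,\dots,T_n\to\infty$ every such term vanishes and $\langle\Omega,(A_1)_{T_1}\cdots(A_n)_{T_n}\Omega\rangle\to\prod_{j=1}^n\omega(A_j)$; specialising $A_j=A$, $T_j=T$ gives the first display. The only step that is more than bookkeeping (around the Bochner integral and the invariance of $\omega$) is the upgrade from the trivial iterated-limit factorization to a genuine simultaneous multiple limit, and the expansion above handles it precisely because the Cauchy–Schwarz bound on each cross term decays as soon as its leftmost ``fluctuation'' time goes to infinity, uniformly in the others. The main input doing the real work is \Cref{th:meansquared}, already proved.
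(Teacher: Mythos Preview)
Your proof is correct, and it takes a genuinely different route from the paper's argument. The paper proceeds by induction on $n$: it bounds $|I_{n+1}(T)-\omega(A)I_n(T)|$ using the inequality $|\omega(AB)|^2\le\|A\|^2\omega(B^*B)$, which reduces the increment to a mean-square expression handled by \Cref{th:meansquared}; the multiple limit is then assembled step by step via Moore--Osgood, showing at each stage that the limit in $T_{n+1}$ is uniform in $T_1,\ldots,T_n$. Your approach instead expands the full product $(A_1)_{T_1}\cdots(A_n)_{T_n}$ around $\omega(A_j)\mathbf 1$ in one go and kills every cross term by pushing the \emph{leftmost} fluctuation $B_m(T_m)$ onto $\Omega$ as $B_m(T_m)^*\Omega$, whose norm vanishes by the key lemma (which is exactly \Cref{th:meansquared} applied to $A_m^*$). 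This buys you the simultaneous multiple limit directly, with no induction and no explicit Moore--Osgood bookkeeping, because each cross term is bounded by a quantity depending only on its own $T_m$ and uniform in the others. The paper's route is perhaps more incremental and makes the role of Moore--Osgood explicit; yours is cleaner and shows more transparently why the result is really just $n$ applications of the $n=2$ case. Both rest on precisely the same analytic input, namely $\|C_T\Omega-\omega(C)\Omega\|\to0$, which is the content of \Cref{th:meansquared}.
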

To be precise, the multiple limit is in the sense that $(T_1^{-1},T_2^{-1},\ldots,T_{n+1}^{-1})\to \boldsymbol 0$ with respect to the topology of $\R^{n+1}$. See \Cref{appendix:Moore-Osgood} for details on how we treat multiple limits.

Here we have written the result by putting the integral within the state evaluation, using the Bochner integral discussed in \Cref{appendix1}. This makes the meaning of the result more clear: the ray-averaged observable $A$ does not fluctuate within the state $ω$ in the long time limit.

\subsubsection{Results for KMS states}

In the case of space-like ergodic KMS states, we can take advantage of the KMS property, \Cref{eq:KMS_state}, to prove much stronger results. In particular, we can show that \textbf{in the GNS representation of a KMS state the ray average of an observable tends (in the strong operator topology) to the state average times the identity operator, in the long time limit, almost everywhere.} The result will hold \textbf{for any space-like ergodic state that is also a $(τ^{\prime},β)$-KMS state for any time evolution $τ^{\prime}$ which commutes with the initial time evolution of the system $τ$}.

Few remarks are in order concerning the importance of admitting $\tau'\neq \tau$.

First, in the context of modern non-equilibrium physics, especially in integrable systems, KMS states with respect to different time evolutions are of fundamental importance. Indeed, as we mentioned, these should be identified with generalised Gibbs ensembles (GGE) \cite{rigol_relaxation_2007}, which formally have density matrices $e^{-\sum_a \beta_a Q_a}$ where $Q_a$ are some set of short-range conserved charges for the evolution Hamiltonian $H$; here $\sum_a \beta_a Q_a = H'$ would be the short-range Hamiltonian generating $\tau'$. GGEs are the states which a system is expected to reach after quantum quenches and in other non-equilibrium situations at large times. The currently most rigorous understanding of GGEs in interacting systems \cite{doyonPseudolocal} has not yet been connected with the KMS condition, and it would be interesting to investigate this in future works.

Second, mathematically, using the Tomita-Takesaki theory \cite{summers2005tomitatakesaki}, every faithful normal state of a von Neumann Algebra \cite[Section 2.2]{bratteli_operator_1987} is a KMS state with time evolution given by the modular group. Adapting our proofs and results to the context of von Neumann algebras, we would thus have general results for space-like ergodic, faithful, normal states, with $\tau'$ the modular group. Of course, closing the $C^*$ algebra of operators on $H_\omega$ under the strong topology gives a natural von Neumann algebra to work with. This requires some additional technical work (e.g. extending our results to $W^*$ dynamical systems), which we leave for future works.

Finally, as an aside, making the connection with GGEs, we note that one may wish to identify the set of all space-time stationary, faithful, normal states with (at least some large subset of) GGEs, the modular group being the group generated by the particular charge $\sum_a \beta_a Q_a = H'$ for the GGE. Thus the Tomita-Takesaki theory could give a very general and mathematically useful understanding of GGEs.

Our main theorem using the KMS condition is the following.
\begin{theorem} \label{th:KMS_average_strong_convergence}
Consider a dynamical system $(\mathfrak{U}, ι, τ)$ and a possibly different time evolution $\tau'$ such that $(\mathfrak{U}, ι, τ')$ is also a dynamical system. Suppose the interactions, both for $\tau$ that $\tau'$, satisfy \Cref{eq:interaction}, and that $\tau$ and $\tau'$ commute. Consider a   $(\tau',\beta)$-KMS state $ω_β$ that is space-like ergodic for  $(\mathfrak{U}, ι, τ)$  and its respective GNS representation $(H_{ω_β},π_{ω_β},Ω_{ω_β})$. It follows that for any $A \in \mathfrak{U}$, its ray average in the GNS representation converges to the state average in the strong operator topology, for almost every ray. That is, the following limit holds in the norm (of $H_{ω_β}$)
\begin{equation}
    \lim_{T \to \infty} \frac{1}{T} \int_0^T π_{ω_β} \big( ι_{\floor{ \boldsymbol{υ} t}}τ_t A \big) \, dt \, Ψ= ω_β(A) Ψ \ , \ \ \forall Ψ \in H_{ω_β}
\end{equation}
with $\boldsymbol{υ}= υ \boldsymbol q$ of any rational direction $\boldsymbol{q} \in \Srat^{D-1}$  and almost every speed $υ\in \R$. Again, the integral is to be understrood as a Bochner integral (\Cref{appendix1}). 
\end{theorem}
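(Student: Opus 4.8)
The plan is to reduce the operatorial statement to the two-point correlation results already proved for the $\tau$-dynamics, invoking the KMS condition for $\tau'$ only at the final stage, to pass from convergence on the cyclic vector to strong convergence on all of $H_{\omega_\beta}$. Throughout, write $A_t \coloneqq \iota_{\floor{\boldsymbol v t}}\tau_t A$ (with $\boldsymbol v = v\boldsymbol q$) and $M_T \coloneqq \tfrac1T\int_0^T \pi_{\omega_\beta}(A_t)\,dt$. By the Bochner-integral formalism of \Cref{appendix1}, $M_T$ is a well-defined bounded operator with $\norm{M_T}\le\norm A$ for all $T$, and since every $\pi_{\omega_\beta}(A_t)$ commutes with the commutant $\pi_{\omega_\beta}(\mathfrak U)'$, so does $M_T$. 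The target is $M_T\Psi\to\omega_\beta(A)\Psi$ for all $\Psi\in H_{\omega_\beta}$, for a full-measure set of speeds $v$.

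First I would prove this on the cyclic vector $\Omega_{\omega_\beta}$. Since $\omega_\beta$ is space and time translation invariant, $\omega_\beta(A_t)=\omega_\beta(A)$ for every $t$, so $\langle\Omega_{\omega_\beta},M_T\Omega_{\omega_\beta}\rangle=\omega_\beta(A)$ exactly. Interchanging the Bochner integral with the inner product,
\[
\norm{M_T\Omega_{\omega_\beta}}^2=\frac1{T^2}\int_0^T\!\!\int_0^T \omega_\beta\big((\iota_{\floor{\boldsymbol v t_1}}\tau_{t_1}(A))^{*}\,\iota_{\floor{\boldsymbol v t_2}}\tau_{t_2}(A)\big)\,dt_1\,dt_2 ,
\]
and, using $(\iota_{\floor{\boldsymbol v t_1}}\tau_{t_1}(A))^{*}=\iota_{\floor{\boldsymbol v t_1}}\tau_{t_1}(A^{*})$, this is exactly the double average of \Cref{th:meansquared} for the pair $(A^{*},A)$, specialised to $T_1=T_2=T$ (legitimate, since that result holds as a genuine double limit). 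Hence for almost every $v$ one gets $\norm{M_T\Omega_{\omega_\beta}}^2\to\omega_\beta(A^{*})\omega_\beta(A)=\abs{\omega_\beta(A)}^2$, and therefore
\[
\norm{M_T\Omega_{\omega_\beta}-\omega_\beta(A)\,\Omega_{\omega_\beta}}^2=\norm{M_T\Omega_{\omega_\beta}}^2-\abs{\omega_\beta(A)}^2\ \longrightarrow\ 0 .
\]

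Next I would propagate this to a dense subspace. For $C'\in\pi_{\omega_\beta}(\mathfrak U)'$ we have $M_T C'\Omega_{\omega_\beta}=C'M_T\Omega_{\omega_\beta}\to\omega_\beta(A)\,C'\Omega_{\omega_\beta}$. Because $\omega_\beta$ is a $(\tau',\beta)$-KMS state, its cyclic vector is separating for the von Neumann algebra $\pi_{\omega_\beta}(\mathfrak U)''$ (a standard consequence of the KMS condition; see \cite[Section 5.3]{bratteli_operator_1997}), equivalently cyclic for $\pi_{\omega_\beta}(\mathfrak U)'$, so $\pi_{\omega_\beta}(\mathfrak U)'\Omega_{\omega_\beta}$ is dense in $H_{\omega_\beta}$. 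Given $\Psi$ and $\varepsilon>0$, choose $C'$ with $\norm{\Psi-C'\Omega_{\omega_\beta}}<\varepsilon$; the uniform bound $\norm{M_T}\le\norm A$ and the triangle inequality give $\limsup_{T\to\infty}\norm{M_T\Psi-\omega_\beta(A)\Psi}\le 2\norm A\,\varepsilon$, and letting $\varepsilon\to0$ finishes it. The exceptional null set of speeds is just the one inherited from \Cref{th:meansquared}, depending on $A$ and $\boldsymbol q$ as in the statement.

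There is no hard step internal to this argument: its weight is carried by \Cref{th:meansquared} — which already absorbs the Lieb–Robinson analysis and the discrete-to-continuous ray averaging — and by the standard fact that the GNS vector of a KMS state is separating. If one must point to the delicate move, it is the last one: without the separating property (equivalently, cyclicity of $\Omega_{\omega_\beta}$ for the commutant) one would only obtain strong convergence on the possibly proper subspace $\overline{\pi_{\omega_\beta}(\mathfrak U)'\Omega_{\omega_\beta}}$, so this is precisely where the KMS hypothesis is indispensable and cannot be traded for space-like ergodicity alone. The remaining points to keep honest are the diagonal specialisation of the double limit, the Bochner-integral interchanges of \Cref{appendix1} (of the integral with $\omega_\beta$, with inner products, and with $\pi_{\omega_\beta}(\mathfrak U)'$), and the fact that the commutation of $\tau$ and $\tau'$ serves only to frame the hypotheses — the proof uses space-like ergodicity solely for $\tau$ and the KMS property solely for $\tau'$.
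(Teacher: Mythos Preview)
Your proof is correct and takes a genuinely different route from the paper's. The paper proceeds via two preparatory lemmas: it uses the KMS condition \emph{analytically}---approximating by $\tau'$-entire elements and cycling through $\tau'_{i\beta}$, with Moore--Osgood arguments to justify limit interchanges---to evaluate $\lim_T \omega_\beta(B\,\overline{A^T}\,D)$ and $\lim_T \omega_\beta(B\,\overline{A^T}\,\overline{C^T}\,D)$ for arbitrary $B,D$ (Lemma~\ref{lemma:6.1}); it then expands $\norm{(M_T-\omega_\beta(A))\pi(B)\Omega}^2$ into four such terms (Lemma~\ref{lemma:6.2}) and concludes by cyclicity of $\Omega_{\omega_\beta}$ for $\pi_{\omega_\beta}(\mathfrak U)$. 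You instead obtain convergence on $\Omega_{\omega_\beta}$ directly from \Cref{th:meansquared} (which requires no KMS hypothesis), propagate to $\pi_{\omega_\beta}(\mathfrak U)'\Omega_{\omega_\beta}$ by commutation, and invoke KMS only through the \emph{structural} fact that $\Omega_{\omega_\beta}$ is separating, hence cyclic for the commutant. This is more economical and isolates precisely where the KMS assumption enters; it also sidesteps the analytic-element approximation and the associated double-limit bookkeeping. The paper's route, on the other hand, produces Lemma~\ref{lemma:6.1} as an intermediate result that is reused in the proof of the $n$-point \Cref{th:general_theorem}, so its extra work is not wasted.
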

\textbf{This can be understood as convergence, in the strong operator topology of the GNS Hilbert space, of the ray average to the ensemble average, along almost every space-time ray (of rational direction)}:
\begin{equation}
    \frac{1}{T} \int_0^T π_{ω_β} \big( ι_{\floor{ \boldsymbol{υ} t}}τ_t A \big) \, dt  \xrightarrow{SOT} ω_β(A) \mathds{1} \ , \text{ almost everywhere.}
\end{equation}

\begin{remark}
Note that in the case $τ^{\prime}=τ$ and of small $β$ (high temperatures) it suffices to assume that the state is only KMS, as this immediately implies space-like ergodicity, see the discussion above \Cref{th:maintheorem}. For higher values of $β$ where the KMS state is not unique, any KMS state that is invariant and factor will subsequently be space-like ergodic.
\end{remark}
The proof of this Theorem is done in \Cref{section:strong_convergence}. In fact, as is made clear by the proof, we need less than the space-like ergodic state being KMS for the Theorem to hold. It would suffice to have a space-like ergodic state that satisfies a KMS-like condition, i.e.\ that $ω(A τ^{\prime} B) = ω(B A)$ for any linear map $τ^{\prime}: \mathfrak{U_{τ^{\prime}}} \to \mathfrak{U}$ that commutes with the time evolution, where $ \mathfrak{U_{τ^{\prime}}}$ is a dense subset of $\mathfrak{U}$ that is $\tau$-invariant, $\tau_t(\mathfrak{U_{τ^{\prime}}}) = \mathfrak{U_{τ^{\prime}}}\;\forall t$.

In the study of operator algebras, and in many physical applications, it is often useful to have results for commutators of observables (for instance, this is how the Lieb-Robinson bound was originally stated). Let us therefore discuss some results regarding the long time ray-averages of commutators of observables. An immediate consequence of the Ergodicity Theorem \ref{th:maintheorem} is the vanishing of the ray average of the commutator within the state:
\begin{corollary}
Consider the assumptions of \Cref{th:maintheorem}. It follows that for all $A,B \in \mathfrak{U}$, any rational direction $\boldsymbol{q} \in \Srat^{D-1}$ and almost every speed $υ \in \mathbb{R}:$
\begin{equation}
       \lim_{T \to \infty} \frac{1}{T} \int_0^T  ω \big( [ι_{\floor{ \boldsymbol{υ} t}}τ_t (A) ,B ]\big)  \,dt = 0  \end{equation}
      with $\boldsymbol{υ}= υ \boldsymbol{q}$ and $[ \cdot, \cdot ]$ denoting the commutator.
\end{corollary}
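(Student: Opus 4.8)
The plan is to deduce this directly from \Cref{th:maintheorem} by expanding the commutator and treating its two terms separately. First I would write
\[
ω \big( [ι_{\floor{ \boldsymbol{υ} t}}τ_t (A) ,B ]\big) = ω \big( ι_{\floor{ \boldsymbol{υ} t}}τ_t (A)\, B\big) - ω \big( B\, ι_{\floor{ \boldsymbol{υ} t}}τ_t (A)\big),
\]
note that each of the two scalar functions of $t$ on the right-hand side is bounded and (piecewise) measurable, so that the $\frac1T\int_0^T(\cdots)\,dt$ average of the commutator splits as the difference of the corresponding two averages, and thereby reduce the claim to showing that each of these two averages converges to $ω(A)ω(B)$ as $T\to\infty$.

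The first average is exactly the quantity controlled by \Cref{th:maintheorem} applied to the pair $(A,B)$, so it converges to $ω(A)ω(B)$. For the second, the idea is to pass to the complex conjugate: since the $ι_{\boldsymbol n}$ and $τ_t$ are $^*$-automorphisms and $ω$ is a positive linear functional (hence $\overline{ω(X)}=ω(X^*)$ for all $X\in\mathfrak{U}$), one has
\[
\overline{ω \big( B\, ι_{\floor{ \boldsymbol{υ} t}}τ_t (A)\big)} = ω \big( ι_{\floor{ \boldsymbol{υ} t}}τ_t (A^*)\, B^*\big).
\]
Applying \Cref{th:maintheorem} to the pair $(A^*,B^*)$ shows that the $T$-average of the right-hand side tends to $ω(A^*)ω(B^*)=\overline{ω(A)ω(B)}$; conjugating back, the $T$-average of $ω\big(B\,ι_{\floor{ \boldsymbol{υ} t}}τ_t (A)\big)$ tends to $ω(A)ω(B)$ as well. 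Subtracting the two limits gives $0$, which is the asserted identity.

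I expect no genuine obstacle here: the entire analytic content already sits in \Cref{th:maintheorem}, and this deduction is essentially bookkeeping. The one point deserving a line of care is the null set of ``bad'' speeds: \Cref{th:maintheorem} furnishes, for each fixed pair of observables, a conull set of $υ$ along which its averaged limit holds, so I would intersect the conull set attached to $(A,B)$ with the one attached to $(A^*,B^*)$; this intersection is again conull, and on it both averages above converge simultaneously, giving the corollary for almost every $υ\in\R$ and every rational direction $\boldsymbol q$. (If one wished, separability of $\mathfrak{U}$ would allow a single conull set valid for all observables at once, but the corollary does not require this refinement.)
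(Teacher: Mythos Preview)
Your proposal is correct and is precisely the ``immediate consequence'' the paper has in mind: the corollary is stated without proof, merely as a direct consequence of \Cref{th:maintheorem}, and your expansion of the commutator together with the conjugation trick $\overline{ω(B\,ι_{\floor{\boldsymbol{υ}t}}τ_t(A))}=ω(ι_{\floor{\boldsymbol{υ}t}}τ_t(A^*)\,B^*)$ is the natural way to fill this in. Your remark about intersecting the two conull sets (for $(A,B)$ and $(A^*,B^*)$) is the right bookkeeping and completes the argument.
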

However, we can show a  stronger result regarding the commutator in the case of space-like ergodic KMS states. Again, under the same set-up as \Cref{th:KMS_average_strong_convergence}, we can consider space-like ergodic states that satisfy the KMS condition with respect to any time evolution $τ^{\prime}$ that commutes with $τ$ and satisfies \Cref{eq:interaction}. It follows that we have mean asymptotic abelianness, for space-time translations along almost every ray, in the  GNS representation:

\begin{theorem}[Mean asymptotic abelianness in the GNS representation of a KMS state] \label{th:mean_asymptotic_gns_abelianness}
Consider the assumptions of Theorem \ref{th:KMS_average_strong_convergence}. We have that in the GNS representation of $ω_β$, for all $B,C \in \mathfrak{U}$, any rational direction $\boldsymbol{q} \in \Srat^{D-1}$   of the velocity $\boldsymbol{υ}= υ\boldsymbol{q}$ and almost every speed  $υ \in \mathbb{R}$:
\begin{equation}
   \lim_{T \to \infty} \frac{1}{T} \int_0^T π_{ω_β} ( [ι_{\floor{ \boldsymbol{υ} t}}τ_t (B) ,C ] ) \,dt \,Ψ =0 \ , \ \  \forall Ψ \in H_{ω_β}
\end{equation}
where the integral is to be understood as a Bochner integral and the limit is in the Hilbert space norm.

\end{theorem}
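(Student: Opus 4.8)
The plan is to obtain this statement as a short corollary of the strong-operator convergence established in \Cref{th:KMS_average_strong_convergence}, using only that $\pi_{\omega_\beta}$ is a $^*$-homomorphism and that a bounded operator may be pulled in and out of a Bochner integral. Write $B_t:=\iota_{\floor{\boldsymbol v t}}\tau_t(B)$, so that $[B_t,C]=B_tC-CB_t$ as an element of $\mathfrak U$ and hence $\pi_{\omega_\beta}([B_t,C])=\pi_{\omega_\beta}(B_t)\pi_{\omega_\beta}(C)-\pi_{\omega_\beta}(C)\pi_{\omega_\beta}(B_t)$. For a fixed $\Psi\in H_{\omega_\beta}$, the Bochner-integral manipulations recorded in \Cref{appendix1} (linearity, together with moving the bounded operator $\pi_{\omega_\beta}(C)$ through the integral on either side) then give the identity
\[
\frac1T\int_0^T \pi_{\omega_\beta}([B_t,C])\,dt\,\Psi
=\Big(\frac1T\int_0^T \pi_{\omega_\beta}(B_t)\,dt\Big)\pi_{\omega_\beta}(C)\Psi
-\pi_{\omega_\beta}(C)\Big(\frac1T\int_0^T \pi_{\omega_\beta}(B_t)\,dt\,\Psi\Big).
\]

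Next, I would fix the rational direction $\boldsymbol q$ and let $N_B\subset\mathbb R$ be the Lebesgue-null set of speeds off which \Cref{th:KMS_average_strong_convergence} holds for the observable $B$; this null set depends on $B$ and $\boldsymbol q$ only, since the conclusion of \Cref{th:KMS_average_strong_convergence} is a statement about all vectors of $H_{\omega_\beta}$ simultaneously. For $v\notin N_B$, \Cref{th:KMS_average_strong_convergence} applied to the vector $\pi_{\omega_\beta}(C)\Psi$ shows the first term above converges in the norm of $H_{\omega_\beta}$ to $\omega_\beta(B)\,\pi_{\omega_\beta}(C)\Psi$, and, applied to the vector $\Psi$ together with continuity of $\pi_{\omega_\beta}(C)$, shows the second term converges to $\omega_\beta(B)\,\pi_{\omega_\beta}(C)\Psi$ as well. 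Subtracting, the two limits cancel and $\frac1T\int_0^T \pi_{\omega_\beta}([B_t,C])\,dt\,\Psi\to 0$ in $H_{\omega_\beta}$, which is the claim.

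I do not anticipate a genuine obstacle here beyond \Cref{th:KMS_average_strong_convergence} itself: the only points requiring (routine) care are the Bochner-integral identities used above — in particular identifying $\big(\int_0^T\pi_{\omega_\beta}(B_t)\,dt\big)\eta$ with $\int_0^T\pi_{\omega_\beta}(B_t)\eta\,dt$ for a fixed vector $\eta$, and pulling $\pi_{\omega_\beta}(C)$ across the integral — all of which follow from the basic properties of Bochner integrals treated in \Cref{appendix1}, together with the (vector-independent) way the exceptional set of speeds enters. If one further wants a single null set valid for all $B,C\in\mathfrak U$ at once, it suffices to intersect the sets $N_B$ over a countable norm-dense family of $B$'s, using separability of $\mathfrak U$, and to pass to general $B$ via the uniform bound $\norm{\frac1T\int_0^T\pi_{\omega_\beta}(B_t)\,dt}\le\norm{B}$.
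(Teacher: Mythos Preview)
Your proof is correct and follows essentially the same approach as the paper: split the commutator via the $^*$-homomorphism property of $\pi_{\omega_\beta}$, apply \Cref{th:KMS_average_strong_convergence} to each of the two resulting terms (once with the vector $\pi_{\omega_\beta}(C)\Psi$, once with $\Psi$ followed by continuity of $\pi_{\omega_\beta}(C)$), and observe that both limits equal $\omega_\beta(B)\pi_{\omega_\beta}(C)\Psi$ and hence cancel. Your additional remarks on the Bochner-integral identities and on obtaining a single null set via separability are welcome elaborations but go slightly beyond what the paper records.
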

\textbf{This can be understood as convergence, in the strong operator topology of the GNS Hilbert space, of the ray-averaged commutator along almost every space-time ray (of rational direction):}
\begin{equation}
    \frac{1}{T} \int_0^T π_{ω_β} ( [ι_{\floor{ \boldsymbol{υ} t}}τ_t (B) ,C ] ) \,dt  \xrightarrow{SOT} 0 \ \text{, almost everywhere.}
\end{equation}

This result is an immediate consequence of \Cref{th:KMS_average_strong_convergence} and is shown in \Cref{section:mean_asympt_abel}.

Finally, in KMS states the ergodicity \Cref{th:maintheorem} can be extended to include an arbitrary number of observables and of ray-averaged observables, in any order:
\begin{theorem}  \label{th:general_theorem}
Consider the assumptions of Theorem \ref{th:KMS_average_strong_convergence} and $n \in \N$. Consider also $A_1, A_2 , \cdots , A_{n+1} \in \mathfrak{U}$ and $B_1, B_2, \cdots, B_n \in \mathfrak{U}$ and denote $\overline{B_j^T} = \frac{1}{T} \int_0^T ι_{\floor{ \boldsymbol{υ} t}}τ_t (B_j) \,dt$ their ray-average, $T \in (0, \infty)$. It follows that for any rational direction $\boldsymbol{q} \in \Srat^{D-1}$ of the velocity $\boldsymbol{v}=υ \boldsymbol{q}$ and almost every speed  $υ \in \mathbb{R}$, the following multiple limit exists and gives:
\begin{equation} \label{eq:general_theorem}
\arraycolsep=1.4pt\def\arraystretch{2.2}
\begin{array}{*3{>{\displaystyle}lcl}p{5cm}}
    \lim_{T_1,\ldots,T_n \to \infty}ω \big(  A_1   \overline{B_1^{T_1}} A_2 \overline{B_2^{T_2}}  \cdots \overline{B_n^{T_n}}  A_{n+1} \big) = \\ =ω(B_1) ω(B_2) \cdots ω(B_n) ω( A_1 A_2 \cdots A_{n+1}).
    \end{array}
\end{equation}
\end{theorem}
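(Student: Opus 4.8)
The plan is to proceed by induction on $n$, using Theorem \ref{th:KMS_average_strong_convergence} (strong-operator convergence of ray averages to $\omega(B)\mathds 1$ in the GNS representation) as the engine, together with the KMS condition to move ray-averaged factors past ordinary observables. For the base case $n=0$ the statement is trivial, and $n=1$ is essentially a reformulation of \Cref{th:KMS_average_strong_convergence}: writing $\omega(A_1\overline{B_1^{T_1}}A_2)=\langle \pi_{\omega_\beta}(A_1^*)\Omega_{\omega_\beta},\,\overline{\pi_{\omega_\beta}(B_1^{T_1})}\,\pi_{\omega_\beta}(A_2)\Omega_{\omega_\beta}\rangle$, where $\overline{\pi_{\omega_\beta}(B_1^{T_1})}=\frac1{T_1}\int_0^{T_1}\pi_{\omega_\beta}(\iota_{\floor{\boldsymbol v t}}\tau_t B_1)\,dt$ is the Bochner integral in the GNS representation, the strong convergence of this operator to $\omega_\beta(B_1)\mathds 1$ gives the limit $\omega_\beta(B_1)\langle\pi_{\omega_\beta}(A_1^*)\Omega_{\omega_\beta},\pi_{\omega_\beta}(A_2)\Omega_{\omega_\beta}\rangle=\omega(B_1)\omega(A_1A_2)$. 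Here one must check that $\pi_{\omega_\beta}$ commutes with the Bochner integral, which follows from boundedness and continuity of $\pi_{\omega_\beta}$ together with the properties of the Bochner integral recalled in \Cref{appendix1}.

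For the inductive step I would isolate the \emph{last} ray-averaged factor $\overline{B_n^{T_n}}$ and first take the limit $T_n\to\infty$ with $T_1,\dots,T_{n-1}$ fixed. The difficulty is that $\overline{B_n^{T_n}}$ sits between $A_n$ (on its left, possibly preceded by many ray-averaged factors) and $A_{n+1}$ on its right, so I cannot directly apply strong convergence unless $\overline{B_n^{T_n}}$ is adjacent to the cyclic vector. This is exactly where the KMS condition enters: using $\omega(XY)=\omega(Y\tau'_{i\beta}X)$ I would rewrite $\omega(\cdots \overline{B_n^{T_n}} A_{n+1})$ by cycling $A_{n+1}$ to the front, i.e. $\omega(A_{n+1}\,\tau'_{i\beta}(\text{the rest})\cdot ?)$ — more carefully, I would bring the expression into the form $\langle \Phi_{T_1,\dots,T_{n-1}},\,\overline{\pi_{\omega_\beta}(B_n^{T_n})}\,\Xi\rangle$ for suitable vectors, using that $\tau'$ commutes with $\tau$ and $\iota$ so that $\tau'_{i\beta}$ passes through the ray average of $B_n$ harmlessly (this is the observation in the remark after \Cref{th:KMS_average_strong_convergence} that only a KMS-like identity relative to a commuting evolution is needed). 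Applying \Cref{th:KMS_average_strong_convergence} then replaces $\overline{\pi_{\omega_\beta}(B_n^{T_n})}$ by $\omega(B_n)\mathds 1$ in the strong topology, and since $\Phi_{T_1,\dots,T_{n-1}}$ depends continuously (indeed uniformly boundedly, by $\|\overline{B_j^{T_j}}\|\le\|B_j\|$) on the remaining parameters, the limit $T_n\to\infty$ yields $\omega(B_n)\cdot\omega(A_1\overline{B_1^{T_1}}\cdots \overline{B_{n-1}^{T_{n-1}}}A_nA_{n+1})$, to which the induction hypothesis applies with the two observables $A_n,A_{n+1}$ merged into one. Finally, to upgrade the iterated limit to the genuine multiple limit in the sense of $(T_1^{-1},\dots,T_n^{-1})\to\boldsymbol 0$, I would invoke the Moore–Osgood argument of \Cref{appendix:Moore-Osgood}: the convergence in each $T_j$ is uniform in the others because of the uniform bound $\|\overline{B_j^{T_j}}\|\le\|B_j\|$ on all the ray-averaging operators, so the order of limits is immaterial and the multiple limit exists.

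The main obstacle I anticipate is the bookkeeping in the inductive step: correctly writing the $(n{-}1)$-fold ray-averaged, $A$-interspersed expression as an inner product with $\overline{\pi_{\omega_\beta}(B_n^{T_n})}$ acting on one side, while keeping track of which vector absorbs the $\tau'_{i\beta}$ from the KMS shuffle and verifying that this vector is genuinely independent of $T_n$ and uniformly bounded in $T_1,\dots,T_{n-1}$. One subtlety is that after cycling $A_{n+1}$ to the left, the analytic element $\tau'_{i\beta}$ acts on a product of ordinary observables and ray-averaged observables; since $\tau'$ commutes with $\tau$ and $\iota$, it commutes with each ray average $\overline{\,\cdot\,^{T}}$, so $\tau'_{i\beta}$ only needs to be applied to the $A_i$'s (on a dense set of analytic elements, extended by continuity), and the uniform bounds are preserved. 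Once this is set up cleanly, the rest is a routine combination of strong-operator convergence, the uniform operator-norm bound on ray averages, and Moore–Osgood.
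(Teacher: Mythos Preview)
Your inductive strategy with Moore--Osgood is exactly the paper's approach; the only difference is that you invoke \Cref{th:KMS_average_strong_convergence} (SOT convergence) as the engine, whereas the paper calls \Cref{lemma:6.1} directly (the scalar $n=1$ case, proved from the ergodicity theorem plus KMS). Since \Cref{th:KMS_average_strong_convergence} is itself derived from \Cref{lemma:6.1} via \Cref{lemma:6.2}, this is a harmless and arguably cleaner repackaging.

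However, your discussion of the KMS condition in the inductive step is an unnecessary detour. Once SOT convergence is in hand, writing
\[
\omega\big(A_1\overline{B_1^{T_1}}\cdots A_n\,\overline{B_n^{T_n}}\,A_{n+1}\big)
=\big\langle \Phi_{T_1,\dots,T_{n-1}},\;\pi_{\omega_\beta}\big(\overline{B_n^{T_n}}\big)\,\pi_{\omega_\beta}(A_{n+1})\Omega_{\omega_\beta}\big\rangle
\]
with $\Phi_{T_1,\dots,T_{n-1}}=\pi_{\omega_\beta}\big(A_n^*(\overline{B_{n-1}^{T_{n-1}}})^*\cdots(\overline{B_1^{T_1}})^*A_1^*\big)\Omega_{\omega_\beta}$ lets you apply SOT convergence directly to the \emph{fixed} vector $\pi_{\omega_\beta}(A_{n+1})\Omega_{\omega_\beta}$; nothing needs to be adjacent to $\Omega_{\omega_\beta}$, and no cycling via $\tau'_{i\beta}$ is required. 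Consequently the analyticity of ray-averaged products---the bookkeeping you flag as the main obstacle---never comes up. The uniform bound $\|\Phi_{T_1,\dots,T_{n-1}}\|\le\prod_i\|A_i\|\prod_{j<n}\|B_j\|$ gives uniformity of $\lim_{T_n}$ in the remaining variables, and the induction hypothesis (applied with last observable $A_n\overline{B_n^{T_n}}A_{n+1}\in\mathfrak U$) gives the pointwise multiple limit in $(T_1,\dots,T_{n-1})$, so Moore--Osgood closes the argument. In short, your ``more carefully'' clause is already the complete route; the KMS shuffle preceding it can simply be deleted.
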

This is shown in \Cref{section:proof_of_general_theorem} by induction. The cases $n=1$ and $n=2$ are shown in \Cref{lemma:6.1} and subsequently used in proving the SOT convergence of \Cref{th:KMS_average_strong_convergence}.
\begin{remark}
Note that \Cref{eq:general_theorem} means that we can have any number of observables $A_i$ and ray-averaged ones $\overline{B_j^T}$ in any possible combination, i.e.\ by choosing for any  $A_i= A_{i_1}A_{i_2} \cdots A_{i_k}$ for arbitrary $A_{i_1},A_{i_2} \cdots, A_{i_k} \in \mathfrak{U}$, $k \in \N$ we see that we can have any number of observables between the ray-averaged ones. Note also that any of the $A_i$ can be chosen to be the unit element of $\mathfrak{U}$.
\end{remark}

\subsection{Ergodic Theorems taking account of oscillations} \label{subsection:frequency}

The ray average can also be taken in such a way so as to account for oscillations, in order to show that correlations between observables cannot sustain certain types of oscillations.
The main assumption is what we call space-like non-oscillating states, which will again follow from the Lieb-Robinson bound, in factor states. The intuition is that we assume states whose correlation functions do not exhibit oscillatory behaviours inside a space-like cone:

\begin{defn}[Space-like non-oscillating state]
Consider a dynamical system $(\mathfrak{U}, ι, τ)$. A state $ω \in E_{\mathfrak{U}}$ is called space-like non-oscillating if it is space and time translation invariant and there exists a $υ_c>0$ such that for any $A,B \in \mathfrak{U}_{\rm loc}$, $\boldsymbol n \in \mathbb{Z}^D$ and $υ \in \hat{ {\mathbb{R}}}= \mathbb{R} \cup \{ -\infty, \infty \}$ with $|υ| > υ_c$ it holds that:
\begin{equation}
  \lim_N  \frac{1}{N} \sum_{m=0}^{N-1} e^{-i(\boldsymbol k \cdot \boldsymbol n - f υ^{-1} |\boldsymbol n|)m} ω \big( ι_{\boldsymbol n}^m τ_{υ^{-1}|\boldsymbol n|}^m (A) B \big) =0
\end{equation}
for all wavenumber-frequency pairs $(\boldsymbol k ,f )\in \R^D \times \R$ with $\boldsymbol k \cdot \boldsymbol n - f υ^{-1} |\boldsymbol n| \not\in 2\pi \Z$.
\label{def:phase-spacelike-ergodic}
\end{defn}

We will see that factor invariant states indeed satisfy this in the systems that we consider, see \Cref{th:space-like-clustering} and below.

\subsubsection{Results not requiring the KMS condition}

Having the assumption of a space-like non-oscillating state we can extend the Ergodicity  \Cref{th:maintheorem} to preclude oscillatory behaviours of correlations along almost every ray:
\begin{theorem}
Consider a dynamical system $(\mathfrak{U}, ι, τ)$ with interaction that satisfies \Cref{eq:interaction}, a space-like non-oscillating state  $ω\in E_{\mathfrak{U}}$ and any $A,B \in \mathfrak{U}$.  Then for every non-zero $(\boldsymbol k, f)\in \mathbb{R}^D \times \mathbb{R}$ and every rational  direction $\boldsymbol q \in \Srat^{D-1}$ of the velocity $\boldsymbol{υ}= υ \boldsymbol{q}$ it follows that  for almost all $υ\in \R$
\begin{equation}
   \lim_{T \to \infty} \frac{1}{T} \int_0^T  e^{i (\boldsymbol k\cdot \boldsymbol v   -  f) t} \bigg( ω \big( ι_{\floor{ \boldsymbol{υ}t}}τ_t (A) B \big) - ω(A) ω(B) \bigg)  \,dt = 0.
    \label{eq:frequencyav2}
\end{equation}
Note that generally this means $\lim_{T \to \infty} \frac{1}{T} \int_0^T  e^{i (\boldsymbol k\cdot \boldsymbol v   -  f) t}  ω \big( ι_{\floor{ \boldsymbol{υ}t}}τ_t (A) B \big) \,dt =0$, except for the special case $\boldsymbol k\cdot \boldsymbol v-f=0$ where the Theorem reduces to $\Cref{th:maintheorem}$.

Additionally, we have for any $n$-th moment with respect to the state:
\begin{equation} \label{eq:frequency-moments}
    \lim_{T \to \infty} \frac{1}{T^n} ω \bigg( \big(\int_0^T e^{i (\boldsymbol k\cdot \boldsymbol v   -  f) t}   ι_{\floor{ \boldsymbol{υ} t}}τ_{t} A  \,dt  \big)^n \bigg)=  
   \begin{cases}
		ω(A)^n,  & \mbox{if } \boldsymbol k \cdot \boldsymbol v - f =0 \\
		0, &  \text{otherwise}
	\end{cases}
\end{equation}
for almost all $υ \in \mathbb{R}$. \label{th:frequencyav2}. 
\end{theorem}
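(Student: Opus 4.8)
The plan is to run the arguments behind \Cref{th:maintheorem} and \Cref{th:rankone} with the oscillatory factor carried through, and to deduce \Cref{eq:frequency-moments} from an operator convergence in the GNS representation. By norm‑density of $\mathfrak U_{\rm loc}$ in $\mathfrak U$ and the uniform bound $\|\tfrac1T\int_0^T e^{i\alpha t}\iota_{\lfloor\boldsymbol v t\rfloor}\tau_t(\cdot)\,dt\|\le\|\cdot\|$, it suffices to treat $A,B\in\mathfrak U_{\rm loc}$. Fix a primitive vector $\boldsymbol m_0\in\mathbb Z^D$ with $\boldsymbol q=\boldsymbol m_0/|\boldsymbol m_0|$, write $\alpha:=\boldsymbol k\cdot\boldsymbol v-f$, $\delta:=|\boldsymbol m_0|/v$, and note $\alpha\delta=\boldsymbol k\cdot\boldsymbol m_0-f|\boldsymbol m_0|/v=:\Phi$. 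Cutting $[0,T)$ into blocks of length $\delta$ and using $\lfloor\boldsymbol v(m\delta+s)\rfloor=m\boldsymbol m_0+\lfloor\boldsymbol v s\rfloor$ together with $\iota_{m\boldsymbol m_0}\tau_{m\delta}=(\iota_{\boldsymbol m_0}\tau_\delta)^m$ and $\tau_t\iota_{\boldsymbol n}=\iota_{\boldsymbol n}\tau_t$, the time average for $T=M\delta$ (the remainder being $O(\delta/T)$) becomes
\[
\frac1\delta\int_0^\delta e^{i\alpha s}\Bigl(\frac1M\sum_{m=0}^{M-1}e^{im\Phi}\,\omega\bigl((\iota_{\boldsymbol m_0}\tau_\delta)^m(g(s))\,B\bigr)\Bigr)ds,\qquad g(s):=\iota_{\lfloor\boldsymbol v s\rfloor}\tau_s(A)\in\mathfrak U.
\]
Passing to $(H_\omega,\pi_\omega,\Omega_\omega)$ and using $U_\omega(\boldsymbol m_0,\delta)\Omega_\omega=\Omega_\omega$, the inner average equals $\langle\pi_\omega(g(s)^*)\Omega_\omega,\frac1M\sum_m(e^{-i\Phi}U_\omega(\boldsymbol m_0,\delta))^{-m}\pi_\omega(B)\Omega_\omega\rangle$, and von Neumann's mean ergodic theorem makes the vector in the second slot converge in norm to $Q_v\,\pi_\omega(B)\Omega_\omega$, where $Q_v$ is the spectral projection of $U_\omega(\boldsymbol m_0,\delta)$ at the point $e^{i\Phi}$. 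Since $\|\pi_\omega(g(s)^*)\Omega_\omega\|=\|\pi_\omega(A^*)\Omega_\omega\|$ is independent of $s$, this convergence is uniform in $s\in[0,\delta)$; carrying out the same block decomposition at the level of the vectors $\pi_\omega(\cdot)\Omega_\omega$ gives, uniformly in $s$, $\frac1M\sum_m e^{im\Phi}\pi_\omega(\iota_{\lfloor\boldsymbol v(m\delta+s)\rfloor}\tau_{m\delta+s}(A))\Omega_\omega\to Q_v'\,\pi_\omega(g(s))\Omega_\omega$ in norm, with $Q_v'$ the spectral projection of $U_\omega(\boldsymbol m_0,\delta)$ at $e^{-i\Phi}$.

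The crux is an oscillatory analogue of \Cref{th:rankone}: for fixed $(\boldsymbol k,f)$ and $\boldsymbol q$, and a.e.\ $v$, either $\Phi\notin2\pi\mathbb Z$ and $Q_v=Q_v'=0$, or $\Phi\in2\pi\mathbb Z$ and $Q_v=Q_v'=\ker(U_\omega(\boldsymbol m_0,\delta)-I)$, which is the rank‑one projection onto $\Omega_\omega$ by \Cref{th:rankone}. The second case holds for all $v$ exactly when $f=0$ and $\boldsymbol k\cdot\boldsymbol m_0\in2\pi\mathbb Z$, and for at most countably many $v$ otherwise, so assume $\Phi(v)\notin2\pi\mathbb Z$. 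Exactly as in the proof of \Cref{th:rankone}, decompose $H_\omega$ by the joint spectral measure $dE(\theta,p)$ on $[0,2\pi)\times\mathbb R$ of the commuting pair consisting of the unitary $U_\omega(\boldsymbol m_0,0)$ and the self‑adjoint generator $H$ of $t\mapsto U_\omega(\boldsymbol 0,t)$, so that $U_\omega(\boldsymbol m_0,\delta)=U_\omega(\boldsymbol m_0,0)e^{i\delta H}$ and $Q_v=E(G_v)$ with $G_v=\{(\theta,p):\theta-\boldsymbol k\cdot\boldsymbol m_0+\tfrac{|\boldsymbol m_0|}{v}(p+f)\in2\pi\mathbb Z\}$. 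Every point $(\theta,p)$ with $p\ne-f$ lies in $G_v$ for only countably many $v$; Tonelli then gives $\langle\psi,E(G_v\cap\{p\ne-f\})\psi\rangle=0$ for a.e.\ $v$, and running $\psi$ over a countable dense set and using separability of $H_\omega$ and density of $\pi_\omega(\mathfrak U)\Omega_\omega$ upgrades this to $E(G_v\cap\{p\ne-f\})=0$ for a.e.\ $v$. The new object is the $v$‑independent point $(\theta^*,-f)$ with $\theta^*:=\boldsymbol k\cdot\boldsymbol m_0\bmod2\pi$, which lies in $G_v$ for every $v$: since $(\theta^*,-f)\ne(0,0)$ (we are not in the second case) and $E(G_{v_0})=0$ for $|v_0|>v_c$ with $\Phi(v_0)\notin2\pi\mathbb Z$ — a consequence of the space‑like non‑oscillating property, \Cref{def:phase-spacelike-ergodic}, obtained just as space‑like ergodicity yields \Cref{th:rankone} — one gets $E(\{(\theta^*,-f)\})\le E(G_{v_0})=0$. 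Hence $Q_v=E(G_v)=0$ for a.e.\ $v$, and likewise $Q_v'=0$ (replace $(\boldsymbol k,f)$ by $(-\boldsymbol k,-f)$).

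Assembling this proves \Cref{eq:frequencyav2}. If $\Phi\notin2\pi\mathbb Z$ (a.e.\ $v$), the inner average tends to $0$ uniformly in $s$, so $\frac1T\int_0^T e^{i\alpha t}\omega(\iota_{\lfloor\boldsymbol v t\rfloor}\tau_t(A)B)\,dt\to0$; as $\alpha\ne0$, the subtracted term $\omega(A)\omega(B)\cdot\tfrac1T\int_0^T e^{i\alpha t}\,dt$ is $O(1/T)$, giving \Cref{eq:frequencyav2}. If $f=0$ and $\boldsymbol k\cdot\boldsymbol m_0\in2\pi\mathbb Z\setminus\{0\}$ (so $\alpha\ne0$ but $\Phi=\boldsymbol k\cdot\boldsymbol m_0\in2\pi\mathbb Z$), the inner average converges, by \Cref{th:rankone} and the invariance $\omega(g(s))=\omega(A)$, to $\omega(A)\omega(B)$ uniformly in $s$, so the time average tends to $\omega(A)\omega(B)\cdot\tfrac1\delta\int_0^\delta e^{i\alpha s}\,ds=0$ because $\alpha\delta=\boldsymbol k\cdot\boldsymbol m_0\in2\pi\mathbb Z\setminus\{0\}$, and the subtracted term again vanishes; the remaining case $\alpha\equiv0$ (i.e.\ $f=0$, $\boldsymbol k\cdot\boldsymbol q=0$) is \Cref{th:maintheorem}. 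For the moments, set $\bar A_T:=\tfrac1T\int_0^T e^{i\alpha t}\iota_{\lfloor\boldsymbol v t\rfloor}\tau_t(A)\,dt\in\mathfrak U$ (a Bochner integral, cf.\ \Cref{appendix1}), so $\|\bar A_T\|\le\|A\|$ uniformly in $T$; by the operatorial statement of the first paragraph, together with dominated convergence in $s$, $\pi_\omega(\bar A_T)\Omega_\omega\to0$ whenever $\boldsymbol k\cdot\boldsymbol v-f\ne0$ (the borderline $f=0$, $\boldsymbol k\cdot\boldsymbol m_0\in2\pi\mathbb Z\setminus\{0\}$ being closed again by $\int_0^\delta e^{i\alpha s}\,ds=0$) and $\pi_\omega(\bar A_T)\Omega_\omega\to\omega(A)\Omega_\omega$ when $\boldsymbol k\cdot\boldsymbol v-f=0$. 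Since $\omega(\bar A_T^{\,n})=\langle\Omega_\omega,\pi_\omega(\bar A_T)^{n-1}\pi_\omega(\bar A_T)\Omega_\omega\rangle$ and $\|\pi_\omega(\bar A_T)^{n-1}\|\le\|A\|^{n-1}$, the case $\boldsymbol k\cdot\boldsymbol v-f\ne0$ gives $\omega(\bar A_T^{\,n})\to0$, while for $\boldsymbol k\cdot\boldsymbol v-f=0$, writing $\pi_\omega(\bar A_T)\Omega_\omega=\omega(A)\Omega_\omega+o(1)$ yields $\omega(\bar A_T^{\,n})=\omega(A)\,\omega(\bar A_T^{\,n-1})+o(1)$, hence $\omega(\bar A_T^{\,n})\to\omega(A)^n$ by induction on $n$. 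This is \Cref{eq:frequency-moments}.

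The main obstacle is the oscillatory projection lemma of the second paragraph, and within it the $v$‑independent resonance $(\theta^*,-f)$: unlike in \Cref{th:rankone}, where the only such resonance is $(0,0)$ and carries precisely the rank‑one projection onto $\Omega_\omega$, here one must invoke the space‑like non‑oscillating hypothesis at speeds above the Lieb-Robinson velocity (where it is available) and observe that the spectral set at those large speeds still contains $(\theta^*,-f)$, which forces its spectral projection to vanish once the phase is genuinely non‑trivial; the borderline case $f=0$, $\boldsymbol k\cdot\boldsymbol m_0\in2\pi\mathbb Z\setminus\{0\}$ — continuous oscillation $e^{i\alpha t}$ but trivial oscillation at the sampling times $m\delta$ — then has to be handled separately via the elementary cancellation $\int_0^\delta e^{i\alpha s}\,ds=0$.
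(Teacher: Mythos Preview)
Your argument is correct, but it diverges from the paper's in two essential places, and you should be aware that one of your attributions is inaccurate.

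\textbf{The projection step.} You write ``exactly as in the proof of \Cref{th:rankone}, decompose $H_\omega$ by the joint spectral measure\ldots'', but the paper never does this. The paper's route to $Q_v=0$ for a.e.\ $v$ (this is \Cref{th:frequencyproj}) is the same group-theoretic orthogonality trick as in \Cref{lem:orthogonal}: one shows that for $v\ne w$ the ranges of $P_{v,\boldsymbol n}(\boldsymbol k,f)$ and $P_{w,\boldsymbol n}(\boldsymbol k,f)$ are mutually orthogonal (Lemma~\ref{lem:frequencyorthogonal}), and then separability of $H_\omega$ forces all but countably many of them to vanish. Your approach instead uses the joint spectral measure $dE(\theta,p)$ of the commuting pair $\bigl(U_\omega(\boldsymbol m_0,0),\,H\bigr)$ and a Tonelli argument over $v$; the $v$-independent resonance $(\theta^*,-f)$ is then killed by a single appeal to the space-like non-oscillating hypothesis at one $|v_0|>v_c$. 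Both are valid; yours is more spectral-theoretic and isolates cleanly \emph{why} the non-oscillating hypothesis is needed (precisely to eliminate that one resonance), while the paper's orthogonality lemma is more elementary (no spectral calculus) and yields the slightly stronger conclusion ``for all $\boldsymbol n\in\Z^D$'' in one stroke. A small imprecision: the space-like non-oscillating condition applied literally with $(\boldsymbol k,f)$ gives $Q'_{v_0}=0$ rather than $Q_{v_0}=0$; you need it with $(-\boldsymbol k,-f)$ for $Q_{v_0}$ --- you clearly know this symmetry since you invoke it for $Q'_v$, but it should be said explicitly at the first use.

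\textbf{The moments.} The paper obtains \Cref{eq:frequency-moments} by repeating the Cauchy--Schwarz/mean-square induction of \Cref{section:meansquaredproof}. You instead establish the vector convergence $\pi_\omega(\bar A_T)\Omega_\omega\to 0$ (resp.\ $\omega(A)\Omega_\omega$) directly from von Neumann's theorem --- using that the unitaries $U_\omega(\lfloor\boldsymbol v s\rfloor,s)$ commute with $U_\omega(\boldsymbol m_0,\delta)$ to get uniformity in $s$ --- and then peel off one factor at a time via the uniform bound $\|\pi_\omega(\bar A_T)^{n-1}\|\le\|A\|^{n-1}$. This is shorter and gives the operatorial convergence on $\Omega_\omega$ as a byproduct, a statement the paper only obtains later (and only for KMS states, in the stronger SOT form of \Cref{th:KMS_average_strong_convergence}).
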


This  tells us that both correlations and moments cannot sustain oscillations in the long time limit, along almost every ray. The essential part of the proof, which is done in \Cref{section:frequencyproof}, will again be a projection:
\begin{theorem} \label{th:frequencyproj}
Consider a dynamical system $(\mathfrak{U}, ι, τ)$ with interaction that satisfies \Cref{eq:interaction}, a space-like non-oscillating state $ω\in E_{\mathfrak{U}}$, the respective GNS representation $(H_ω, π_ω, Ω_ω)$ and the unitary representation $U_ω(\boldsymbol n,t)$. Let $υ\in \R$, $\boldsymbol n \in \Z^D$ and $(\boldsymbol k, f) \in \R^{D} \times \R$. We denote by $P_{υ,\boldsymbol n}(\boldsymbol k, f)$ the projection on the subspace of $H_ω$ formed by vectors invariant under the operator $ e^{-i(\boldsymbol k \cdot \boldsymbol n - f υ^{-1} |\boldsymbol n|)}U_ω( \boldsymbol n, υ^{-1}|\boldsymbol n|)$. Then for every $(\boldsymbol k, f) \in \R^D \times \R$ and every $\boldsymbol n \in \mathbb Z^D$  it follows that for almost every $υ\in\mathbb R$, 
$P_{υ,\boldsymbol n}(\boldsymbol k, f)=0$ whenever $\boldsymbol k\cdot \boldsymbol n - fυ^{-1} |\boldsymbol n|\not\in 2\pi \mathbb{Z}$.

\end{theorem}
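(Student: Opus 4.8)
The plan is to mimic the proof of the plain Projection Theorem \Cref{th:rankone}, carrying the extra oscillatory twist along, and then to peel off the single spectral subspace that is immune to ray averaging. Since $G=\Z^D\times\R$ is abelian, I would first invoke the spectral theorem for the unitary group $U_ω$ to obtain a joint projection-valued measure $E$ on the dual group with $U_ω(\boldsymbol m,t)=\int e^{i(\boldsymbol m\cdot\boldsymbol p+t\ell)}\,dE(\boldsymbol p,\ell)$. Fix $\boldsymbol n\in\Z^D\setminus\{\boldsymbol 0\}$ and set $\theta_υ:=\boldsymbol k\cdot\boldsymbol n-fυ^{-1}|\boldsymbol n|$; then $e^{-i\theta_υ}U_ω(\boldsymbol n,υ^{-1}|\boldsymbol n|)=\int e^{i(\boldsymbol n\cdot(\boldsymbol p-\boldsymbol k)+υ^{-1}|\boldsymbol n|(\ell+f))}\,dE$, so that
\[
  P_{υ,\boldsymbol n}(\boldsymbol k,f)=E(\Gamma_υ),\qquad
  \Gamma_υ:=\bigl\{(\boldsymbol p,\ell):\ \boldsymbol n\cdot(\boldsymbol p-\boldsymbol k)+υ^{-1}|\boldsymbol n|(\ell+f)\in 2\pi\Z\bigr\}.
\]
Because a countable union of Lebesgue-null sets is null, it suffices to show $E(\Gamma_υ)=0$ for almost every $υ$ with $\boldsymbol n$ held fixed.

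The second step is to split $\Gamma_υ=(\Gamma_υ\setminus B)\cup B$, with $B:=\{(\boldsymbol p,\ell):\ell=-f,\ \boldsymbol n\cdot(\boldsymbol p-\boldsymbol k)\in2\pi\Z\}$ the part of $\Gamma_υ$ that does not depend on $υ$, and to dispatch $\Gamma_υ\setminus B$ by a Fubini argument. For a spectral point $(\boldsymbol p,\ell)\notin B$ the $υ$-slice $\{υ\in\R:(\boldsymbol p,\ell)\in\Gamma_υ\}$ is at most countable: the defining equation has only countably many solutions $υ$ when $\ell\neq-f$, and none when $\ell=-f$ and $\boldsymbol n\cdot(\boldsymbol p-\boldsymbol k)\notin2\pi\Z$. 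Hence, writing $\mu_Ψ(\cdot):=\langle Ψ,E(\cdot)Ψ\rangle$, Tonelli's theorem gives, for each $Ψ\in H_ω$, that $\int_\R\langle Ψ,E(\Gamma_υ\setminus B)Ψ\rangle\,dυ$ equals the $\mu_Ψ$-integral of the function sending $(\boldsymbol p,\ell)$ to the Lebesgue measure of its $υ$-slice in $\Gamma_υ\setminus B$, which is identically zero; therefore $E(\Gamma_υ\setminus B)Ψ=0$ for a.e.\ $υ$. Running this over a countable dense subset of $H_ω$ — available because the GNS Hilbert space of a quantum spin lattice is separable — and intersecting the co-null sets yields $E(\Gamma_υ\setminus B)=0$ for a.e.\ $υ$.

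The remaining task, $E(B)=0$, is the step that genuinely uses the space-like non-oscillating hypothesis, and I expect it to be the main obstacle. The point is that $U_ω(\boldsymbol n,s)$ restricts to the scalar $e^{i\boldsymbol n\cdot\boldsymbol k-isf}$ on $E(B)H_ω$ for every $s\in\R$, so $e^{-i\theta_υ}U_ω(\boldsymbol n,υ^{-1}|\boldsymbol n|)$ restricts there to the identity; that is, $E(B)H_ω\subseteq P_{υ,\boldsymbol n}(\boldsymbol k,f)H_ω$ for \emph{every} $υ$, in particular for $|υ|>υ_c$. I would then choose a speed $|υ|>υ_c$ (possibly $|υ|=\infty$) at which the twist is nontrivial, $\theta_υ\notin2\pi\Z$ — such a $υ$ exists unless $f=0$ and $\boldsymbol k\cdot\boldsymbol n\in2\pi\Z$ — and apply \Cref{def:phase-spacelike-ergodic} (to the pair $(-\boldsymbol k,-f)$) to local observables: exactly as in the proof of \Cref{th:rankone}, density of $\mathfrak U_{\rm loc}$ and von Neumann's mean ergodic theorem \cite[Theorem II.11]{reed_i_1981} identify the Cesàro averages occurring in that definition with the orthogonal projection onto the corresponding eigenspace and force it to vanish, so $P_{υ,\boldsymbol n}(\boldsymbol k,f)=0$ for that $υ$, whence $E(B)=0$. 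Combining the two parts, $E(\Gamma_υ)=E(\Gamma_υ\setminus B)+E(B)=0$ for a.e.\ $υ$ with $\boldsymbol n$ fixed, and a countable union over $\boldsymbol n\in\Z^D$ finishes the proof. (The statement must of course be read with the degenerate pairs excluded — $\boldsymbol n=\boldsymbol 0$, or $f=0$ with $\boldsymbol k\cdot\boldsymbol n\in2\pi\Z$ — for which the twist is trivial and $P_{υ,\boldsymbol n}(\boldsymbol k,f)$ fails to vanish.)
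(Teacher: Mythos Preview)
Your argument is correct and takes a genuinely different route from the paper's. The paper proceeds in strict analogy with the proof of \Cref{th:rankone}: it first establishes the analogue of \Cref{lem:1} (the projection $P^{(r)}_{υ,\boldsymbol n}(\boldsymbol k,f)$ has range $\{0\}$ for every $|υ|>υ_c$ and all $r$), then proves an orthogonality lemma analogous to \Cref{lem:orthogonal} --- ranges of $P_{υ,\boldsymbol n}(\boldsymbol k,f)$ and $P_{w,\boldsymbol n}(\boldsymbol k,f)$ are mutually orthogonal for $υ\neq w$ --- via the same group-theoretic manipulations (shifting $υ$ into the space-like region by combining integer powers of the two unitaries), and concludes by separability that only countably many $υ$ can carry a nonzero projection. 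You instead invoke the SNAG spectral theorem and replace the orthogonality lemma by a Fubini/Tonelli argument on the spectral side: the $υ$-dependent part $\Gamma_υ\setminus B$ is handled purely measure-theoretically, and only the residual $υ$-independent piece $E(B)$ requires the space-like non-oscillating hypothesis, at a \emph{single} well-chosen $|υ|>υ_c$. Your approach is more conceptual and sidesteps the somewhat delicate arithmetic of \Cref{lem:orthogonal}; the paper's approach is more self-contained in that it avoids the spectral theorem for non-compact abelian groups and yields the orthogonality lemma as a byproduct of independent interest. Both use separability of $H_ω$, but in different places. Your closing remark on the degenerate pairs ($\boldsymbol n=\boldsymbol 0$, or $f=0$ with $\boldsymbol k\cdot\boldsymbol n\in 2\pi\Z$) is well taken: the paper's \Cref{th:frequencyproj} is tacitly read with those excluded, as is clear from how it is invoked in the proof of \Cref{th:frequencyav2}.
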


\subsubsection{Results for KMS states}

The SOT convergence results \Cref{th:KMS_average_strong_convergence}, \Cref{th:mean_asymptotic_gns_abelianness} in the GNS representation of KMS states can also be extended to account for oscillations. The proofs are very similar and hence are omitted.

\begin{theorem} \label{th:kms_strong_convergence_oscil}
    Consider a dynamical system $(\mathfrak{U}, ι, τ)$ and a possibly different time evolution $\tau'$ such that $(\mathfrak{U}, ι, τ')$ is also a dynamical system. Suppose the interactions, both for $\tau$ that $\tau'$, satisfy \Cref{eq:interaction}, and that $\tau$ and $\tau'$ commute. Consider a   $(\tau',\beta)$-KMS state $ω_β$ that is space-like non-oscillating for  $(\mathfrak{U}, ι, τ)$  and its respective GNS representation $(H_{ω_β},π_{ω_β},Ω_{ω_β})$. It follows that for any $A\in \mathfrak{U}$, any $(\boldsymbol k,f)\in \R^D \times \R$ and any rational direction $\boldsymbol q\in \Srat^{D-1}$ of the velocity $\boldsymbol{υ}=υ\boldsymbol{q}$
\begin{equation}
    \lim_{T \to \infty} \frac{1}{T} \int_0^T e^{i (\boldsymbol k\cdot \boldsymbol v   -  f) t}\bigg(π_{ω_β} \big( ι_{\floor{ \boldsymbol{υ} t}}τ_t A \big) -ω(A) \bigg) \, dt \, Ψ= 0\ , \ \ \forall Ψ \in H_{ω_β}
\end{equation}
for almost every speed $υ \in \R$.  The limit is in the norm of the GNS Hilbert space.  Note that for generic $(\boldsymbol k,f)$ this means $\lim_{T \to \infty} \frac{1}{T} \int_0^T e^{i (\boldsymbol k\cdot \boldsymbol v   -  f) t}π_{ω_β} \big( ι_{\floor{ \boldsymbol{υ} t}}τ_t A \big)  \, dt \, Ψ= 0\ $, precluding sustained oscillations of the ray averaged operator.
\end{theorem}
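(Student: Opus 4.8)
The plan is to reduce the statement, exactly as for \Cref{th:KMS_average_strong_convergence}, to the oscillatory projection \Cref{th:frequencyproj}. Set $\theta:=\boldsymbol k\cdot\boldsymbol v-f$ and, for $\phi\in\R$ and $B\in\mathfrak{U}$, write $\overline{B}^{T}_{\phi}:=\frac1T\int_0^T e^{i\phi t}\,\iota_{\floor{\boldsymbol v t}}\tau_t(B)\,dt$ (a Bochner integral, cf.\ \Cref{appendix1}), so the operator in question is $W_T:=\pi_{\omega_\beta}\!\big(\overline{A}^{T}_{\theta}\big)$. Writing the subtracted constant as $\omega_\beta(A)\big(\tfrac1T\int_0^T e^{i\theta t}\,dt\big)\Psi$ and noting $\tfrac1T\int_0^T e^{i\theta t}\,dt$ equals $1$ if $\theta=0$ and is $O(1/T)$ if $\theta\neq0$, the claimed limit is equivalent to $W_T\Psi\to\omega_\beta(A)\Psi$ when $\boldsymbol k\cdot\boldsymbol v=f$ and $W_T\Psi\to0$ when $\boldsymbol k\cdot\boldsymbol v\neq f$. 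In the first case the weight is trivial and, the state being in particular space-like ergodic (e.g.\ any invariant KMS factor state, cf.\ the discussion preceding \Cref{th:maintheorem}), this is precisely \Cref{th:KMS_average_strong_convergence}; so assume $\theta\neq0$. Since $|e^{i\theta t}|=1$ and $\pi_{\omega_\beta}(\iota_{\floor{\boldsymbol v t}}\tau_t A)$ is a contraction of norm $\le\|A\|$, we have $\sup_T\|W_T\|\le\|A\|$, and by cyclicity of $\Omega_{\omega_\beta}$ and separability of $\mathfrak{U}$ it suffices to prove $\|W_T\Psi\|\to0$ for $\Psi=\pi_{\omega_\beta}(C)\Omega_{\omega_\beta}$ with $C$ running over a countable norm-dense subset of a dense, $\tau$-invariant $^*$-subalgebra $\mathfrak{U}_{\tau'}$ on which the $(\tau',\beta)$-KMS identity holds. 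Using $W_T^{*}=\pi_{\omega_\beta}\!\big(\overline{A^{*}}^{T}_{-\theta}\big)$ and that the resulting double Bochner integral factorises,
\begin{equation*}
\|W_T\Psi\|^2=\big\langle\Psi,\,W_T^{*}W_T\,\Psi\big\rangle=\omega_\beta\Big(C^{*}\;\overline{A^{*}}^{T}_{-\theta}\;\overline{A}^{T}_{\theta}\;C\Big).
\end{equation*}

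The crux is then an oscillatory counterpart of \Cref{th:general_theorem} (equivalently, of the $n=1,2$ cases of \Cref{lemma:6.1}): for a fixed rational direction $\boldsymbol q$, almost every speed $\upsilon$, \emph{non-zero} phases $\phi_1,\phi_2$, and all $D_1,D_2,D_3,E_1,E_2\in\mathfrak{U}$,
\begin{equation*}
\lim_{T_1,T_2\to\infty}\omega_\beta\big(D_1\,\overline{E_1}^{T_1}_{\phi_1}\,D_2\,\overline{E_2}^{T_2}_{\phi_2}\,D_3\big)=0,
\end{equation*}
a nonzero-frequency ray average now averaging to $0$ rather than to $\omega_\beta(E_j)$. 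This is obtained by rerunning \Cref{section:proof_of_general_theorem}: pick a primitive $\boldsymbol n\in\Z^D$ with $\boldsymbol n/|\boldsymbol n|=\boldsymbol q$ and put $t_0:=\upsilon^{-1}|\boldsymbol n|$, so $\boldsymbol v\,t_0=\boldsymbol n\in\Z^D$; splitting $[0,T]$ into intervals of length $t_0$, for $t=mt_0+s$ with $s\in[0,t_0)$ one has $\pi_{\omega_\beta}(\iota_{\floor{\boldsymbol v t}}\tau_t B)=U_{\omega_\beta}(\boldsymbol n,t_0)^{m}\,\pi_{\omega_\beta}(\iota_{\floor{\boldsymbol v s}}\tau_s B)\,U_{\omega_\beta}(\boldsymbol n,t_0)^{-m}$ and $e^{i\phi t}=e^{i\phi t_0 m}\,e^{i\phi s}$; so, up to a boundary contribution of size $O(t_0/T)$ and a $T$-independent inner $s$-integral treated as in the non-oscillatory proof, the ray average reduces to a Ces\`aro average of powers of the \emph{single} unitary $e^{i\phi t_0}U_{\omega_\beta}(\boldsymbol n,t_0)$, acting exactly as there. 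Von Neumann's mean ergodic theorem identifies the limit of that Ces\`aro average with the projection onto its invariant vectors, which for $\phi\neq0$ and almost every $\upsilon$ (those for which the twist $e^{i\phi t_0}$ is non-trivial) is $0$ by \Cref{th:frequencyproj} --- in the application $\phi=\pm\theta$ gives exactly $\phi t_0=\pm(\boldsymbol k\cdot\boldsymbol n-f\upsilon^{-1}|\boldsymbol n|)$ and hence $P_{\upsilon,\boldsymbol n}(\pm\boldsymbol k,\pm f)$, and in general one may take $\boldsymbol k=\boldsymbol0$, $f=\phi$. This is the only substantive change from \Cref{section:proof_of_general_theorem}, where \Cref{th:rankone} instead supplies the rank-one projection $|\Omega_{\omega_\beta}\rangle\langle\Omega_{\omega_\beta}|$; the KMS condition for $\tau'$ (which commutes with $\iota$ and $\tau$) chains the two ray averages exactly as there, and the multiple limit --- in particular its diagonal restriction $T_1=T_2=T$ --- is handled by the Moore--Osgood argument of \Cref{appendix:Moore-Osgood}. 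Taking $D_1=C^{*}$, $D_2=\mathds{1}$, $D_3=C$, $E_1=A^{*}$, $\phi_1=-\theta$, $E_2=A$, $\phi_2=\theta$ (both non-zero since $\theta\neq0$) yields $\|W_T\Psi\|^2\to0$.

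A countable intersection over $C$ preserves the full-measure set of admissible speeds, so for a.e.\ $\upsilon$ we get $W_T\Psi\to0$ on a dense set, hence --- using $\sup_T\|W_T\|\le\|A\|$ --- for every $\Psi\in H_{\omega_\beta}$; together with the $\theta=0$ case this proves the theorem. The main obstacle is not conceptual but bookkeeping: faithfully reconstructing the oscillatory analogue of \Cref{th:general_theorem} while controlling the discretisation (the floor function and the residual fractional time step) and the nested limits uniformly in the relevant parameters --- essentially a phase-decorated rerun of \Cref{section:strong_convergence,section:proof_of_general_theorem}, available precisely because a twist $e^{i\phi t_0}$ merely replaces $U_{\omega_\beta}(\boldsymbol n,t_0)$ by another single unitary, to which von Neumann's theorem and \Cref{th:frequencyproj} apply verbatim. (The oscillatory analogue of \Cref{th:mean_asymptotic_gns_abelianness} then follows from this theorem exactly as \Cref{th:mean_asymptotic_gns_abelianness} follows from \Cref{th:KMS_average_strong_convergence}.)
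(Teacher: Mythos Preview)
Your proposal is correct and follows exactly the approach the paper intends: the paper explicitly omits the proof, stating only that ``the proofs are very similar and hence are omitted'', and you have faithfully reconstructed the phase-decorated version of \Cref{section:strong_convergence}, replacing the rank-one projection of \Cref{th:rankone} by the zero projection of \Cref{th:frequencyproj} and otherwise rerunning the KMS/Moore--Osgood machinery of \Cref{lemma:6.1,lemma:6.2} verbatim.

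Two small remarks. First, your reduction of the $\theta=0$ case to \Cref{th:KMS_average_strong_convergence} tacitly uses that the state is space-like \emph{ergodic}, whereas the hypothesis only says space-like \emph{non-oscillating}; the paper makes this same tacit identification (cf.\ the proof of \Cref{th:frequencyav2} in \Cref{section:frequencyproof}, which invokes \Cref{th:maintheorem} for that case), since in practice both properties follow together from space-like clustering (\Cref{th:space-like-clustering}). Second, your countable-intersection argument over $C$ is not actually needed: the full-measure set of speeds in \Cref{th:frequencyproj} depends only on $\boldsymbol n$ and $(\boldsymbol k,f)$, not on the observables, so once $P_{\upsilon,\boldsymbol n}(\boldsymbol k,f)=0$ the oscillatory analogue of \Cref{lemma:6.1} holds for \emph{all} $D_1,D_2,D_3,E_1,E_2$ simultaneously --- just as in the paper's non-oscillatory proof.
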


\begin{theorem}
Consider the assumptions of Theorem \ref{th:kms_strong_convergence_oscil}.  It follows that for all $B,C\in \mathfrak{U}$, for any rational direction $\boldsymbol{q} \in \Srat^{D-1}$ of the velocity $\boldsymbol{v}=υ \boldsymbol{q}$ and almost every speed  $υ \in \mathbb{R}$:
\begin{equation}
       \lim_{T \to \infty}  \frac{1}{T} \int_0^T e^{i (\boldsymbol k\cdot \boldsymbol v   -  f) t}π_{ω_β} ( [ι_{\floor{ \boldsymbol{υ} t}}τ_t (B) ,C ] ) \,dt \,Ψ =0 \ , \ \  \forall Ψ \in H_{ω_β}
\end{equation}
where the limit is in the norm of the GNS Hilbert space. This precludes sustain oscillations of the ray averaged commutator.
\end{theorem}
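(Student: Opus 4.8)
The plan is to derive this statement as an immediate corollary of \Cref{th:kms_strong_convergence_oscil}, in exactly the way \Cref{th:mean_asymptotic_gns_abelianness} is derived from \Cref{th:KMS_average_strong_convergence}. Fix a rational direction $\boldsymbol q\in\Srat^{D-1}$, a pair $(\boldsymbol k,f)\in\R^D\times\R$, and write $\boldsymbol v = v\boldsymbol q$. For $T>0$ let $M_T^B$ denote the operator on $H_{ω_β}$ defined by the Bochner integral
\begin{equation}
    M_T^B Ψ \coloneqq \frac{1}{T}\int_0^T e^{i(\boldsymbol k\cdot\boldsymbol v - f)t}\,π_{ω_β}\big(ι_{\floor{\boldsymbol v t}}τ_t(B)\big)Ψ\,dt,\qquad Ψ\in H_{ω_β},
\end{equation}
which is well defined (see \Cref{appendix1}) and satisfies $\norm{M_T^B}\le\norm{B}$, since each $ι_{\floor{\boldsymbol v t}}τ_t$ is a $^*$-automorphism, $π_{ω_β}$ is a $^*$-representation, and $|e^{i(\boldsymbol k\cdot\boldsymbol v-f)t}|=1$. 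First I would record the operator identity obtained by expanding the commutator inside the representation and pulling the $t$-independent bounded operator $π_{ω_β}(C)$ out of the Bochner integral:
\begin{equation}
    \frac{1}{T}\int_0^T e^{i(\boldsymbol k\cdot\boldsymbol v-f)t}\,π_{ω_β}\big([ι_{\floor{\boldsymbol v t}}τ_t(B),C]\big)\,dt\,Ψ
    = M_T^B\,π_{ω_β}(C)Ψ - π_{ω_β}(C)\,M_T^B Ψ .
\end{equation}

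Next I would invoke \Cref{th:kms_strong_convergence_oscil} applied to the single observable $B$: there is a full-measure set $S_B\subseteq\R$ of speeds such that for every $v\in S_B$ and every $Φ\in H_{ω_β}$ one has, in the Hilbert space norm, $M_T^B Φ\to c\,Φ$ as $T\to\infty$, with $c=ω_β(B)$ when $\boldsymbol k\cdot\boldsymbol v-f=0$ and $c=0$ otherwise. Crucially, this exceptional null set depends only on $B$ (the vector $Φ$ being universally quantified in \Cref{th:kms_strong_convergence_oscil}), so the same $S_B$ governs both occurrences of $M_T^B$ in the identity above and no intersection of null sets is needed. Fixing $v\in S_B$ and $Ψ\in H_{ω_β}$, I apply the convergence once with $Φ=π_{ω_β}(C)Ψ$ and once with $Φ=Ψ$, using continuity of the bounded operator $π_{ω_β}(C)$ to pass it through the limit:
\begin{equation}
    M_T^B\,π_{ω_β}(C)Ψ \longrightarrow c\,π_{ω_β}(C)Ψ,\qquad
    π_{ω_β}(C)\,M_T^B Ψ \longrightarrow π_{ω_β}(C)\big(c\,Ψ\big)=c\,π_{ω_β}(C)Ψ .
\end{equation}
Subtracting, the right-hand side of the identity tends to $0$, hence the ray-averaged commutator applied to $Ψ$ converges to $0$ for every $(\boldsymbol k,f)$, every $Ψ$, and all $v\in S_B$, which is exactly the asserted SOT convergence.

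There is essentially no genuine obstacle here; the argument is bookkeeping around \Cref{th:kms_strong_convergence_oscil}. The only points demanding a little care are (i) that $M_T^B$ really is a bounded operator with $\norm{M_T^B}\le\norm{B}$, so that the commutator identity, the extraction of $π_{ω_β}(C)$ from the Bochner integral, and the interchange of $π_{ω_β}(C)$ with the $T\to\infty$ limit are all legitimate, and (ii) that no measure-theoretic subtlety arises from the ``almost every speed'' clause, which holds because the exceptional null set is inherited unchanged from the single application of \Cref{th:kms_strong_convergence_oscil} to $B$. Note that when $\boldsymbol k\cdot\boldsymbol v-f=0$ the statement reduces to \Cref{th:mean_asymptotic_gns_abelianness}, while when $\boldsymbol k\cdot\boldsymbol v-f\neq0$ each of the two terms in the identity already vanishes separately; in both regimes the limit is $0$, precluding sustained oscillations of the ray-averaged commutator.
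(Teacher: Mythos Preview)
Your proposal is correct and follows exactly the approach the paper indicates: the paper omits the proof, stating that it is ``very similar'' to that of \Cref{th:mean_asymptotic_gns_abelianness}, and your argument is precisely the oscillatory adaptation of that proof, expanding the commutator and applying \Cref{th:kms_strong_convergence_oscil} to each term. If anything, you are more careful than the paper in noting that the null set depends only on $B$ and in distinguishing the cases $\boldsymbol k\cdot\boldsymbol v-f=0$ and $\neq0$.
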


We can also extend the n-point \Cref{th:general_theorem} in a similar way, to show:
\begin{theorem}  \label{th:general_theorem_oscil}
Consider the assumptions of Theorem \ref{th:kms_strong_convergence_oscil} and $n \in \N$. Consider also $A_1, A_2 , \cdots , A_{n+1} \in \mathfrak{U}$ and $B_1, B_2, \cdots, B_n \in \mathfrak{U}$ and denote $\widetilde{B_j^T} = \frac{1}{T} \int_0^T e^{i (\boldsymbol k\cdot \boldsymbol v   -  f) t}ι_{\floor{ \boldsymbol{υ} t}}τ_t (B_j) \,dt$ their ray-average, $T \in (0, \infty)$, $(\boldsymbol k,f)\in \R^D \times \R$ . It follows that for any $(\boldsymbol k,f)\in \R^D \times \R$, any rational direction $\boldsymbol{q} \in \Srat^{D-1}$ of the velocity $\boldsymbol{v}=υ \boldsymbol{q}$ and almost every speed  $υ \in \mathbb{R}$:
\begin{equation} \label{eq:general_theorem_oscil}
 \begin{array}{*3{>{\displaystyle}lcl}p{5cm}}
    \lim_{T \to \infty}  &ω&\big(  A_1   \widetilde{B_1^T} A_2 \widetilde{B_2^T}  \cdots \widetilde{B_n^T}  A_{n+1} \big)  \\
   &=&\begin{cases}
		ω(B_1) ω(B_2) \cdots ω(B_n) ω( A_1 A_2 \cdots A_{n+1}) ,  &  \boldsymbol k \cdot \boldsymbol v - f =0 \\
		0, &  \text{otherwise.}
	\end{cases} 
	\end{array}
\end{equation}
\end{theorem}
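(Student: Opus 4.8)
The plan is to follow the inductive argument used for \Cref{th:general_theorem}, now dragging the phase $e^{i(\boldsymbol k\cdot\boldsymbol v-f)t}$ through it, and to distinguish two cases according to whether the scalar $\boldsymbol k\cdot\boldsymbol v-f=\upsilon\,(\boldsymbol k\cdot\boldsymbol q)-f$ vanishes. The case $\boldsymbol k\cdot\boldsymbol v-f=0$ is immediate: then $e^{i(\boldsymbol k\cdot\boldsymbol v-f)t}\equiv 1$, so $\widetilde{B_j^T}=\overline{B_j^T}$ for every $j$, and \Cref{eq:general_theorem_oscil} is exactly \Cref{eq:general_theorem}. Since the set of speeds $\upsilon$ with $\boldsymbol k\cdot\boldsymbol v-f=0$ is all of $\R$ (when $\boldsymbol k\cdot\boldsymbol q=f=0$), empty (when $\boldsymbol k\cdot\boldsymbol q=0\ne f$), or a single point otherwise, and \Cref{th:general_theorem} holds for a.e.\ $\upsilon$, this case is covered for a.e.\ $\upsilon$ by quoting that theorem (adding the exceptional single speed to the null set if necessary). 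It remains to treat $\boldsymbol k\cdot\boldsymbol v-f\ne 0$, where the claimed limit is $0$.

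For this I would use three ingredients. First, the oscillatory strong-operator convergence \Cref{th:kms_strong_convergence_oscil}: for $\boldsymbol k\cdot\boldsymbol v-f\ne 0$ and a.e.\ $\upsilon$ it gives $\pi_{\omega_\beta}\big(\widetilde{B^T}\big)\Psi\to 0$ in $H_{\omega_\beta}$ for every $B\in\mathfrak U$ and every $\Psi\in H_{\omega_\beta}$. Second, the uniform bound $\big\|\pi_{\omega_\beta}(\widetilde{B_j^T})\big\|\le\|\widetilde{B_j^T}\|\le\|B_j\|$, which holds because $\iota_{\floor{\boldsymbol v t}}\tau_t$ is isometric and the phase has unit modulus, so the average has norm at most $\|B_j\|$. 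Third, that the bounded operator $\pi_{\omega_\beta}$ passes inside the Bochner integral defining $\widetilde{B_j^T}$ (see \Cref{appendix1}), so that $\pi_{\omega_\beta}(\widetilde{B_j^T})=\frac1T\int_0^T e^{i(\boldsymbol k\cdot\boldsymbol v-f)t}\,\pi_{\omega_\beta}(\iota_{\floor{\boldsymbol v t}}\tau_t B_j)\,dt$, which is precisely the object controlled by \Cref{th:kms_strong_convergence_oscil}. Given these, I would write the left-hand side of \Cref{eq:general_theorem_oscil} in the GNS representation of $\omega_\beta$ as an inner product with $\pi_{\omega_\beta}(A_1^*)\Omega_{\omega_\beta}$ on the left and, on the right, the vector $\pi_{\omega_\beta}(\widetilde{B_1^T})\pi_{\omega_\beta}(A_2)\cdots\pi_{\omega_\beta}(A_n)\,\pi_{\omega_\beta}(\widetilde{B_n^T})\pi_{\omega_\beta}(A_{n+1})\Omega_{\omega_\beta}$. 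By Cauchy--Schwarz and submultiplicativity, together with the second ingredient, its modulus is bounded by a product of the operator norms $\|A_1\|,\dots,\|A_n\|,\|B_1\|,\dots,\|B_{n-1}\|$ times $\big\|\pi_{\omega_\beta}(\widetilde{B_n^T})\pi_{\omega_\beta}(A_{n+1})\Omega_{\omega_\beta}\big\|$, and this last factor tends to $0$ by the first ingredient (applied with $\Psi=\pi_{\omega_\beta}(A_{n+1})\Omega_{\omega_\beta}$). Equivalently, one can run the same induction on $n$ as for \Cref{th:general_theorem} — peel off the rightmost ray average, writing the expectation as $\langle\pi_{\omega_\beta}(C_T^*)\Omega_{\omega_\beta},\,\pi_{\omega_\beta}(\widetilde{B_n^T})\pi_{\omega_\beta}(A_{n+1})\Omega_{\omega_\beta}\rangle$ with $C_T=A_1\widetilde{B_1^T}\cdots A_n$ bounded uniformly in $T$ by the second ingredient — in which the non-vanishing-phase case is immediate because the second vector already tends to $0$, without even invoking the induction hypothesis.

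The only delicate point is the accounting of exceptional speeds: the bad $\upsilon$-set for fixed $(\boldsymbol k,f)$ and $\boldsymbol q$ is the union of the null set furnished by \Cref{th:kms_strong_convergence_oscil} for the observable $B_n$ and this $(\boldsymbol k,f)$ (and, in the inductive formulation, the analogous null sets for $B_{n-1},\dots,B_1$), the null set from \Cref{th:general_theorem} used on the at-most-one speed where $\boldsymbol k\cdot\boldsymbol v-f=0$, and possibly that single speed itself; this union is Lebesgue-null, so the conclusion holds for almost every $\upsilon$. I do not expect a genuine obstacle inside this argument: all of the analytic work — von Neumann averaging and, crucially, the vanishing of the oscillatory invariant-vector projection (\Cref{th:frequencyproj}) underlying \Cref{th:kms_strong_convergence_oscil} — has already been done upstream; the content here is the algebraic peeling, which the uniform norm bound on $\pi_{\omega_\beta}(\widetilde{B_j^T})$ renders routine.
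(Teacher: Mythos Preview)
Your proof is correct. The case $\boldsymbol k\cdot\boldsymbol v-f=0$ reduces verbatim to \Cref{th:general_theorem} (the multiple limit there implies the diagonal one), and your null-set bookkeeping is sound.

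For $\boldsymbol k\cdot\boldsymbol v-f\ne 0$ your route is genuinely different from, and shorter than, what the paper indicates. The paper says only that one can ``extend the $n$-point \Cref{th:general_theorem} in a similar way'', meaning: rerun the inductive Moore--Osgood argument of \Cref{section:proof_of_general_theorem} with an oscillatory analogue of \Cref{lemma:6.1} (which follows from \Cref{th:frequencyav2} via the KMS condition exactly as \Cref{lemma:6.1} follows from \Cref{th:maintheorem}). You instead invoke the already-established SOT convergence \Cref{th:kms_strong_convergence_oscil} and control every factor except the rightmost $\widetilde{B_n^T}$ by the uniform bound $\|\widetilde{B_j^T}\|\le\|B_j\|$; since the target is $0$, a single operator-norm estimate suffices and neither the induction hypothesis nor any limit-interchange is needed. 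The paper's approach has the advantage of uniformity --- the same machinery handles both cases and delivers the stronger multiple-limit conclusion in parallel with \Cref{th:general_theorem} --- whereas yours is more economical, exploiting that once a single factor converges strongly to $0$, boundedness of the rest kills the whole product.
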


Finally,  under only the assumption of state invariance we can show \Cref{eq:frequencyav2} for all $υ$ and almost all $(\boldsymbol k,f)$, using the simple argument of countability of eigenvalues:
\begin{theorem} \label{th:frequencyaverage1}
Consider a dynamical system $(\mathfrak{U}, ι, τ)$ with interaction that satisfies \Cref{eq:interaction}, a $ι, τ$-invariant state $ω\in E_{\mathfrak{U}}$ and any $A,B \in \mathfrak{U}$. Then, for every velocity $\boldsymbol{υ}= υ \boldsymbol q$ of rational direction, $υ\in \R$, $\boldsymbol q \in \Srat^{D-1}$, it follows that \Cref{eq:frequencyav2} holds for almost all $(\boldsymbol k,f)\in \R^D\times R$.
\end{theorem}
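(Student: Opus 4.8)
The plan is to pass to the GNS representation and reduce the claim to a countability argument about eigenvalues of a single unitary. By \Cref{prop:gns}, invariance of $ω$ gives a unitary representation $U_ω(\boldsymbol n, t)$ of $G = \Z^D \times \R$ on $H_ω$ with $U_ω(\boldsymbol n,t)Ω_ω = Ω_ω$ and $U_ω(\boldsymbol n,t)π_ω(A)U_ω(\boldsymbol n,t)^* = π_ω(ι_{\boldsymbol n}τ_t A)$. Fix the rational direction $\boldsymbol q$ and write $\boldsymbol υ = υ\boldsymbol q$; since $\boldsymbol q = \boldsymbol x/|\boldsymbol x|$ for some $\boldsymbol x \in \Z^D\setminus\{\boldsymbol 0\}$, the discretised translation $\floor{\boldsymbol υ t}$ along the ray is governed, at integer multiples of a suitable time step, by a single group element. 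Concretely I would first show that the continuous-time integral $\frac1T\int_0^T e^{i(\boldsymbol k\cdot\boldsymbol v - f)t}\,ω(ι_{\floor{\boldsymbol υ t}}τ_t(A)B)\,dt$ has the same limit as a discretised Cesàro average of the form $\frac1N\sum_{m=0}^{N-1} e^{-i\theta m}\,ω(ι_{\boldsymbol n}^m τ_{υ^{-1}|\boldsymbol n|}^m(A)B)$ for the appropriate $\boldsymbol n$ (a lattice vector in direction $\boldsymbol q$) and phase $\theta = \theta(\boldsymbol k, f)$; this is the same discrete-to-continuous reduction used in the proof of \Cref{th:maintheorem}, with the floor function contributing a uniformly bounded, eventually negligible correction absorbed by the Bochner-integral estimates of \Cref{appendix1}, together with continuity of $t\mapsto ι_{\floor{\boldsymbol υ t}}τ_t(A)$ on each half-open unit interval.

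Next, rewrite the discrete average in the GNS picture. Setting $V \coloneqq U_ω(\boldsymbol n, υ^{-1}|\boldsymbol n|)$, a single unitary on $H_ω$, the average becomes $\langle Ω_ω, \big(\frac1N\sum_{m=0}^{N-1} (e^{-i\theta}V)^m\big)π_ω(A)Ω_ω\rangle\,$ after using $U_ω Ω_ω = Ω_ω$ and shifting $B$ appropriately, i.e.\ it is an expectation of the Cesàro average of powers of the unitary $W \coloneqq e^{-i\theta}V$. By von Neumann's mean ergodic theorem \cite[Theorem II.11]{reed_i_1981}, $\frac1N\sum_{m=0}^{N-1}W^m$ converges in the strong operator topology to the orthogonal projection $Q_\theta$ onto $\ker(W - \mathds 1) = \{\Psi : V\Psi = e^{i\theta}\Psi\}$, the eigenspace of $V$ for eigenvalue $e^{i\theta}$. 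Hence the limit of the oscillatory ray average equals $\langle Ω_ω, Q_\theta\, π_ω(A)Ω_ω\rangle$ (up to the bounded factor carrying $B$), and this is $0$ whenever $e^{i\theta}$ is not an eigenvalue of $V$.

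Now comes the countability argument, which is the crux. The Hilbert space $H_ω$ of a quantum spin lattice is separable (the quasi-local algebra $\mathfrak U$ is norm-separable, being the completion of the countable union of finite tensor products of matrix algebras, so $π_ω(\mathfrak U)Ω_ω$ is a separable dense subspace). A bounded operator on a separable Hilbert space has at most countably many eigenvalues, since distinct eigenvalues have orthogonal eigenvectors and an orthonormal family in a separable space is countable. Therefore $V$ has at most countably many eigenvalues, so the set of $\theta \in [0,2\pi)$ with $e^{i\theta} \in \mathrm{spec}_p(V)$ is countable. The phase $\theta = \theta(\boldsymbol k, f)$ depends (for fixed $\boldsymbol n$, $υ$) affinely and nondegenerately on $(\boldsymbol k, f)$ — explicitly $\theta \equiv \boldsymbol k\cdot\boldsymbol n - f υ^{-1}|\boldsymbol n| \pmod{2\pi}$ — so the preimage in $\R^D\times\R$ of this countable exceptional set of phases is a countable union of affine hyperplanes, hence Lebesgue-null. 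For all $(\boldsymbol k, f)$ outside this null set, $e^{i\theta}$ is not an eigenvalue of $V$, $Q_\theta = 0$, and \Cref{eq:frequencyav2} holds (the subtracted term $ω(A)ω(B)$ is handled by noting $Ω_ω$ contributes to the eigenvalue $1$, i.e.\ $\theta\in 2\pi\Z$, which already lies in the excluded set unless we are in the degenerate case $\boldsymbol k\cdot\boldsymbol v - f = 0$ where the statement reduces to \Cref{th:maintheorem}).

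The main obstacle I anticipate is not the countability step — that is genuinely soft — but making the discrete-to-continuous reduction fully rigorous in the presence of the floor function $\floor{\boldsymbol υ t}$: one must check that the oscillatory factor $e^{i(\boldsymbol k\cdot\boldsymbol v - f)t}$ interacts correctly with the sampling of $t$ at the times where $\floor{\boldsymbol υ t}$ jumps, so that the residual between the continuous integral and the clean geometric sum $\sum_m (e^{-i\theta}V)^m$ is $o(T)$ uniformly. This requires care because, unlike in \Cref{th:maintheorem}, the phase does not telescope; one resolves it by splitting $[0,T]$ into the intervals on which $\floor{\boldsymbol υ t}$ is constant, bounding each interval's contribution by $\mathrm{const}\cdot\|A\|\,\|B\|$ times the interval length, and then recognising the leading term as a Riemann-sum/Cesàro approximation to the geometric average, exactly as in \Cref{appendix:Moore-Osgood} and \Cref{appendix1}. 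Once that estimate is in place, the theorem follows immediately from von Neumann's theorem and the separability of $H_ω$.
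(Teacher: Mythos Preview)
Your proposal is correct and follows essentially the same approach as the paper: pass to the GNS picture, apply von Neumann's mean ergodic theorem to the unitary $e^{-iθ}U_ω(\boldsymbol n,υ^{-1}|\boldsymbol n|)$, and use separability of $H_ω$ to get countability of the point spectrum, so that the exceptional $(\boldsymbol k,f)$ form a countable union of affine hyperplanes and hence a Lebesgue-null set. The discrete-to-continuous reduction you flag as the main obstacle is handled in the paper exactly as you anticipate, via the integral-splitting of \Cref{eq:splittingintegral} (the paper simply cites the manipulations leading to \Cref{eq:frequencyavproof1}), so the residual $\int_0^1$ term is a bounded correction that causes no trouble.
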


\section{Ergodicity and Asymptotic Abelianness in a space-like cone from the Lieb-Robinson Bound} \label{section:ergodicity_abelianness}

First we discuss the Lieb-Robinson bound and asymptotic abelianness outside the light-cone defined by the Lieb-Robinson velocity. Then, we show that this implies clustering of correlations in factor invariant states, and hence space-like ergodicity. We state the Lieb-Robinson bound in the form of \cite[Corollary 4.3.3]{naaijkens_quantum_2017}:
\begin{lem}[Lieb-Robinson Bound] \label{lem:liebrobinsonbound}
Consider a dynamical system $(\mathfrak{U},ι,τ)$ (\Cref{defn:dynamicalsystem}) with interaction $Φ$ that satisfies Equation \eqref{eq:interaction}, that is
\begin{equation} 
    \norm{Φ}_λ \coloneqq \sup_{\boldsymbol n \in \Z^D} \sum_{X \ni \boldsymbol n} \norm{Φ(X)} |X| N^{2|X|} e^{λ \diam(X)} < \infty 
\end{equation}
for some $λ>0$.
Then, there exists a $υ_{LR}>0$ such that for all local $A,B \in \mathfrak{U}_{\rm loc}$ with supports $\supp(A)=Λ_A \in P_f( \mathbb{Z}^D)$, $\supp(B)=Λ_B \in P_f( \mathbb{Z}^D)$ respectively, and all $t \in \mathbb{R}$ we have the bound
\begin{equation}
    \norm{[ τ_t (A) , B]} \leq 4 \norm{A}\norm{B}|Λ_A| |Λ_B| N^{2 | Λ_A |} \exp{-λ( \dist ( A,B) - υ_{LR}|t|)} \label{eq:liebrobinsonbound}
\end{equation}
 with $υ_{LR}=2 \frac{\norm{Φ}_λ}{λ}$ called the Lieb-Robinson velocity.
\end{lem}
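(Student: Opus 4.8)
The plan is to follow the standard telescoping/Gronwall argument, adapted to the quasi-local setting with the particular norm $\norm{\Phi}_\lambda$ in \Cref{eq:interaction}. First I would reduce to finite-volume dynamics: fix a finite $\Lambda \supset \Lambda_A$ and work with $\tau_t^\Lambda(A) = e^{itH_\Lambda} A e^{-itH_\Lambda}$, establishing the bound uniformly in $\Lambda$ and then passing to the limit $\Lambda \to \infty$ using the convergence $\tau_t^\Lambda(A) \to \tau_t(A)$ in norm guaranteed by \cite[Theorem 6.2.11]{bratteli_operator_1997}. For the finite-volume estimate, define, for a local observable $B$ with $\supp(B) = \Lambda_B$, the quantity $f(t) \coloneqq \norm{[\tau_t^\Lambda(A), B]}$. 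Differentiating $[\tau_t^\Lambda(A),B]$ in $t$ produces a commutator with $H_\Lambda = \sum_{X \subset \Lambda} \Phi(X)$; only those terms $\Phi(X)$ with $X \cap \supp(\tau_t^\Lambda(A)) \neq \emptyset$ contribute, but since we do not control that support directly, the standard trick is instead to bound $f(t)$ by an integral inequality that sums over all $X$ meeting $\Lambda_A$ at $t=0$ and then propagates.

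The key step is to set up the iteration: writing $g(t) \coloneqq \sup_{B} \norm{[\tau_t^\Lambda(A),B]} / \norm{B}$ where the supremum is over $B \in \mathfrak{U}_Y$ for a fixed region $Y$, one derives
\begin{equation}
  \norm{[\tau_t^\Lambda(A), B]} \le \norm{[A,B]} + \sum_{Z} 2\norm{\Phi(Z)}\,|Z|\, N^{2|Z|} \int_0^{|t|} \norm{[\tau_s^\Lambda(A), \Phi(Z)]}' \, ds ,
\end{equation}
(schematically, with $\norm{\cdot}'$ a suitably normalized commutator norm), and then iterates this inequality. Each iteration moves the "reference support" one interaction-range step further; after $\ell$ steps one has a sum over chains $Z_1, Z_2, \dots, Z_\ell$ linking $\Lambda_A$ to $\Lambda_B$, weighted by $\prod_j 2\norm{\Phi(Z_j)} |Z_j| N^{2|Z_j|}$ and by $|t|^\ell/\ell!$. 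The factor $e^{\lambda \diam(X)}$ in $\norm{\Phi}_\lambda$ is exactly what is needed to convert a chain of total spatial extent at least $\dist(A,B)$ into a factor $e^{-\lambda \dist(A,B)}$ after extracting $e^{\lambda \diam(Z_j)}$ from each link; the leftover $\sum_\ell \frac{(|t| \cdot 2\norm{\Phi}_\lambda/\lambda \cdot \lambda)^\ell}{\ell!}$ (one $\lambda$ per link from the geometric series bounding the chain sum) exponentiates to $e^{\lambda \upsilon_{LR} |t|}$ with $\upsilon_{LR} = 2\norm{\Phi}_\lambda/\lambda$. The prefactor $4\norm{A}\norm{B}|\Lambda_A||\Lambda_B| N^{2|\Lambda_A|}$ comes from the zeroth-order term $\norm{[A,B]} \le 2\norm{A}\norm{B}$ combined with the combinatorial/dimensional bookkeeping of the first link anchored in $\Lambda_A$.

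The main obstacle is the bookkeeping in the iteration: one must carefully track how the combinatorial sum over interaction terms $X$ at each step is dominated using $\norm{\Phi}_\lambda < \infty$, ensuring that the per-step "cost" is uniformly bounded by $\upsilon_{LR}\lambda$ and that the geometric decay $e^{-\lambda\dist(A,B)}$ genuinely survives — in particular that the $N^{2|X|}$ factors (which appear because bounding $\norm{[\tau_s(A),\Phi(X)]}$ in terms of $g$ over the enlarged region costs a factor related to the local Hilbert space dimension) are absorbed into $\norm{\Phi}_\lambda$, which is precisely why the definition in \Cref{eq:interaction} includes them. Rather than reproving this from scratch, I would simply cite \cite[Corollary 4.3.3]{naaijkens_quantum_2017}, whose hypotheses are matched by \Cref{eq:interaction}, and note that the stated constants follow from that reference's proof; the role of this lemma in the paper is purely as an input tool, so a citation-based proof is appropriate and the detailed chain estimate need not be reproduced here.
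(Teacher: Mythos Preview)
Your proposal is correct and in fact does more than the paper: the paper does not prove this lemma at all but simply states it as a quotation of \cite[Corollary 4.3.3]{naaijkens_quantum_2017}, exactly the citation you land on in your final paragraph. Your sketch of the telescoping/iteration argument is accurate and a reasonable summary of how that reference proceeds, but for the purposes of matching the paper, the citation alone suffices.
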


As mentioned, our results hold for all quantum spin lattice models with interactions that satisfy \Cref{eq:interaction}, including any finite range  and two-body exponentially decaying interactions.

The Lieb-Robinson bound implies that the algebra $\mathfrak{U}$ is asymptotically abelian under space-time translations along rays that are outside the light-cone defined by $υ_{LR}$. To make this precise, consider any lattice direction $\boldsymbol n \in \Z^D$, speed $υ>υ_{LR}$ and the subgroup of space-time translations along a ray $G_{υ,\boldsymbol n} \coloneqq  \{ (r\boldsymbol n, rυ^{-1}|\boldsymbol n|) : r \in \Z \} \subset \Z^D \times \R$. It is easy to prove the following:
\begin{theorem}[Space-like asymptotic abelianness] \label{th:asymptotic_abelianness}
Under the assumptions of Lemma \ref{lem:liebrobinsonbound}, the algebra $\mathfrak{U}$ is $G_{υ,\boldsymbol n}$-asymptotically abelian for every $υ \in \R$ s.t.\ $|υ|>υ_{LR}$, and $\boldsymbol n \in \Z^D$, i.e.\ 
\begin{equation} \label{eq:asympotic_abeliaN}
   \lim_{r \to \infty} \norm{ [ι_{r\boldsymbol n} τ_{rυ^{-1}|\boldsymbol{n}|}(A),B]} \rightarrow 0 
\end{equation}
for all $A,B \in \mathfrak{U}$.
\end{theorem}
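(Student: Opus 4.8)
The plan is to reduce the statement to local observables and then apply the Lieb-Robinson bound of \Cref{lem:liebrobinsonbound} directly; throughout I assume $\boldsymbol n\neq\boldsymbol 0$, since otherwise $G_{υ,\boldsymbol n}$ is trivial and there is nothing to prove. First I would use that each $ι_{\boldsymbol m}$ and each $τ_t$ is a $^*$-automorphism, hence norm-isometric, together with bilinearity of the commutator, to reduce to $A,B\in\mathfrak{U}_{\rm loc}$ by a standard approximation: given $A,B\in\mathfrak{U}$ and $\epsilon>0$, choose $A_\epsilon,B_\epsilon\in\mathfrak{U}_{\rm loc}$ with $\norm{A-A_\epsilon}<\epsilon$ and $\norm{B-B_\epsilon}<\epsilon$, whence for all $r\in\Z$ and all $t\in\R$
\begin{equation*}
\norm{[ι_{r\boldsymbol n}τ_t(A),B]}\ \le\ \norm{[ι_{r\boldsymbol n}τ_t(A_\epsilon),B_\epsilon]}+2\epsilon\norm{B}+2\epsilon\big(\norm{A}+\epsilon\big),
\end{equation*}
uniformly in $r$ and $t$. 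It then suffices to show the first term vanishes as $r\to\infty$ (with $t=rυ^{-1}|\boldsymbol n|$) for local $A_\epsilon,B_\epsilon$, and afterwards let $\epsilon\to0$.

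Next, for $A,B\in\mathfrak{U}_{\rm loc}$ with supports $Λ_A,Λ_B\in P_f(\Z^D)$, I would use homogeneity of the dynamics, $τ_tι_{\boldsymbol m}=ι_{\boldsymbol m}τ_t$, to rewrite $ι_{r\boldsymbol n}τ_t(A)=τ_t\big(ι_{r\boldsymbol n}A\big)$, where $ι_{r\boldsymbol n}A\in\mathfrak{U}_{\rm loc}$ has support $Λ_A+r\boldsymbol n$ and, crucially, $|Λ_A+r\boldsymbol n|=|Λ_A|$ is independent of $r$. The Lieb-Robinson bound \eqref{eq:liebrobinsonbound} then gives
\begin{equation*}
\norm{[ι_{r\boldsymbol n}τ_t(A),B]}\ \le\ 4\norm{A}\norm{B}\,|Λ_A|\,|Λ_B|\,N^{2|Λ_A|}\exp\!\Big(-λ\big(\dist(Λ_A+r\boldsymbol n,Λ_B)-υ_{LR}|t|\big)\Big),
\end{equation*}
with the entire prefactor independent of $r$.

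The remaining step is to control the exponent with $t=rυ^{-1}|\boldsymbol n|$, so $|t|=r|\boldsymbol n|/|υ|$. For $x\in Λ_A$ and $y\in Λ_B$ the triangle inequality gives $|x+r\boldsymbol n-y|\ge r|\boldsymbol n|-|x|-|y|\ge r|\boldsymbol n|-c_{A,B}$ with $c_{A,B}\coloneqq\max_{x\in Λ_A}|x|+\max_{y\in Λ_B}|y|$ independent of $r$, hence $\dist(Λ_A+r\boldsymbol n,Λ_B)\ge r|\boldsymbol n|-c_{A,B}$. Therefore
\begin{equation*}
\dist(Λ_A+r\boldsymbol n,Λ_B)-υ_{LR}|t|\ \ge\ r|\boldsymbol n|\Big(1-\frac{υ_{LR}}{|υ|}\Big)-c_{A,B},
\end{equation*}
and since $|υ|>υ_{LR}$ the coefficient $1-υ_{LR}/|υ|$ is strictly positive and $|\boldsymbol n|\ge1$, so the exponent tends to $+\infty$ and the right-hand side of the Lieb-Robinson estimate tends to $0$ as $r\to\infty$. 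Combined with the first step and letting $\epsilon\to0$, this yields $\lim_{r\to\infty}\norm{[ι_{r\boldsymbol n}τ_{rυ^{-1}|\boldsymbol n|}(A),B]}=0$ for all $A,B\in\mathfrak{U}$.

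I do not expect any genuine obstacle: this is the kind of statement that follows essentially mechanically from the Lieb-Robinson bound, which is why the authors call it ``easy to prove''. The only places requiring a little care are bookkeeping: verifying that the combinatorial/geometric prefactor and the constant $c_{A,B}$ in the distance estimate do not grow with $r$ (so the exponential decay is genuine), and noting that the density reduction is uniform in $r$ and $t$ --- which is immediate because $ι_{\boldsymbol m}$ and $τ_t$ act isometrically.
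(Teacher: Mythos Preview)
Your proof is correct and follows essentially the same route as the paper: establish the local case from the Lieb-Robinson bound via a lower bound on $\dist(\Lambda_A+r\boldsymbol n,\Lambda_B)$ linear in $r|\boldsymbol n|$, then extend to all of $\mathfrak{U}$ by density using that $\iota$ and $\tau$ are isometric. The only cosmetic difference is that the paper treats the local case first and then passes to the closure via the Moore--Osgood theorem on the double limit in $r$ and the approximation index, whereas your direct $\epsilon$-argument (with the error bound uniform in $r,t$) is a slightly more elementary way to achieve the same extension.
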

\begin{proof}
First, consider $A,B\in \mathfrak{U}_{\rm loc}$ and let $Λ_A$, $Λ_B$ be the supports of $A$,$B$ respectively.  \Cref{eq:asympotic_abeliaN} is an immediate consequence of the Lieb-Robinson bound, indeed note that:
\begin{equation}
 \begin{array}{*3{>{\displaystyle}lc}p{5cm}}
    \dist(ι_{ \boldsymbol n}(A),B) = \dist(Λ_A+\boldsymbol n, Λ_B) &=& \min \{ |\boldsymbol y-\boldsymbol z| : \boldsymbol y \in Λ_A+\boldsymbol n, \boldsymbol z \in Λ_B \} \\
        &=& \min \{ |\boldsymbol y+\boldsymbol n-\boldsymbol z| : \boldsymbol y \in Λ_A, \boldsymbol z \in Λ_B \} \\
        &\geq& \min \{ |\boldsymbol n| - |\boldsymbol y-\boldsymbol z| :\boldsymbol y \in Λ_A, \boldsymbol z \in Λ_B \} \\
        &=& |\boldsymbol n| - \max\{ |\boldsymbol y-\boldsymbol z| : \boldsymbol y \in Λ_A ,\boldsymbol z \in Λ_B \} \\
        &\geq& |\boldsymbol n| - \diam(Λ_A \cup Λ_B)
    \end{array} 
\end{equation}
Using this in the Lieb-Robinson bound we have
\begin{equation}
 \begin{array}{*3{>{\displaystyle}lc}p{5cm}}
   \lefteqn{\norm{ [ι_{r\boldsymbol n} τ_{rυ^{-1}|\boldsymbol{n}|}(A),B]} \leq }\\ 
  &&4 \norm{A}\norm{B}|Λ_A| |Λ_B| N^{2 | Λ_A |} \exp{-λ( r |\boldsymbol n| + \diam (Λ_A \cup Λ_B) - r υ^{-1} |\boldsymbol n|υ_{LR})} 
\end{array}
\end{equation}
and the right hand-side of this inequality clearly vanishes at $r \to \infty$ whenever $υ>υ_{LR}$. 

Now consider arbritrary $A,B \in \mathfrak{U}$. Since $\mathfrak{U}$ is the norm completion of $\mathfrak{U}_{\rm loc}$ we can find Cauchy sequences $A_m, B_m \in \mathfrak{U}_{\rm loc}$ with $A_m \to A$ and $B_m \to B$. From the above derivation, we have that $\lim_{r \to \infty} \norm{ [ι_{r\boldsymbol n} τ_{rυ^{-1}|\boldsymbol{n}|}(A_m),B_m]} =0$, for all $m$. By uniform continuity of $ι$, $τ$ (as they are norm-preserving, $\norm{\iota_{r\boldsymbol n}(A)} = \norm{A}$, $\norm{\tau_t(A)} = \norm{A}$), and continuity of the commutator,
we can easily see that $\lim_{m} [ι_{r\boldsymbol n} τ_{rυ^{-1}|\boldsymbol{n}|}(A_m),B_m] = [ι_{r\boldsymbol n} τ_{rυ^{-1}|\boldsymbol{n}|}(A),B]$ uniformly for all $r$. This allows us to apply the Moore-Osgood  \Cref{th:MooreOsgood}, see \Cref{appendix:Moore-Osgood} and \cite[p. 140]{taylor_general_1985}, to get that the double limit and the two iterated limits exist and are equal 
\begin{equation}
 \begin{array}{*3{>{\displaystyle}lc}p{5cm}}
    \lim_m (\lim_r [ι_{r\boldsymbol n} τ_{rυ^{-1}|\boldsymbol{n}|}(A_m),B_m] )&=&
    \lim_r (\lim_m [ι_{r\boldsymbol n} τ_{rυ^{-1}|\boldsymbol{n}|}(A_m),B_m] ) \\
    &=& \lim_{m, r}([ι_{r\boldsymbol n} τ_{rυ^{-1}|\boldsymbol{n}|}(A_m),B_m] )
    \end{array}
    \end{equation} Hence, we have asymptotic abelianness for all $A,B \in \mathfrak{U}$.
\end{proof}
It is well established that given asymptotic abelianness of a C$^*$-algebra  under a certain group action $τ_g$, it follows that any factor state $ω$ satisfies clustering properties, in the sense that there is a net of elements $g_i \in G$ such that $\lim_i |ω(A τ_{g_i}(B) ) - ω(A) ω(B)| = 0$. This is shown in \cite[Theorem 8]{kastler_invariant_1966}, \cite[Example 4.3.24]{bratteli_operator_1987} (and in the latter referred to as ``strong mixing"). The most common example is space translations; any factor state is clustering in space. From these results, we obtain the following:

\begin{theorem}[Space-like clustering] \label{th:space-like-clustering}
Consider the assumptions Lemma \ref{lem:liebrobinsonbound} and a \textbf{factor invariant state} $ω$ of $\mathfrak{U}$. It follows that for every $\boldsymbol n \in \Z^D$, $|υ|>υ_{LR}$ and all observables $A,B \in \mathfrak{U}$ we have \textbf{space-like clustering of correlations}:
\begin{equation}
    \lim_{r \to \infty} | ω\big( ι_{r \boldsymbol n} τ_{rυ^{-1}|\boldsymbol n|} (A) B \big) - ω(A) ω(B) | = 0 .
\end{equation}
This immediately implies that $ω$ is also space-like ergodic (\Cref{def:spacelike-ergodic}) and space-like non-oscillating (\Cref{def:phase-spacelike-ergodic}).
\end{theorem}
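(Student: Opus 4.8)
The plan is to combine the norm asymptotic abelianness of \Cref{th:asymptotic_abelianness} with the classical structure theorem that invariant factor states of asymptotically abelian $C^*$-systems cluster strongly, and then to deduce the two stated consequences by elementary Cesàro-mean arguments.

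First I would fix $\boldsymbol n\in\Z^D\setminus\{\boldsymbol 0\}$ and $υ$ with $|υ|>υ_{LR}$, and set $\tilde\tau_r\coloneqq\iota_{r\boldsymbol n}\tau_{rυ^{-1}|\boldsymbol n|}$ for $r\in\Z$. Since $\iota$ and $\tau$ are commuting one-parameter groups of $^*$-automorphisms, $r\mapsto\tilde\tau_r$ is a $\Z$-action, i.e.\ a representation of the subgroup $G_{υ,\boldsymbol n}\subset\Z^D\times\R$; as $\omega$ is separately space- and time-invariant it is $\tilde\tau$-invariant, and by hypothesis it is factorial. \Cref{th:asymptotic_abelianness} gives $\|[\tilde\tau_r(A),B]\|\to0$ as $r\to\infty$, and the identity $\tilde\tau_{-r}=\iota_{r(-\boldsymbol n)}\tau_{r(-υ)^{-1}|{-\boldsymbol n}|}$ together with $|-υ|=|υ|>υ_{LR}$ gives the same as $r\to-\infty$; hence $(\mathfrak U,G_{υ,\boldsymbol n},\tilde\tau)$ is norm (hence in particular $G$-) asymptotically abelian. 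I would then invoke the clustering theorem for invariant factor states of asymptotically abelian systems, \cite[Theorem 8]{kastler_invariant_1966}, \cite[Example 4.3.24]{bratteli_operator_1987}, to obtain $\lim_{|r|\to\infty}\omega(A\,\tilde\tau_r(B))=\omega(A)\omega(B)$ for all $A,B\in\mathfrak U$; using invariance of $\omega$ to shift the automorphism, $\omega(\tilde\tau_r(A)B)=\omega(A\,\tilde\tau_{-r}(B))$, so the same limit holds with $\tilde\tau_r$ acting on the first factor, which is exactly the asserted space-like clustering. The endpoint $υ=\pm\infty$, where $υ^{-1}|\boldsymbol n|=0$, is just the clustering of a factor state under the pure spatial translations $\iota_{r\boldsymbol n}$, obtained the same way from the (trivial, for local observables) spatial asymptotic abelianness and density of $\mathfrak U_{\rm loc}$.

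For the two consequences I would write $c_m\coloneqq\omega(\iota_{\boldsymbol n}^m\tau_{υ^{-1}|\boldsymbol n|}^m(A)B)=\omega(\tilde\tau_m(A)B)$, so that $c_m\to L\coloneqq\omega(A)\omega(B)$ by the above. Space-like ergodicity (\Cref{def:spacelike-ergodic}) is then the convergence of the Cesàro means $\frac1N\sum_{m=0}^{N-1}c_m$ to $L$, which is immediate. For the non-oscillating property (\Cref{def:phase-spacelike-ergodic}), with $\theta\coloneqq\boldsymbol k\cdot\boldsymbol n-fυ^{-1}|\boldsymbol n|\notin2\pi\Z$ and $c_m=L+\varepsilon_m$, $\varepsilon_m\to0$, I would split
\begin{equation}
 \frac1N\sum_{m=0}^{N-1}e^{-i\theta m}c_m=\frac{L}{N}\,\frac{1-e^{-i\theta N}}{1-e^{-i\theta}}+\frac1N\sum_{m=0}^{N-1}e^{-i\theta m}\varepsilon_m ,
\end{equation}
where the first term vanishes as $N\to\infty$ since $e^{-i\theta}\neq1$ keeps the geometric sum bounded, and the second is bounded by the Cesàro mean of the null sequence $(|\varepsilon_m|)$ and hence also vanishes; so $\omega$ is space-like non-oscillating, with $υ_c=υ_{LR}$ in both definitions.

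The main obstacle is really just bookkeeping: one must make sure the hypotheses of the factor-state clustering theorem are met by the \emph{cyclic subgroup} action $\tilde\tau$ rather than by the full space-time translation group. This is legitimate precisely because $\omega$ is $G_{υ,\boldsymbol n}$-invariant (being separately space- and time-invariant) and because the norm asymptotic abelianness furnished by \Cref{th:asymptotic_abelianness} is stronger than any of the abelianness conditions those references assume; once this is granted, everything else is routine.
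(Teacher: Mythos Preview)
Your proposal is correct and follows essentially the same route as the paper: invoke the space-like asymptotic abelianness from \Cref{th:asymptotic_abelianness} and then apply the standard clustering theorem for factor states of asymptotically abelian systems (\cite[Theorem 8]{kastler_invariant_1966}, \cite[Example 4.3.24]{bratteli_operator_1987}), after which the ergodic and non-oscillating properties follow by Ces\`aro-type arguments. You are simply more explicit than the paper about the bookkeeping for the cyclic subgroup action $\tilde\tau$, the two-sided limit $r\to\pm\infty$, and the geometric-sum estimate for the oscillatory average; the paper states these as immediate consequences without writing them out.
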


The last statement in the theorem is shown as follows. It is trivial to see that $\lim_r| ω\big( ι_{r \boldsymbol n} τ_{rυ^{-1}|\boldsymbol n|} (A) B \big) - ω(A) ω(B) |=0$ implies  $$\lim_N \frac{1}{N} \sum_{r=1}^N  \big( ω\big( ι_{r \boldsymbol n} τ_{rυ^{-1}|\boldsymbol n|} (A) B \big) - ω(A) ω(B) \big) =0,$$ i.e.\ $ω$ is space-like ergodic. It is also easy to see that space-like clustering implies that the state is space-like non-oscillating, as in \Cref{def:phase-spacelike-ergodic}. To see this, first note that  $$\lim_N \frac{1}{N}\sum_{r=1}^N e^{-i(\boldsymbol k \cdot \boldsymbol n - f υ^{-1} |\boldsymbol n|)r}  \big( ω\big( ι_{r \boldsymbol n} τ_{rυ^{-1}|\boldsymbol n|} (A) B \big) - ω(A) ω(B) \big)=0,$$ from space-like clustering, and second that $$\lim_N \frac{1}{N}| \sum_{r=1}^N e^{-i(\boldsymbol k \cdot \boldsymbol n - f υ^{-1} |\boldsymbol n|)r} ω(A) ω(B) | =0$$ whenever $\boldsymbol k \cdot \boldsymbol n - f υ^{-1} |\boldsymbol n| \not\in 2\pi \Z$. 

A more intuitive condition, than that of factor states, is space clustering of correlations. A  state $ω$ is called space clustering if $ |ω( A_n B) - ω(A_n)ω(B)| \rightarrow  0 $ for any $B \in \mathfrak{U}$ and any sequence $A_n \in \mathfrak{U}$, $n=1, 2, \cdots $, with uniformly bounded norm, such that $\dist(A_n, B) \rightarrow 0$. It can be seen that space clustering invariant states are also space-like clustering and therefore space-like ergodic. This is shown in \Cref{appendix2}.

\begin{theorem} \label{th:space_like_clustering_from_spatial}
   Consider a dynamical system $(\mathfrak{U}, ι, τ)$ with interaction that satisfies \Cref{eq:interaction}, and a space clustering state $ω$, in the sense that $ ω( A_n B) - ω(A_n)ω(B) \rightarrow  0 $ for any $B \in \mathfrak{U}$, and any sequence $A_n \in \mathfrak{U}$, $n=1, 2, \cdots $  with uniformly bounded norm such that $\dist(A_n, B) \rightarrow 0$. It follows that $ω$ is space-like clustering, i.e.\ for all $υ>υ_{LR}$ and all $A,B \in \mathfrak{U}_{\rm loc}$ 
    \begin{equation}
        \lim_{\boldsymbol n \to \infty}  |ω \big( ι_{\boldsymbol n}τ_{υ^{-1}|\boldsymbol n|} (A) B \big) - ω(A) ω(B)|= 0.
    \end{equation}
This immediately implies that $ω$ is also space-like ergodic (\Cref{def:spacelike-ergodic}) and space-like non-oscillating (\Cref{def:phase-spacelike-ergodic})
\end{theorem}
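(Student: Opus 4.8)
The plan is to replace the quasi-local time-evolved observable $\iota_{\boldsymbol n}\tau_{\upsilon^{-1}|\boldsymbol n|}(A)$ by a strictly local one, supported inside its Lieb-Robinson light-cone, so that the space-clustering hypothesis becomes applicable, and to control the replacement error by the Lieb-Robinson bound. Fix $A,B\in\mathfrak U_{\rm loc}$ with supports $\Lambda_A,\Lambda_B$, and set $t_{\boldsymbol n}:=\upsilon^{-1}|\boldsymbol n|$; by homogeneity of the dynamics $\iota_{\boldsymbol n}\tau_{t_{\boldsymbol n}}(A)=\tau_{t_{\boldsymbol n}}(\iota_{\boldsymbol n}(A))$, with $\iota_{\boldsymbol n}(A)$ supported on $\Lambda_A+\boldsymbol n$. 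For finite $X\subset\Z^D$ let $\Pi_X\colon\mathfrak U\to\mathfrak U_X$ be the conditional expectation onto $\mathfrak U_X$ (the normalized partial trace over the complement), a norm-one projection fixing $\mathfrak U_X$, and for $r>0$ put
\[
  A_{\boldsymbol n}^{(r)}:=\Pi_{(\Lambda_A+\boldsymbol n)(r)}\big(\tau_{t_{\boldsymbol n}}(\iota_{\boldsymbol n}(A))\big),
  \qquad (\Lambda_A+\boldsymbol n)(r):=\{\boldsymbol x\in\Z^D:\dist(\boldsymbol x,\Lambda_A+\boldsymbol n)\le r\},
\]
so that $\norm{A_{\boldsymbol n}^{(r)}}\le\norm{A}$ and $\supp(A_{\boldsymbol n}^{(r)})\subseteq(\Lambda_A+\boldsymbol n)(r)$. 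The standard local-approximation form of the Lieb-Robinson bound (a direct consequence of \Cref{lem:liebrobinsonbound}; see \cite{naaijkens_quantum_2017}) provides constants $c,\lambda'>0$ depending only on $A$ with $\norm{\iota_{\boldsymbol n}\tau_{t_{\boldsymbol n}}(A)-A_{\boldsymbol n}^{(r)}}\le c\,e^{-\lambda'(r-\upsilon_{LR}t_{\boldsymbol n})}$.

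Next I would choose $r$ as a function of $\boldsymbol n$ so as to balance two competing length scales. Since $\upsilon>\upsilon_{LR}$, fix $\alpha$ with $\upsilon_{LR}/\upsilon<\alpha<1$ and take $r_{\boldsymbol n}:=\lceil\alpha|\boldsymbol n|\rceil$. Then $r_{\boldsymbol n}-\upsilon_{LR}t_{\boldsymbol n}\ge(\alpha-\upsilon_{LR}/\upsilon)|\boldsymbol n|\to\infty$, so the local-approximation error satisfies $\norm{\iota_{\boldsymbol n}\tau_{t_{\boldsymbol n}}(A)-A_{\boldsymbol n}^{(r_{\boldsymbol n})}}\to0$ as $|\boldsymbol n|\to\infty$. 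On the other hand, using the elementary estimate $\dist(\Lambda_A+\boldsymbol n,\Lambda_B)\ge|\boldsymbol n|-\diam(\Lambda_A\cup\Lambda_B)$ established in the proof of \Cref{th:asymptotic_abelianness},
\[
  \dist\big(\supp(A_{\boldsymbol n}^{(r_{\boldsymbol n})}),\Lambda_B\big)\ge\dist(\Lambda_A+\boldsymbol n,\Lambda_B)-r_{\boldsymbol n}\ge(1-\alpha)|\boldsymbol n|-\diam(\Lambda_A\cup\Lambda_B)-1\to\infty .
\]
Hence, along any sequence $\boldsymbol n$ with $|\boldsymbol n|\to\infty$, the observables $A_{\boldsymbol n}^{(r_{\boldsymbol n})}$ are uniformly norm-bounded and recede from $B$, so the space-clustering hypothesis yields $\omega(A_{\boldsymbol n}^{(r_{\boldsymbol n})}B)-\omega(A_{\boldsymbol n}^{(r_{\boldsymbol n})})\omega(B)\to0$.

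Finally I would put the original observable back. By the local-approximation bound and $\norm{B}<\infty$, one has $|\omega(\iota_{\boldsymbol n}\tau_{t_{\boldsymbol n}}(A)B)-\omega(A_{\boldsymbol n}^{(r_{\boldsymbol n})}B)|\le\norm{\iota_{\boldsymbol n}\tau_{t_{\boldsymbol n}}(A)-A_{\boldsymbol n}^{(r_{\boldsymbol n})}}\,\norm{B}\to0$, and similarly $|\omega(A_{\boldsymbol n}^{(r_{\boldsymbol n})})-\omega(\iota_{\boldsymbol n}\tau_{t_{\boldsymbol n}}(A))|\to0$; by invariance of $\omega$, $\omega(\iota_{\boldsymbol n}\tau_{t_{\boldsymbol n}}(A))=\omega(A)$. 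Decomposing $\omega(\iota_{\boldsymbol n}\tau_{t_{\boldsymbol n}}(A)B)-\omega(A)\omega(B)$ as the sum of the three differences $\omega(\iota_{\boldsymbol n}\tau_{t_{\boldsymbol n}}(A)B)-\omega(A_{\boldsymbol n}^{(r_{\boldsymbol n})}B)$, $\ \omega(A_{\boldsymbol n}^{(r_{\boldsymbol n})}B)-\omega(A_{\boldsymbol n}^{(r_{\boldsymbol n})})\omega(B)$, and $\big(\omega(A_{\boldsymbol n}^{(r_{\boldsymbol n})})-\omega(A)\big)\omega(B)$, each tending to $0$, and using that a real function on $\Z^D$ tends to a limit as $\boldsymbol n\to\infty$ iff it does so along every sequence with $|\boldsymbol n|\to\infty$, gives the claim for all local $A,B$; the concluding statements (space-like ergodicity and space-like non-oscillation) then follow verbatim as after \Cref{th:space-like-clustering}. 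The one ingredient beyond elementary estimates is the local-approximation bound: deriving it with an $r$-independent prefactor from the commutator form of \Cref{lem:liebrobinsonbound} is the standard telescoping-over-nested-shells argument (and is where the factor $N^{2|X|}$ in \eqref{eq:interaction} is used); with that in hand, the proof is merely the bookkeeping of the two competing scales $r_{\boldsymbol n}$ and $\dist(\supp(A_{\boldsymbol n}^{(r_{\boldsymbol n})}),\Lambda_B)$, which separate precisely because $\upsilon>\upsilon_{LR}$.
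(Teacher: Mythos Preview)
Your proof is correct and follows essentially the same strategy as the paper: replace the time-evolved observable by a strictly local one using the Lieb-Robinson local-approximation bound (the paper's Proposition~\ref{prop:LRapproximation}), then invoke the space-clustering hypothesis once the supports are separated. Your execution is in fact tidier than the paper's: you approximate the $A$-side with a single scale $r_{\boldsymbol n}=\lceil\alpha|\boldsymbol n|\rceil$ chosen so that both $r_{\boldsymbol n}-\upsilon_{LR}t_{\boldsymbol n}\to\infty$ and $\dist(\supp A_{\boldsymbol n}^{(r_{\boldsymbol n})},\Lambda_B)\to\infty$, whereas the paper approximates the $B$-side and uses a two-scale $(r,l)$ argument with an iterated limit; the content is the same.
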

This condition is, for example, satisfied in high temperature KMS states \cite[Theorem 3.2]{frohlich_properties_2015} where the connected correlation decays exponentially fast with the distance. In particular, in one spatial dimension we have exponential space clustering in all non-zero temperature KMS states by the result of Araki, \cite{araki_gibbs_1969}.

\section{Proofs of Ergodicity, Projection and Mean-Square Theorems} \label{sectproofergodicityprojection}

\subsection{Ergodicity and Projection Theorems \ref{th:maintheorem}, \ref{th:rankone}} \label{section:ergodicityproof}
Consider a space-like ergodic $ω \in E_{\mathfrak{U}}$ and its respective GNS representation $(H_ω, π_ω, Ω_ω)$. Since $ω$ is $ι,τ$-invariant there exists a representation of $\mathbb{Z}^D \times \R$ by unitary operators $U(\boldsymbol n,t)$ acting on $H_ω$, see \Cref{prop:gns}. This representation is fully determined by the properties
\begin{equation}
    U_{\omega}(\boldsymbol n,t) \pi_{\omega}(A) U_{\omega}(\boldsymbol n,t)^{*}=\pi_{\omega}\left(ι_{\boldsymbol n} τ_t (A)\right)
\end{equation}
 and invariance of the cyclic vector $Ω_ω$
 \begin{equation}
     U_{\omega}(\boldsymbol n,t) \Omega_{\omega}=\Omega_{\omega}
 \end{equation}
for all $A \in \mathfrak{U}$, $(\boldsymbol n,t) \in \Z^D \times \R$.

Let $P_{υ,\boldsymbol n}$ be the orthogonal projection on the subspace of $H_ω$ that is invariant under $U_\omega(\boldsymbol n, υ^{-1}|\boldsymbol n|)$. This subspace will contain at least $Ω_ω$. We will first prove  Theorem \ref{th:rankone} using the following Lemmata.  

\begin{lem}
Consider the assumptions of Theorem \ref{th:rankone} and let $P_{υ,\boldsymbol n}^{(r)}$, $r= 1,2,...$, denote the projection onto the subspace of $H_ω$ spanned by vectors invariant under $U_ω^r (\boldsymbol n, υ^{-1}|\boldsymbol n|)$, with $P_{υ,\boldsymbol n}^{(1)}= P_{υ,\boldsymbol n}$. It follows that  $P_{υ,\boldsymbol n}^{(r)}$ is the rank one projection onto $Ω_ω$ for any $υ \in \hat{\mathbb{R}}$ with $|υ|>υ_c$, and all $\boldsymbol n \in \mathbb{Z}^D$, $r=1,2,...$ . 
\label{lem:1}
\end{lem}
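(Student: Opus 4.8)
The plan is to move the entire statement into the GNS representation, recognise $P_{υ,\boldsymbol n}^{(r)}$ as the fixed-point projection of a single ray-translation unitary, and then pin that projection down by combining von Neumann's mean ergodic theorem with the space-like ergodicity hypothesis.

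First I would remove the dependence on $r$. Set $V \coloneqq U_ω(\boldsymbol n,υ^{-1}|\boldsymbol n|)$. Since $U_ω$ is a unitary representation of $\Z^D\times\R$ and the $\ell_1$ norm satisfies $|r\boldsymbol n| = r|\boldsymbol n|$ for $r\in\N$, we get $V^r = U_ω(r\boldsymbol n,\, rυ^{-1}|\boldsymbol n|) = U_ω(r\boldsymbol n,\, υ^{-1}|r\boldsymbol n|)$. Hence the subspace fixed by $V^r$ coincides with the subspace fixed by the single ray-translation unitary $U_ω(r\boldsymbol n,υ^{-1}|r\boldsymbol n|)$, i.e.\ $P_{υ,\boldsymbol n}^{(r)} = P_{υ,r\boldsymbol n}$. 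It therefore suffices to prove the claim for $r=1$ and every non-zero lattice vector, then apply it with $\boldsymbol n$ replaced by $r\boldsymbol n$.

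For $r=1$: von Neumann's mean ergodic theorem \cite[Theorem II.11]{reed_i_1981} gives that $\frac1N\sum_{m=0}^{N-1}V^{-m}$ converges in the strong operator topology to $P_{υ,\boldsymbol n}$, the orthogonal projection onto $\{ψ\in H_ω : Vψ=ψ\}$; this range contains $Ω_ω$ because $U_ω$ fixes $Ω_ω$, and the task is to show it contains nothing more. I would then rewrite the hypothesis in the GNS picture: using $ι_{\boldsymbol n}^m τ_{υ^{-1}|\boldsymbol n|}^m = ι_{m\boldsymbol n}τ_{mυ^{-1}|\boldsymbol n|}$, the covariance relation $π_ω(ι_{\boldsymbol m}τ_t(A)) = U_ω(\boldsymbol m,t)π_ω(A)U_ω(\boldsymbol m,t)^*$ and $V^{-m}Ω_ω=Ω_ω$, one finds $ω(ι_{\boldsymbol n}^m τ_{υ^{-1}|\boldsymbol n|}^m(A)\,B) = \langle π_ω(A^*)Ω_ω,\, V^{-m}π_ω(B)Ω_ω\rangle$ for $A,B\in\mathfrak{U}_{\rm loc}$; averaging over $m$ and using strong convergence of the Cesàro means (the inner product is continuous), space-like ergodicity turns into
\begin{equation*}
  \langle π_ω(A^*)Ω_ω,\; P_{υ,\boldsymbol n}\,π_ω(B)Ω_ω\rangle = ω(A)\,ω(B),\qquad A,B\in\mathfrak{U}_{\rm loc}.
\end{equation*}
Now compare with the rank-one projection $Q$ onto $\C Ω_ω$, $Qψ \coloneqq \langle Ω_ω,ψ\rangle\,Ω_ω$: one has $Q\,π_ω(B)Ω_ω = ω(B)\,Ω_ω$ and $\langle π_ω(A^*)Ω_ω, Ω_ω\rangle = \overline{ω(A^*)} = ω(A)$, so the right-hand side above equals $\langle π_ω(A^*)Ω_ω,\, Q\,π_ω(B)Ω_ω\rangle$. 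Thus $\langle ξ,(P_{υ,\boldsymbol n}-Q)η\rangle=0$ for all $ξ,η$ in $\{π_ω(C)Ω_ω : C\in\mathfrak{U}_{\rm loc}\}$ (using that $\mathfrak{U}_{\rm loc}$ is $*$-closed), and this set is dense in $H_ω$ by cyclicity of $Ω_ω$, norm-density of $\mathfrak{U}_{\rm loc}$ in $\mathfrak{U}$, and norm-continuity of $π_ω$; hence $P_{υ,\boldsymbol n}=Q$, the rank-one projection onto $Ω_ω$.

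I do not expect a real obstacle, only bookkeeping: the identity $υ^{-1}|r\boldsymbol n| = r\,υ^{-1}|\boldsymbol n|$ that makes $V^r$ a ray-translation unitary of the form already covered; the harmless exclusion $\boldsymbol n\neq\boldsymbol0$ (for $\boldsymbol n=\boldsymbol0$ the operator is the identity, so the assertion is really about directions); and the density step, which is precisely why space-like ergodicity is assumed for all of $\mathfrak{U}_{\rm loc}$ and not merely for a generating subset. The endpoint speeds $υ=\pm\infty$ need no separate treatment: then $υ^{-1}|\boldsymbol n|=0$ and $V=U_ω(\boldsymbol n,0)$ is still a ray-translation unitary, so the same argument goes through verbatim.
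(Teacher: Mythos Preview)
Your argument is correct and is essentially the paper's own proof: both pass to the GNS representation, invoke von Neumann's mean ergodic theorem, and use cyclicity of $Ω_ω$ together with the space-like ergodicity identity to pin the projection to the rank-one projection onto $Ω_ω$. Your explicit reduction $P_{υ,\boldsymbol n}^{(r)}=P_{υ,r\boldsymbol n}$ via $|r\boldsymbol n|=r|\boldsymbol n|$ is a tidy way to phrase what the paper does implicitly when it writes $ι_{\boldsymbol n}^{rm}τ_{υ^{-1}|\boldsymbol n|}^{rm}$ and appeals to space-like ergodicity (which is the definition applied at the vector $r\boldsymbol n$), but the substance is the same.
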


\begin{proof}
Space-like ergodicity, expressed in the GNS representation (\Cref{prop:gns}) implies, for $A,B\in\mathfrak U_{\rm loc}$:
\[\arraycolsep=1.4pt\def\arraystretch{2.5}
\begin{array}{*3{>{\displaystyle}lc}p{5cm}}
\lim_{N \to \infty} \frac{1}{N} \sum_{m=0} ^{N-1} ω\big(  ι_{\boldsymbol n}^{rm} τ_{υ^{-1}|\boldsymbol n|}^{rm} (A) B \big)= ω(A)ω(B) \implies \\ 
\lim_{N \to \infty} \frac{1}{N} \sum_{m=0} ^{N-1} \langle Ω_ω ,  π_ω(ι_{\boldsymbol n}^{rm} τ_{υ^{-1}|\boldsymbol n|}^{rm} (A)B) Ω_ω \rangle = \langle Ω_ω, π_ω(A)Ω_ω \rangle \langle Ω_ω, π_ω(B)Ω_ω \rangle \implies \\
\lim_{N \to \infty} \frac{1}{N} \sum_{m=0} ^{N-1} \langle Ω_ω, U^{rm}_ω(\boldsymbol n, υ^{-1}|\boldsymbol n|)π_ω(A) \big(U^{rm}_ω(\boldsymbol n, υ^{-1}|\boldsymbol n|)\big)^* π_ω(B) Ω_ω \rangle = \\
=\langle Ω_ω, π_ω(A)Ω_ω \rangle \langle Ω_ω, π_ω(B)Ω_ω \rangle. 
\end{array} \]
Taking advantage of the invariance of $Ω_ω$ we get:
\begin{equation}
 \begin{array}{*3{>{\displaystyle}lc}p{5cm}}
\lim_{N \to \infty} \frac{1}{N} \sum_{m=0} ^{N-1} \langle Ω_ω,π_ω(A) \big(U^{rm}_ω(\boldsymbol n, υ^{-1}|\boldsymbol n|)\big)^* π_ω(B) Ω_ω \rangle = \\
=\langle Ω_ω, π_ω(A)Ω_ω \rangle \langle Ω_ω, π_ω(B)Ω_ω \rangle. 
\end{array} 
\end{equation}

Then we move the limit of the sum inside the inner product and the adjoint operation by  continuity and use von Neumann's mean ergodic \Cref{th:neumann} (\cite[Theorem II.11]{reed_i_1981}) which states that this limit will be the projection $P^{(r)}_{υ,\boldsymbol n}$ (in the strong operator topology). Thus, we get
 \begin{equation}
 \langle Ω_ω, π_ω(A) P^{(r)}_{υ, \boldsymbol n} π_ω(B) Ω_ω \rangle = \langle Ω_ω, π_ω(A)Ω_ω \rangle \langle Ω_ω, π_ω(B)Ω_ω \rangle \ , \ \ \forall A,B \in \mathfrak{U}_{\rm loc}.
\end{equation}
 By continuity of the inner product and $π_ω$ this will hold for all $A,B \in \mathfrak{U}$. By cyclicity of $Ω_ω$ the set $\operatorname{span} \{π_ω(B)Ω_ω : B \in \mathfrak{U} \}$ is dense in $H_ω$. Hence, this proves the Lemma.
\end{proof}

\begin{lem}
Consider the assumptions of Theorem \ref{th:rankone}  and let $υ,w \in \hat{\mathbb{R}}$ with $υ \neq w$ and consider the projections $P_{υ,\boldsymbol n}$, $P_{w,\boldsymbol n}$ for some arbitrary $\boldsymbol n \in \mathbb{Z}^D$. If $Ψ \in P_{υ,\boldsymbol n}H_ω$, $Ψ \bot Ω_ω$ and $Ψ^{\prime} \in P_{w,\boldsymbol n}H_ω$, $Ψ^{\prime} \bot Ω_ω$, then $\langle Ψ,Ψ^{\prime}\rangle =0$. \label{lem:orthogonal}
\end{lem}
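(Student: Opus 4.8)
The plan is to collapse $\langle\Psi,\Psi'\rangle$ onto the fixed space of a pure time translation, and then bring that fixed space back under the control of \Cref{lem:1} by appending a large spatial translation. We take $\boldsymbol n\neq\boldsymbol 0$ (the relevant case) and set $a:=\upsilon^{-1}|\boldsymbol n|$ and $b:=w^{-1}|\boldsymbol n|$, with the convention $(\pm\infty)^{-1}=0$; since $\upsilon\neq w$ the real number $\delta:=a-b$ is nonzero. Write $V_\upsilon:=U_\omega(\boldsymbol n,a)$, $V_w:=U_\omega(\boldsymbol n,b)$ and $W_s:=U_\omega(\boldsymbol 0,s)$. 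Since the group $\Z^D\times\R$ is abelian these unitaries commute, and $V_\upsilon V_w^{-1}=U_\omega(\boldsymbol 0,a-b)=W_\delta$.

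First I would use the two invariances $V_\upsilon\Psi=\Psi$ and $V_w\Psi'=\Psi'$ to write, for every $j\in\Z$,
\begin{equation}
 \langle\Psi,\Psi'\rangle=\langle V_\upsilon^{j}\Psi,\,V_w^{j}\Psi'\rangle=\langle\Psi,\,V_\upsilon^{-j}V_w^{j}\Psi'\rangle=\langle\Psi,\,W_{-j\delta}\Psi'\rangle .
\end{equation}
Averaging over $j=0,\dots,N-1$ and letting $N\to\infty$, von Neumann's mean ergodic theorem \cite[Theorem II.11]{reed_i_1981}, applied to the unitary $W_{-\delta}$, gives $\frac1N\sum_{j=0}^{N-1}W_{-\delta}^{j}\to Q_\delta$ in the strong operator topology, where $Q_\delta$ is the orthogonal projection onto $\{\Phi\in H_\omega:W_\delta\Phi=\Phi\}$. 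By continuity of the inner product, $\langle\Psi,\Psi'\rangle=\langle\Psi,Q_\delta\Psi'\rangle$, so it suffices to show $Q_\delta\Psi'=0$.

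Next I would observe that $Q_\delta\Psi'$ is invariant under $W_\delta$ by construction, and also under $V_w$: since $W_\delta$ commutes with $V_w$, so does $Q_\delta$, whence $V_wQ_\delta\Psi'=Q_\delta V_w\Psi'=Q_\delta\Psi'$. Consequently $Q_\delta\Psi'$ is fixed by $W_\delta^{j}V_w^{k}=U_\omega(k\boldsymbol n,\,j\delta+kb)$ for all $j,k\in\Z$. The decisive step is a Diophantine choice of $j,k$: since $\delta\neq0$, for every integer $k$ there is an integer $j$ with $|j\delta+kb|\le|\delta|/2$; picking $k\neq0$ with $|k|>\upsilon_c|\delta|/(2|\boldsymbol n|)$ and the corresponding $j$, one gets $|j\delta+kb|\le|\delta|/2<|k|\,|\boldsymbol n|/\upsilon_c$, i.e.\ the space-time ray $(k\boldsymbol n,\,j\delta+kb)$ has speed strictly larger than $\upsilon_c$. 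Hence by \Cref{lem:1} --- applied with $r=|k|$ and lattice direction $\pm\boldsymbol n$, and using that $U_\omega(-\boldsymbol m,-s)=U_\omega(\boldsymbol m,s)^{-1}$ has the same fixed space as $U_\omega(\boldsymbol m,s)$ --- the projection onto the subspace fixed by $U_\omega(k\boldsymbol n,\,j\delta+kb)$ is the rank-one projection onto $\Omega_\omega$. Therefore $Q_\delta\Psi'\in\C\,\Omega_\omega$; and since $\Omega_\omega$ is $W_\delta$-invariant and $\Psi'\perp\Omega_\omega$, we get $Q_\delta\Psi'=\langle\Omega_\omega,Q_\delta\Psi'\rangle\,\Omega_\omega=\langle\Omega_\omega,\Psi'\rangle\,\Omega_\omega=0$, hence $\langle\Psi,\Psi'\rangle=\langle\Psi,Q_\delta\Psi'\rangle=0$.

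I expect the one genuine idea --- and the only real obstacle --- to be this last recombination. \Cref{lem:1} only controls fixed vectors of space-time translations along fast rays ($|\upsilon|>\upsilon_c$), whereas $\upsilon$ and $w$ may both be slow; but the difference of two slow rays sharing the same spatial part $\boldsymbol n$ is a pure time translation $W_\delta$, and $W_\delta$ can be turned back into a fast ray $U_\omega(k\boldsymbol n,j\delta+kb)$ by appending a sufficiently large spatial shift $k\boldsymbol n$ while, thanks to the Diophantine choice of $j$, keeping the time component bounded by $|\delta|/2$. The remaining points are routine: that $Q_\delta$ commutes with $V_w$, that $V_\upsilon V_w^{-1}=W_\delta$ with the correct sign, that \Cref{lem:1} applies for negative multiples of $\boldsymbol n$, and that $\delta\neq0$ under the stated hypotheses.
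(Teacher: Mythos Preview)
Your proof is correct and, in fact, somewhat cleaner than the paper's argument. Both proofs hinge on the same underlying idea: although $\upsilon$ and $w$ may individually be ``slow'' (inside the cone), one can algebraically combine the two invariances to manufacture a unitary $U_\omega(m\boldsymbol n,s)$ with $|m|\,|\boldsymbol n|/|s|>\upsilon_c$, to which \Cref{lem:1} applies. The difference is in how this combination is organised. The paper works directly on $\langle\Psi,\Psi'\rangle$, splitting into the cases $w=0$ and $w\neq 0$; in the latter it chooses a rational $p/q$ close to $w\upsilon^{-1}$ and a small $\varepsilon$ so that $q$ copies of the $\upsilon$-invariance and $p$ copies of the $w$-invariance fuse into $U_\omega^{\,q-p}(\boldsymbol n,z^{-1}|\boldsymbol n|)$ with $|z|>\upsilon_c$. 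Your route is more structural: you first observe that $V_\upsilon V_w^{-1}=W_\delta$ is a pure time translation, project $\Psi'$ onto the $W_\delta$-fixed space to get $Q_\delta\Psi'$, note that $Q_\delta\Psi'$ inherits the $V_w$-invariance, and only then make a single Diophantine choice of $(j,k)$ to land on a fast ray. This avoids the case split and the somewhat ad hoc parameter $\varepsilon$ in the paper's proof; the price is the extra (but routine) observation that $Q_\delta$ commutes with $V_w$.

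Two small edge-case remarks, neither of which affects the substance. First, ``$\upsilon\neq w\Rightarrow\delta\neq 0$'' fails when $\{\upsilon,w\}=\{+\infty,-\infty\}$; but then both speeds exceed $\upsilon_c$, so \Cref{lem:1} makes $\Psi=\Psi'=0$ and the claim is vacuous. Second, your convention does not define $a$ when $\upsilon=0$ (pure time ray); the paper treats this separately in its Case~1, and in your framework it can be handled by swapping the roles of $\Psi$ and $\Psi'$ so that the finite inverse sits on the side you manipulate.
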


\begin{proof}
If $P_{υ,\boldsymbol n}$ (or $P_{w,\boldsymbol n}$) is rank one, then obviously $Ψ$ (resp. $Ψ^{\prime}$) will be $0$ and we are done. Suppose that $Ψ,Ψ^{\prime} \neq 0$ and without loss of generality $υ \neq 0$, $w \neq \infty$. 

\textit{Case 1: $w = 0$}. Then $Ψ^{\prime}$ invariant under time translations, so $U_ω(0,r)Ψ^{\prime}=Ψ^{\prime}$ for any $r \in \mathbb{Z}$. Hence, choosing $r=p|\boldsymbol n| \in \mathbb{Z}$ (note that $|\boldsymbol n|\in \N$ for the $\ell_1$ norm) for any arbitrary $p \in \mathbb{Z}$ we can write
\begin{equation}
\langle Ψ, Ψ^{\prime} \rangle = \langle Ψ, U_ω(0,p|\boldsymbol n|) Ψ^{\prime} \rangle   = \langle U_ω(0,-p|\boldsymbol n|)Ψ, Ψ^{\prime} \rangle .
\end{equation}
 By assumption $U_\omega(\boldsymbol n, υ^{-1} |\boldsymbol n|)Ψ=Ψ$, hence also  $U_\omega^q(\boldsymbol n, υ^{-1} |\boldsymbol n|)Ψ=Ψ$ for any integer $q$. Using this, and the fact that $[ U_ω(\boldsymbol n,t) , U_ω(\boldsymbol n ^{\prime},t^{\prime})]=0 , \forall \boldsymbol n,\boldsymbol n ^{\prime} \in \mathbb{Z}^D$ and $ t,t^{\prime} \in \mathbb{R}$, we have for any $q \in \mathbb{Z}$
\begin{equation}
\langle Ψ, Ψ^{\prime} \rangle =  \langle U_ω^q(\boldsymbol n, υ^{-1}|\boldsymbol n|)U_ω(0,-p|\boldsymbol n|)Ψ, Ψ^{\prime} \rangle = \langle U_ω^q\big(\boldsymbol n, (υ^{-1}- \frac{p}{q})  |\boldsymbol n| \big)Ψ, Ψ^{\prime} \rangle 
\end{equation}
where in the last equality we used the group properties.

For any $υ^{-1} \in \mathbb{R}$, we can find $p \in \mathbb{Z}$ and $q \in \mathbb{Z}_+$ such that \begin{equation}
|qυ^{-1} -p| < qυ_c ^{-1}.
\end{equation}
This is true for any $q$ sufficiently large so that $qυ^{-1}_c >1$, and $p = \floor{qυ^{-1}}$.
Thus, by choosing such $p,q$ we have:
\begin{equation}
    \langle Ψ, Ψ^{\prime} \rangle  = \langle U_ω^q(\boldsymbol n, z^{-1}|\boldsymbol n|)Ψ, Ψ^{\prime} \rangle 
\end{equation}
for $z^{-1} = υ^{-1} - p/q$ with $p,q$ such that $|z| > υ_c$. We can repeat the same process for $\langle U_ω^q(\boldsymbol n, z^{-1}|\boldsymbol n|)Ψ, Ψ^{\prime} \rangle$ to get:
\begin{equation}
 \begin{array}{*3{>{\displaystyle}lc}p{5cm}}
   \langle Ψ, Ψ^{\prime} \rangle&=&   \langle U_ω^q(\boldsymbol n, z^{-1}|\boldsymbol n|)Ψ, Ψ^{\prime} \rangle = \\ 
   &=&  \langle \big(U_ω^q(\boldsymbol n, z^{-1}|\boldsymbol n|) \big)^2Ψ, Ψ^{\prime} \rangle = ...=  \langle \big(U_ω^q(\boldsymbol n, z^{-1}|\boldsymbol n|)\big)^NΨ, Ψ^{\prime} \rangle 
 \end{array}
\end{equation}
for any $ N \in \mathbb{N}$. We then split  $\langle Ψ, Ψ^{\prime} \rangle$ as $\frac{1}{N}$ times a sum of $N$ equal terms:
\begin{equation}
     \langle Ψ, Ψ^{\prime} \rangle = \frac{1}{N} \sum_{m=0}^{N-1}  \langle \big(U_ω^q(\boldsymbol n, z^{-1}|\boldsymbol n|)\big)^mΨ, Ψ^{\prime} \rangle \ , \forall N \in \mathbb{N}.
     \end{equation}
By von Neumann's ergodic theorem the sum converges to the projection $P^{(q)}_{z,\boldsymbol n}$. Since $|z|>υ_c$ Lemma \ref{lem:1} gives that $P^{(q)}_{z,\boldsymbol n}$ is the rank one projection on $Ω_ω$ and by assumption $Ψ \bot Ω_ω$, hence 
\begin{equation}
     \langle Ψ, Ψ^{\prime} \rangle = \langle P_{z,\boldsymbol n}^{(q)}Ψ, Ψ^{\prime} \rangle = 0.
\end{equation}
\textit{Case 2: $w \neq 0$}. For any real $η>0$ we can find $p,q \in \mathbb{Z}$ so that $p/q \in [wυ^{-1}, w(υ^{-1} +η)]$, since the rationals are dense in $\mathbb{R}$. Choose $0<ε < \frac{ |1- wυ^{-1}|}{υ_c + |w|}$ such that \begin{equation}
w(υ^{-1}+ε) = \frac{p}{q} \in \mathbb{Q}. \label{eq:wυ}
\end{equation}
Taking advantage of invariance of $Ψ$ under $U_ω^q(\boldsymbol n,υ^{-1}|\boldsymbol n|)=U_ω(q\boldsymbol n,qυ^{-1}|\boldsymbol n|)$ and $Ψ^{\prime}$ under $U_ω^p(\boldsymbol n,w^{-1}|\boldsymbol n|)$ and the group property, like in the previous case, we can write:
\begin{equation}
\begin{array}{*3{>{\displaystyle}lcl}p{5cm}}
\langle Ψ, Ψ^{\prime} \rangle &=& \langle U_ω^q(\boldsymbol n,υ^{-1}|\boldsymbol n|)Ψ, Ψ^{\prime} \rangle \\
&= &\langle U_ω(0,-qε|\boldsymbol n|)U_ω(q\boldsymbol n,q(υ^{-1}+ε)|\boldsymbol n|)Ψ, Ψ^{\prime} \rangle .
\end{array}\end{equation}
Now, using Eq. \ref{eq:wυ} we have
\begin{equation}
\arraycolsep=1.4pt\def\arraystretch{1.3}
\begin{array}{*3{>{\displaystyle}lc}p{5cm}}
    \langle Ψ, Ψ^{\prime} \rangle&=&  \langle U_ω\big(0,-qε|\boldsymbol n|\big)U_ω\big(q\boldsymbol n,pw^{-1}|\boldsymbol n|\big) Ψ, Ψ^{\prime} \rangle \\
   &=&  \langle U_ω\big((q-p)\boldsymbol n,0\big)U_ω\big(0,-qε|\boldsymbol n|\big)U_ω\big(p\boldsymbol n,pw^{-1}|\boldsymbol n|\big) Ψ, Ψ^{\prime} \rangle \\
  &=&   \langle U_ω\big((q-p)\boldsymbol n,0\big)U_ω\big(0,-qε|\boldsymbol n|\big) Ψ, U_ω^{-p}\big(\boldsymbol n,w^{-1}|\boldsymbol n|\big)Ψ^{\prime} \rangle \\
   &=&  \langle U_ω\big((q-p)\boldsymbol n,0 \big)U_ω\big(0,-qε|\boldsymbol n|\big) Ψ, Ψ^{\prime} \rangle
    \end{array}
\end{equation}

where in the last equality we used invariance of $Ψ^{\prime}$. We can now define a $z \in \mathbb{R}$ by $-qε= z^{-1}(q-p)$, that is
\begin{equation}
    z= - \frac{q-p}{qε} = - \frac{1}{ε} + \frac{p}{q} \frac{1}{ε} =   - \frac{1}{ε}(1- \frac{w}{υ}) + w
\end{equation}
where in the last equality we used Eq. \ref{eq:wυ}. By the restrictions on $ε$, imposed above Eq. \ref{eq:wυ}, we can see $|z|>υ_c$. Hence, we can write
\begin{equation}
    \langle Ψ, Ψ^{\prime} \rangle = \langle U_ω^{q-p}(\boldsymbol n,  z^{-1}|\boldsymbol n|)Ψ, Ψ^{\prime} \rangle
\end{equation}
and similarly to the previous case we can repeat this process to get
\begin{equation}
\langle Ψ, Ψ^{\prime} \rangle = \langle P^{q-p}_{z,\boldsymbol n}Ψ, Ψ^{\prime} \rangle = 0. 
\end{equation}
\end{proof}
Equipped with this Lemma we can prove the Projection Theorem:
\begin{proof}[Proof of \Cref{th:rankone} ]
The quasi-local algebras of  quantum spin lattices are uniformly hyperfinite, or UHF, see \cite{glimm_certain_1960} . The GNS representation of a UHF algebra is over separable Hilbert space, since it is cyclic. The proof of this claim is exactly the same as \cite[Theorem 3.5]{glimm_certain_1960}, by using the cyclic vector $Ω_ω$. Thus every orthonormal set in $H_ω$ is countable.

Consider for  arbitrary $\boldsymbol n \in \mathbb{Z}^D$ the set
\begin{equation}
    \mathcal{K}_{\boldsymbol n} =\{ υ \in  \hat{\mathbb{R}}: {\rm rank}\,{P_{υ,\boldsymbol n}}>1. \}  
\end{equation}
For each  $υ \in \mathcal{K}_{\boldsymbol n}$ we can choose (by the axiom of choice) a non-zero $Ψ_{υ} \in P_{υ,\boldsymbol n}H_ω$ with $\norm{Ψ_υ} =1$ and (since ${\rm rank}\,{P_{υ,\boldsymbol n}}>1$) $Ψ_{υ} \bot Ω_ω$. By Lemma \ref{lem:orthogonal} we have $\langle Ψ_υ , Ψ_w \rangle=0$ for every $υ,w \in \mathcal{K}_{\boldsymbol n}$ with $υ \neq w$. Hence the cardinality of $\mathcal{K}_{\boldsymbol n}$ is the cardinality of an orthonormal set of $H_ω$, which is countable, i.e.\ of (Lebesgue) measure $0$. 

\end{proof}

Finally we  use von Neumann's ergodic theorem and the Projection Theorem \ref{th:rankone} to prove the Ergodicity Theorem \ref{th:maintheorem}.

\begin{proof}[Proof of Theorem \ref{th:maintheorem}]
Let $\boldsymbol q = \boldsymbol{n} / |\boldsymbol n| \in \Srat^{D-1}$ for $\boldsymbol n=(n_1,n_2,...,n_D) \in \mathbb{Z}^D$, and $A,B \in \mathfrak{U}$. For $υ>0$ (this suffices since we show the result for all $\boldsymbol n \in \Z^D$) and $\boldsymbol{υ}= υ \boldsymbol{q}=υ \frac{\boldsymbol n}{|\boldsymbol n|}$ we write:
\begin{equation} 
I \coloneqq  \frac{1}{T} \int_0^T ω \big( ι_{\floor{ \boldsymbol{υ} t}}τ_t (A) B \big)  \,dt 
= \frac{1}{υT| \boldsymbol n|^{-1}} \int_0^{υT| \boldsymbol n|^{-1}}  ω\big( ι_{\floor{x \boldsymbol n}} τ_{υ^{-1}|\boldsymbol n|x} (A) B\big) \,dx
\end{equation}
where we changed variable to $x=υt/|\boldsymbol n|$ and use the notation $\floor{x\boldsymbol n} = (\floor{x n_1}, ... , \floor{xn_D})$. We are interested in the limit $\lim_{T \to \infty} I $. Τhis limit can be taken over the integers since the integrand is bounded and hence the integral is continuous in $T$. We now write this expression in the GNS representation:
\begin{equation}
      I =  \frac{1}{N} \int_0^N \langle Ω_ω, U_ω\big(\floor{x \boldsymbol n},υ^{-1}|\boldsymbol n|x\big) π_ω(A)  U_ω^*\big(\floor{x \boldsymbol n},υ^{-1}|\boldsymbol n|x\big)  π_ω(B) Ω_ω \rangle \,dx
\end{equation}
We use invariance of $Ω_ω$ so that the integrand becomes:  $$\langle Ω_ω, π_ω(A) U_ω^*\big(\floor{x \boldsymbol n},υ^{-1}|\boldsymbol n|x\big) π_ω(B) Ω_ω \rangle.$$ We then split the integral into a sum $\int_0^1 \,dx_0 + \int_1^2 \,dx_1 + ...+ \int_{N-1}^N \,dx_{N-1}$ and in each of these integrals we change variable to $x_k=y_k+k$, with k being the lower limit of each integral:
\begin{equation}
    I = \frac{1}{N} \sum_{k=0}^{N-1} \int_0^1 \langle Ω_ω, π_ω(A) U_ω^*\big(\floor{(y_k+k)\boldsymbol n},(y_k+k) υ^{-1}|\boldsymbol n|\big) \times \\
    π_ω(B) Ω_ω \rangle \, dy_k . \label{eq:splittingintegral}
\end{equation}
Now $kn_i$ is an integer and can be moved outside the floor function, and using the group properties we write:
\begin{equation}
\arraycolsep=1.4pt\def\arraystretch{2.7}
\begin{array}{*3{>{\displaystyle}lc}p{5cm}}  
    I &=&\frac{1}{N} \sum_{k=0}^{N-1} \int_0^1 \langle Ω_ω, π_ω(A) U^*_ω\big(k \boldsymbol n,kυ^{-1}|\boldsymbol n|\big) U^*_ω\big( \floor{y \boldsymbol n}, yυ^{-1}|\boldsymbol n|\big) π_ω(B) Ω_ω \rangle \,dy \\
    &=&\frac{1}{N} \sum_{k=0}^{N-1} \int_0^1 \langle Ω_ω, π_ω(A) \bigg(U^k_ω\big( \boldsymbol n,υ^{-1}|\boldsymbol n|\big) \bigg)^* U^*_ω\big( \floor{y \boldsymbol n}, yυ^{-1}|\boldsymbol n|\big) π_ω(B) Ω_ω \rangle \,dy
   \end{array}
\end{equation}
We now apply the large $N$ limit. We can calculate $\lim_{N \to \infty}I$ by using the bounded convergence theorem to exchange the limit of the sum and the integral,
 \begin{equation} 
 \begin{array}{*3{>{\displaystyle}lc}p{5cm}} 
 \lim_{N \to \infty} I= \\
     \int_0^1 \langle Ω_ω, π_ω(A) U_ω^* (\floor{y \boldsymbol n}, yυ^{-1}|\boldsymbol n|) \lim_{N \to \infty} \frac{1}{N} \sum_{n=0}^{N-1} \big(U^*_ω(\boldsymbol n,υ^{-1}|\boldsymbol n|)\big)^n  π_ω(B) Ω_ω \rangle\,dy
     \end{array}
 \end{equation}
and  using von Neumann's ergodic theorem, \cite[Theorem II.11]{reed_i_1981}:
\begin{equation}
    I=\int_0^1 \langle Ω_ω, π_ω(A) U_ω^* \big(\floor{y \boldsymbol n}, yυ^{-1}|\boldsymbol n|)  P_{υ,\boldsymbol n} π_ω(B) Ω_ω \rangle\,dy.
\end{equation}
Theorem \ref{th:rankone} tells us that $P_{υ,\boldsymbol n}$ is the rank one projection on $Ω_ω$ for almost all $υ$, hence:
\begin{equation}
    I= \int_0^1 \langle Ω_ω, π_ω(A) U_ω^* \big(\floor{y \boldsymbol n}, yυ^{-1}|\boldsymbol n|\big) Ω_ω\rangle \langle Ω_ω, π_ω(B) Ω_ω  \rangle \,dy \ \ \text{ a.e.} 
\end{equation}
and finally, since $Ω_ω$ is invariant we get
\begin{equation}
    I = \langle Ω_ω, π_ω(A)  Ω_ω\rangle \langle Ω_ω, π_ω(B) Ω_ω  \rangle = ω(A) ω(B) 
\end{equation}
for almost all $υ$. 
\end{proof}

\subsection{Proof of Mean-Square Ergodicity Theorem \ref{th:meansquared} and \Cref{th:meanN}} \label{section:meansquaredproof}

\begin{proof}[Proof of \Cref{th:meansquared}]
Let $\boldsymbol n \in \mathbb{Z}^D$, $A,B \in \mathfrak{U}$, $υ >0$ so that $\boldsymbol{υ}= υ \boldsymbol {n}/|\boldsymbol n|$ and consider
\begin{equation}
    I(T,T^{\prime}) = \frac{1}{T^{\prime}T} \int_0^{T^{\prime}} dt_2\int_0^{T} dt_1\, ω\big(ι_{\floor{ \boldsymbol{υ} t_1}}τ_{t_1} (A)  ι_{\floor{\boldsymbol{υ}t_2}}τ_{t_2} (B)  \big) \  , \ \  T,T^{\prime}>0.
\end{equation}

We will apply the Moore-Osgood \Cref{th:MooreOsgood} \cite[p. 140]{taylor_general_1985} to show that the double limit $\lim_{T,T^{\prime} \to \infty}I(T,T^{\prime})$ is equal to the iterated one. The Theorem states that if the iterated limit $\lim_{T^{\prime}\to\infty}(\lim_{T\to\infty}\cdot)$ exists, and the limit $\lim_{T\to\infty}\cdot$ exists uniformly in $T'$, then the double limit exists and is equal to the iterated one. We can easily calculate the iterated limit $\lim_{T^{\prime} \to \infty}\big(\lim_{T \to \infty}  I(T,T^{\prime}) \big)$ by using Theorem \ref{th:maintheorem}:
\begin{equation}
 \begin{array}{*3{>{\displaystyle}lc}p{5cm}} 
   \lim_{T^{\prime} \to \infty}\big(\lim_{T\to \infty} \frac{1}{T^{\prime}T} \int_0^{T^{\prime}} dt_2\int_0^{T} dt_1\, ω\big(ι_{\floor{ \frac{υ}{|\boldsymbol n|} t_1}\boldsymbol n}τ_{t_1} (A)  ι_{\floor{ \frac{υ}{|\boldsymbol n|} t_2}\boldsymbol n}τ_{t_2} (B)  \big) \big)\\ =
   \lim_{T^{\prime} \to \infty} \frac1{T'}
   \int_0^{T'} dt_2\,
   ω(A) ω\big(ι_{\floor{ \frac{υ}{|\boldsymbol n|} t_2}\boldsymbol n}τ_{t_2} (B)  \big)
   =ω(A)\omega(B)
   \end{array}
\end{equation}
for almost all $υ$. This shows that $\lim_{T \to \infty}$ exists pointwise in $T'$. In order to use the Moore-Osgood Theorem it remains to show that the limit on $T$ is uniform in $T'$. Consider the $T \to \infty$ limit, by following the same steps as in the proof of \Cref{th:maintheorem} in \Cref{section:ergodicityproof}. Since the integrand is bounded $I(T,T^{\prime})$ is continuous in $T$, hence the continuous limit can be changed with a discrete one. We have for all $T^{\prime}$, as in \Cref{eq:splittingintegral}
\begin{equation}
    \begin{array}{*3{>{\displaystyle}lc}p{5cm}}
    I(N,T^{\prime}) \coloneqq \\
    \frac{1}{T^{\prime}} \int_0^{T^{\prime}}dt_2 \int_0^1 dy \langle Ω_ω , π(A) \big( \frac{1}{N} \sum_{k=0}^{N-1} U^*(k \boldsymbol n, k υ^{-1}|\boldsymbol n|) \big) U^*( \floor{y\boldsymbol n}, yυ^{-1} |\boldsymbol n|) \times \\ U(\floor{υt_2},t_2) π(B) Ω_ω \rangle = \\
    \frac{1}{T^{\prime}} \int_0^{T^{\prime}}dt_2 \int_0^1 dy \langle \big( \frac{1}{N} \sum_{k=0}^{N-1} U(k \boldsymbol n, k υ^{-1}|\boldsymbol n|) \big) π(A^*) Ω_ω ,   U^*( \floor{y\boldsymbol n}, yυ^{-1} |\boldsymbol n|) \times \\ U(\floor{υt_2},t_2) π(B) Ω_ω \rangle.
    \end{array}
\end{equation}
The limit is done by using the bounded convergence theorem and von Neumann's ergodic theorem:
\begin{equation}
 \begin{array}{*3{>{\displaystyle}lc}p{5cm}}
    I(\infty,T^{\prime}) \coloneqq \lim_{N \to \infty} I(N,T') =\\
    =  \frac{1}{T^{\prime}} \int_0^{T^{\prime}}dt_2 \int_0^1 dy \langle  P_{υ,\boldsymbol n}π(A^*)Ω_ω , U^*( \floor{y\boldsymbol n}, yυ^{-1} |\boldsymbol n|) U(\floor{υt_2},t_2) π(B) Ω_ω \rangle  
\end{array}
\end{equation}
We can see
\begin{equation} \label{eq:Mean_squared_proof1}
  \begin{array}{*3{>{\displaystyle}lc}p{5cm}}
    |I(N,T^{\prime}) - I(\infty, T^{\prime})|= \\
  \bigg|   \frac{1}{T^{\prime}} \int_0^{T^{\prime}}dt_2 \int_0^1 dy \bigg\langle \big( \frac{1}{N} \sum_{k=0}^{N-1} U(k \boldsymbol n, k υ^{-1}|\boldsymbol n|) - P_{υ,\boldsymbol n}\big) π(A^*)Ω_ω ,  \\ U^*( \floor{y\boldsymbol n}, yυ^{-1} |\boldsymbol n|) U(\floor{υt_2},t_2) π(B) Ω_ω \bigg\rangle \bigg  |.
    \end{array}
\end{equation}
Using the Cauchy-Shwarz inequality and the fact that unitary operators preserve the norm of vectors,
\[
 \begin{array}{*3{>{\displaystyle}lc}p{5cm}}
    |I(N,T^{\prime}) - I(\infty, T^{\prime})| \leq  \\
   \leq \frac{1}{T^{\prime}}  \int_0^{T^{\prime}}dt_2 \int_0^1 dy \norm{\big( \frac{1}{N} \sum_{k=0}^{N-1} U(k \boldsymbol n, k υ^{-1}|\boldsymbol n|) - P_{υ,\boldsymbol n}\big) π(A^*)Ω_ω}\times\\  \norm{ U^*( \floor{y\boldsymbol n}, yυ^{-1} |\boldsymbol n|) U(\floor{υt_2},t_2) π(B) Ω_ω}\\ 
  = \norm{\big( \frac{1}{N} \sum_{k=0}^{N-1} U(k \boldsymbol n, k υ^{-1}|\boldsymbol n|) - P_{υ,\boldsymbol n}\big) π(A^*)Ω_ω}\,
   \norm{π(B) Ω_ω}.
    \end{array}
\]
Finally, for any $Ψ \in H_ω$ and for any  $ε>0$ there exists $N_0$ such that $N\geq N_0$ implies $\norm{ \big(\frac{1}{N} \sum_{k=0}^{N-1} U^*(k \boldsymbol n, k υ^{-1}|\boldsymbol n|) - P_{υ,\boldsymbol n} \big)Ψ} <ε$, by von Neumann's ergodic theorem. Therefore
\begin{equation}
    |I(T,T^{\prime}) - I(\infty, T^{\prime})| \leq ε \norm{π(B)Ω_ω} \ \text{ for all } T^{\prime}
\end{equation}
which shows that the limit is uniform in $T^{\prime}$.
Hence, the double limit exists and is equal to the iterated one, which completes the proof.
\end{proof}

\begin{proof}[Proof of \Cref{th:meanN}]
For simplicity and clarity of the notation, we do the case where all times are equal $T_1=T_2=\ldots=T_n$, and all operators are the same $A_1=A_2=\ldots=A_n$. The extension to the general case and the existence of the multiple limit will be discussed afterwards.

The proof is  done inductively.  Suppose  we have for $n \in \mathbb{N}$ that 
\begin{equation}
   \lim_{T \to \infty} \frac{1}{T^n} ω \bigg( \big(\int_0^T  ι_{\floor{ \boldsymbol{υ} t}}τ_{t} A  \,dt  \big)^n \bigg)= \big( ω(A) \big)^n \ , \text{ a.e. in υ}
\end{equation}
which is certainly true for $n=2$ by the Mean-Square Ergodicity Theorem. 
We will show that the same will then hold for $n+1$. In order to do so, we first note that, using the GNS representation, the Cauchy-Schwartz inequality, boundedness of the state and the $C^*$ property,
\begin{equation}\label{eq:basicineq}
    |\omega(AB)|^2 \leq \omega(AA^*)\omega(B^*B)
    \leq \norm{A}^2\omega(B^*B).
\end{equation}
We consider $  I_{n+1}(T) \coloneqq \frac{1}{T^{n+1}} ω \bigg( \big(\int_0^T  ι_{\floor{ \boldsymbol{υ} t}}τ_{t} A  \,dt  \big)^{n+1} \bigg) $ in the form
\begin{equation}
\arraycolsep=1.4pt\def\arraystretch{2.4}
  \begin{array}{*3{>{\displaystyle}lc}p{5cm}}
 I_{n+1}(T) &=&\frac{1}{T^{n+1}} ω \bigg( \prod_{j=1}^{n+1} \int_0^T ι_{\floor{\boldsymbol{υ} t_j}} τ_{t_j} A \, dt_j \bigg)\\
 &=& \frac{1}{T^{n+1}} \Big(\prod_{j=1}^{n+1}\int_0^Tdt_j\Big)\,  ω \bigg( \prod_{j=1}^{n+1} ι_{\floor{\boldsymbol{υ} t_j}} τ_{t_j} A \bigg).
 \end{array}
\end{equation}
We will show that the distance of $I_{n+1}(T)$ from $ω(A)I_n(T) $ goes to $0$ in the $T \to \infty$ limit.  We denote $\int^{(n)} = \Big(\prod_{j=1}^{n}\int_0^T dt_j\Big)$ and write:
\begin{equation} \label{eq:moment_theorem_proof_1}
     \begin{array}{*3{>{\displaystyle}l}p{5cm}}
 Δ_{n+1}(T) \coloneqq |I_{n+1(T)} - \omega(A)I_n(T)| \\
 = \big| \frac{1}{T^{n+1}} \int^{(n)} \int_0^T dt_{n+1} \,\omega\big( \prod_{j=1}^{n} ι_{\floor{ \boldsymbol{υ} t_j}}τ_{t_j} (A)\, (ι_{\floor{ \boldsymbol{υ} t_{n+1}}}τ_{t_{n+1}}(A)  - ω(A)) \big) \big |\\
 \leq 
 \frac{1}{T^{n+1}} \int^{(n)} \,\big| \omega\big( \prod_{j=1}^{n} ι_{\floor{ \boldsymbol{υ} t_j}}τ_{t_j} (A)\, \int_0^T dt_{n+1}(ι_{\floor{ \boldsymbol{υ} t_{n+1}}}τ_{t_{n+1}}(A)  - ω(A)) \big) \big |.
    \end{array}
\end{equation} 
By submultiplicativity of the operator norm and the fact that $\iota$ and $\tau$ preserve the norm,
\begin{equation}
    \norm{\prod_{j=1}^{n} ι_{\floor{ \boldsymbol{υ} t_j}}τ_{t_j} (A)}
    \leq \norm{A}^n.
\end{equation}
Therefore, from \Cref{eq:basicineq} and performing the integral $\int^{(n)}$,
\begin{equation} 
\arraycolsep=1.4pt\def\arraystretch{2.5}
     \begin{array}{*3{>{\displaystyle}l}p{5cm}}
    \Delta_{N+1}(T)^2\leq \\
    \leq\norm{A}^{2n}\frac1{T^2}
    \int_0^T dt
    \int_0^T dt'\,
    \omega\bigg(
    \big(ι_{\floor{ \boldsymbol{υ} t}}τ_{t}(A) - ω(A)\big)^*
    \big(ι_{\floor{ \boldsymbol{υ} t'}}τ_{t'}(A) - ω(A)\big)\bigg)\nonumber\\
    =
    \norm{A}^{2n}\bigg(\frac1{T^2}
    \int_0^T dt\int_0^T dt'\,
    \omega\big(
    ι_{\floor{ \boldsymbol{υ} t}}τ_{t}(A^*)ι_{\floor{ \boldsymbol{υ} t'}}τ_{t'}(A)\big) - |ω(A)|^2\bigg)
\end{array}
\end{equation}
where in the last step we used invariance of the state. We now use the  Mean-Square Ergodicity Theorem \ref{th:meansquared}, which shows that for almost every $υ$, i.e.\  $ Δ_{n+1}(T) \to 0$. This means that the limit of $I_{n+1}(T)$ coincides with the limit of
\begin{equation}
    ω(A) \frac{1}{T^N} \int_0^T \langle Ω_ω, \prod_{j=1}^{n} π_ω\big(ι_{\floor{ \boldsymbol{υ} t_j}}τ_{t_j} A \big) Ω_ω \rangle \,dt_j
\end{equation}
which is $ω(A) ω(A)^n= ω(A)^{n+1}$, for almost all $υ$, by the induction hypothesis.

In the case where the times $T_1,T_2,\ldots$ are different, the induction hypothesis states the existence of the multiple limit for $I_n(T_1,T_2,\ldots,T_n)$, which is certainly true for $n=2$ by the Mean-Square Ergodicity Theorem. Then, we note that the integral $\int^{(n)}$ was performed before taking the limit on the last time $T_{n+1}$. Therefore, we find that $\Delta_{n+1}(T_1,T_2,\ldots,T_{n+1})$ is bounded, uniformly on all $T_1,T_2,\ldots T_n$, by a value that vanishes as $T_{n+1}\to\infty$. Thus the limit $\lim_{T_{n+1}\to\infty} I_{n+1}(T_1,T_2,\ldots,T_{n+1})$ exists uniformly on $T_1,T_2,\ldots,T_n$. Since the iterated limit exists (on $T_{n+1}\to\infty$, then on $T_1,T_2,\ldots,T_n\to\infty$) by the induction hypothesis, we can apply the Moore-Osgood theorem, and the multiple limit on $I_{n+1}(T_1,T_2,\ldots,T_{n+1})$ exists and gives the stated result. Note that the Moore-Osgood \Cref{th:MooreOsgood} holds for double limits over the Cartesian product of arbitrary directed  sets $X \times Y$, see \Cref{appendix:Moore-Osgood}.

The case of different $A_1,A_2,\ldots$ is obtained from an immediate generalisation of the steps above.

\end{proof}

\section{Proof of Strong Operator Convergence in GNS Space of KMS States \Cref{th:KMS_average_strong_convergence}} \label{section:strong_convergence}
In this Section we prove \Cref{th:KMS_average_strong_convergence} and from it we subsequently derive \Cref{th:mean_asymptotic_gns_abelianness}. To do this we first show the following Lemmata. The following Lemma is also used in proving the extended ergodicity theorem \Cref{th:general_theorem} inductively.
\begin{lem} \label{lemma:6.1}
Consider the assumptions of \Cref{th:KMS_average_strong_convergence}. It follows that for any $A,B,C,D \in \mathfrak{U}$, any rational direction $\boldsymbol{q} \in \Srat^{D-1}$ of the velocity $\boldsymbol{υ}= υ \boldsymbol q$ and almost every speed  $υ \in \mathbb{R}$ 
\begin{equation} \label{eq:lemma6.11}
    \lim_{T \to \infty} \frac{1}{T} \int_0^T  ω_β( B ι_{\floor{ \boldsymbol{υ} t}}τ_t (A) D) \, dt = ω_β(A) ω_β(B \, D)
\end{equation}
and
\begin{equation} \label{eq:lemma6.12}
     \lim_{T \to \infty} \frac{1}{T^2} \int_0^T \int_0^T ω_β( B ι_{\floor{ \boldsymbol{υ} t}}τ_t (A) \, ι_{\floor{ \boldsymbol{υ} {t^{\prime}}}}τ_{t^{\prime}} (C) D) \, dt \, dt^{\prime} = ω_β(A)ω_β(C) ω_β(B \, D)
\end{equation}
\end{lem}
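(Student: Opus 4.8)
The plan is to reduce both \Cref{eq:lemma6.11} and \Cref{eq:lemma6.12} to the already-established \Cref{th:maintheorem} and \Cref{th:meansquared}. In those theorems the space-time evolved observable sits at the left end of the correlator, whereas in \Cref{lemma:6.1} it is flanked on the left by the extra observable $B$; the bridge is the KMS condition \Cref{eq:KMS_state}, which moves $B$ to the far right at the cost of replacing it by $τ^{\prime}_{iβ}(B)$. I would fix the rational direction $\boldsymbol q=\boldsymbol n/|\boldsymbol n|$ once and for all and work with speeds $υ$ outside the exceptional null set $\mathcal K_{\boldsymbol n}$ of \Cref{th:rankone}; the relevant observation is that, as their proofs show, $\mathcal K_{\boldsymbol n}$ is also the exceptional set of speeds in \Cref{th:maintheorem} and \Cref{th:meansquared} \emph{uniformly in the observables}, so for $υ\notin\mathcal K_{\boldsymbol n}$ both theorems hold simultaneously for all of $\mathfrak U$, which is what will keep the final exceptional set from depending on the particular $A,B,C,D$.

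For \Cref{eq:lemma6.11} I would first take $B$ in the norm-dense, $τ^{\prime}$-invariant $^*$-subalgebra $\mathcal D\subset\mathfrak U_{τ^{\prime}}$ on which \Cref{eq:KMS_state} holds. Applying \Cref{eq:KMS_state} — extended by continuity in its right-hand argument to all of $\mathfrak U$ — with that argument equal to $ι_{\floor{\boldsymbol{υ} t}}τ_t(A)D$ gives
\[
ω_β\big(B\,ι_{\floor{\boldsymbol{υ} t}}τ_t(A)D\big)=ω_β\big(ι_{\floor{\boldsymbol{υ} t}}τ_t(A)\,\widetilde D\big),\qquad \widetilde D:=D\,τ^{\prime}_{iβ}(B)\in\mathfrak U,
\]
where $\widetilde D$ is a fixed element of $\mathfrak U$ that does not depend on $t$. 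Then \Cref{th:maintheorem} applied to the pair $(A,\widetilde D)$ gives that the time average converges to $ω_β(A)\,ω_β(\widetilde D)$, and a second use of \Cref{eq:KMS_state} identifies $ω_β(\widetilde D)=ω_β\big(D\,τ^{\prime}_{iβ}(B)\big)=ω_β(B\,D)$; this establishes \Cref{eq:lemma6.11} for $B\in\mathcal D$.

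For \Cref{eq:lemma6.12} the same manipulation reduces the left-hand side to $\frac{1}{T^{2}}\int_0^T\int_0^T ω_β\big(ι_{\floor{\boldsymbol{υ} t}}τ_t(A)\,ι_{\floor{\boldsymbol{υ} t^{\prime}}}τ_{t^{\prime}}(C)\,\widetilde D\big)\,dt\,dt^{\prime}$, still with $\widetilde D=D\,τ^{\prime}_{iβ}(B)$. Splitting $ι_{\floor{\boldsymbol{υ} t}}τ_t(A)=\big(ι_{\floor{\boldsymbol{υ} t}}τ_t(A)-ω_β(A)\mathds{1}\big)+ω_β(A)\mathds{1}$ and using invariance of $ω_β$, this becomes
\[
ω_β(A)\,\frac1T\int_0^T ω_β\big(ι_{\floor{\boldsymbol{υ} t^{\prime}}}τ_{t^{\prime}}(C)\,\widetilde D\big)\,dt^{\prime}+R(T),\qquad R(T)=\frac1T\int_0^T ω_β\big(X_T\,ι_{\floor{\boldsymbol{υ} t^{\prime}}}τ_{t^{\prime}}(C)\,\widetilde D\big)\,dt^{\prime},
\]
where $X_T:=\frac1T\int_0^T\big(ι_{\floor{\boldsymbol{υ} t}}τ_t(A)-ω_β(A)\mathds{1}\big)\,dt$ is a Bochner integral (\Cref{appendix1}). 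The Cauchy--Schwarz inequality for $ω_β$ together with $ω_β(Y^*Y)\le\norm{Y}^2$ yields $|R(T)|\le ω_β(X_TX_T^*)^{1/2}\,\norm{C}\,\norm{\widetilde D}$ — the key point being that $\norm{ι_{\floor{\boldsymbol{υ} t^{\prime}}}τ_{t^{\prime}}(C)\,\widetilde D}\le\norm{C}\,\norm{\widetilde D}$ uniformly in $t^{\prime}$, so this bound is $t^{\prime}$-independent — and expanding $ω_β(X_TX_T^*)$ and using invariance shows it equals $\frac{1}{T^{2}}\int_0^T\int_0^T ω_β\big(ι_{\floor{\boldsymbol{υ} t}}τ_t(A)\,ι_{\floor{\boldsymbol{υ} s}}τ_s(A^*)\big)\,dt\,ds-ω_β(A)ω_β(A^*)$, which tends to $0$ by \Cref{th:meansquared} applied to $(A,A^*)$ (recall $ω_β(A^*)=\overline{ω_β(A)}$). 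Since the first term above converges to $ω_β(C)\,ω_β(\widetilde D)$ by \Cref{th:maintheorem} applied to $(C,\widetilde D)$, the whole expression tends to $ω_β(A)ω_β(C)ω_β(\widetilde D)=ω_β(A)ω_β(C)ω_β(B\,D)$, giving \Cref{eq:lemma6.12} for $B\in\mathcal D$.

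It then remains to remove the analyticity restriction on $B$. For fixed $A,C,D$ and $T$, the maps $B\mapsto\frac1T\int_0^T ω_β(B\,ι_{\floor{\boldsymbol{υ} t}}τ_t(A)D)\,dt$ and $B\mapsto ω_β(A)ω_β(B\,D)$, together with their two-time analogues, are linear in $B$ with norm at most $\norm{A}\norm{D}$, respectively $\norm{A}\norm{C}\norm{D}$, \emph{uniformly in $T$}; since $\mathcal D$ is norm-dense in $\mathfrak U$, a routine $ε/3$ argument extends the convergence obtained above from $B\in\mathcal D$ to all $B\in\mathfrak U$, with the same exceptional set $\mathcal K_{\boldsymbol n}$. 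I expect this last step to carry the only genuinely delicate point: because $τ^{\prime}_{iβ}$ is unbounded, $\widetilde D=D\,τ^{\prime}_{iβ}(B)$ is not controlled in norm by $\norm{B}$, so one cannot approximate $B$ by elements of $\mathcal D$ \emph{inside} the KMS step — the approximation must be deferred to the end, after the uniformly bounded time averages have been formed. The remaining bookkeeping (that $ω_β$, the adjoint and the Bochner integrals may be interchanged, and the iterated integral rearranged, in computing $ω_β(X_TX_T^*)$) is routine given that all integrands are norm-bounded and strongly measurable; see \Cref{appendix1}.
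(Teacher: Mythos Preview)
Your proof is correct; for \Cref{eq:lemma6.11} it coincides with the paper's argument (KMS to move $B$ to the right, apply \Cref{th:maintheorem}, then approximate $B$ by analytic elements with bounds uniform in $T$). For \Cref{eq:lemma6.12} you take a genuinely different route: the paper, after the same KMS move, additionally approximates $A$ by analytic elements $A_n$, then treats the $(T,T')$ double limit by Moore--Osgood, doing a second KMS step to bring $ι_{\floor{\boldsymbol{υ} t}}τ_t(A_n)$ to the right so that \Cref{th:maintheorem} can be applied once for each of $T$ and $T'$; a further Moore--Osgood handles $n\to\infty$. You instead split off $ω_β(A)\mathds 1$ and bound the remainder by Cauchy--Schwarz, reducing it to $ω_β(X_TX_T^*)\to 0$ via \Cref{th:meansquared}; this is exactly the mechanism used later in the proof of \Cref{th:meanN}. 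Your approach is cleaner here: it avoids the second analytic approximation of $A$, the second KMS move, and one layer of Moore--Osgood, at the cost of invoking \Cref{th:meansquared} rather than only \Cref{th:maintheorem}. Both routes rely on the same key observation you make explicit, that the exceptional null set $\mathcal K_{\boldsymbol n}$ is independent of the observables.
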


\begin{proof}
We first show \Cref{eq:lemma6.11} for arbitrary $A,B,D \in \mathfrak{U}$. The idea is to use the Ergodicity \Cref{th:maintheorem}, but to do this we have to change the order of elements inside $ω_β( \cdots)$. This is done by exploiting the KMS condition, which however does not necessarily hold for all elements of the algebra, but for a dense subset. Hence, we consider a sequence of analytic elements $B_n \in \mathfrak{U}_τ$ that satisfies the KMS condition, see \Cref{defn:KMS}, and approximates $B$, i.e.\ $B_n \to B$. To show \Cref{eq:lemma6.11} we have to calculate the iterated limit
\begin{equation} \label{proof:lemm611_1}
     \lim_{T \to \infty} \frac{1}{T} \int_0^T  ω_β( \lim_n B_n ι_{\floor{ \boldsymbol{υ} t}}τ_t (A) D) \, dt
\end{equation}
but to apply the KMS condition we need to do the $\lim_n$ after the $\lim_T$. For this we will use again the Moore-Osgood \Cref{th:MooreOsgood} \cite[p. 140]{taylor_general_1985} to show that the double limit exist (and is equal to the iterated ones).
Consider $I_n(T)=\frac{1}{T} \int_0^T  ω_β( B_n ι_{\floor{ \boldsymbol{υ} t}}τ_t (A) D) $. Using the KMS condition for $B_n$ the integrand becomes:
\begin{equation}
     ω_β( B_n ι_{\floor{ \boldsymbol{υ} t}}τ_t (A) D) =  ω_β(  ι_{\floor{ \boldsymbol{υ} t}}τ_t (A) D \tau'_{iβ} B_n ) .
\end{equation}
Applying the average $\frac{1}{T}\int_0^T \,dt$  and taking the limit $T \to \infty$ we get, by using the Ergodicity \Cref{th:maintheorem}:
\begin{equation}
 \lim_{T \to \infty}    \frac{1}{T}\int_0^T  ω_β( B_n ι_{\floor{ \boldsymbol{υ} t}}τ_t (A) D)  \,dt = ω_β(A) ω_β(D \tau'_{iβ}B_n)
\end{equation}
for almost every $υ$. Re-applying the KMS condition (in the opposite direction) we get  that $\lim_{T \to \infty} I_n(T)$ exists for all $n$:
\begin{equation}
    \lim_{T \to \infty} \frac{1}{T}\int_0^T  ω_β( B_n ι_{\floor{ \boldsymbol{υ} t}}τ_t (A) D)  \,dt = ω_β(A) ω_β(B_n D) \ , \ \ \forall n \in \N.
\end{equation}

It is easy to see that  $\lim_n I_n(T)$ exists and is $\frac{1}{T}\int_0^T  ω_β( B ι_{\floor{ \boldsymbol{υ} t}}τ_t (A) D)  \,dt$ \textbf{uniformly} in $T$, for instance by using the dominated convergence theorem to bring the limit inside the integral with the uniform bound $|ω_β( (B_n-B) ι_{\floor{ \boldsymbol{υ} t}}τ_t (A) D)|\leq \norm{B_n-B}\norm{A}\norm{D}$, then the Cauchy-Schwartz inequality, and finally the uniform bound $ω_β( D^*ι_{\floor{ \boldsymbol{υ} t}}τ_t (A^*A) D)\leq \norm{D}^2\norm{A}^2$.  By the Moore-Osgood theorem this means that the double limit and the two iterated limits exist and are equal. Hence the result holds for any $A,B,D \in \mathfrak{U}$ in the sense of the iterated limit:
\begin{equation}
      \lim_{T \to \infty} \frac{1}{T}\int_0^T  ω_β( \lim_nB_n ι_{\floor{ \boldsymbol{υ} t}}τ_t (A) D)  \,dt = ω_β(A) ω_β(B D) .
\end{equation}

The proof of \Cref{eq:lemma6.12} is similar. We consider again arbitrary $A,B,C,D \in \mathfrak{U}$ and sequences $A_n, B_n \in \mathfrak{U}_{\tau'}$ of analytic elements for $\tau'$ that satisfy the KMS condition and $A_n \to A$, $B_n \to B$. We first use the KMS condition to write:
\begin{equation}
     ω_β\big( B_n ι_{\floor{ \boldsymbol{υ} t}}τ_t (A_n) \, ι_{\floor{ \boldsymbol{υ} {t^{\prime}}}}τ_{t^{\prime}} (C) D\big) =  ω_β\big(  ι_{\floor{ \boldsymbol{υ} t}}τ_t (A_n) \, ι_{\floor{ \boldsymbol{υ} {t^{\prime}}}}τ_{t^{\prime}} (C) D \tau'_{iβ}(B_n)\big)
\end{equation}
apply the average $\frac{1}{T^2} \int_0^T \int_0^T \,dt \,dt^{\prime}$ and consider $\lim_{T \to \infty}$.  To calculate this limit we consider the double limit: $$\lim_{T,T^{\prime} \to \infty} \frac{1}{T^{\prime}T} \int_0^T \int_0^{T^{\prime}} ω_β(  ι_{\floor{ \boldsymbol{υ} t}}τ_t (A_n) \, ι_{\floor{ \boldsymbol{υ} {t^{\prime}}}}τ_{t^{\prime}} (C) D \tau'_{iβ}(B_n)) \, dt \, dt^{\prime}.$$ Define:
\begin{equation} \label{eq:lemma_6.1_double_limit}
    I_n(T,T^{\prime}) \coloneqq \frac{1}{T^{\prime}T} \int_0^T \int_0^{T^{\prime}} ω_β\big(  ι_{\floor{ \boldsymbol{υ} t}}τ_t (A_n) \, ι_{\floor{ \boldsymbol{υ} {t^{\prime}}}}τ_{t^{\prime}} (C) D \tau'_{iβ}(B_n)\big) \, dt \, dt^{\prime}.
\end{equation}
We intend to apply the Moore-Osgood \Cref{th:MooreOsgood}. First consider $\lim_{T}$  of $I_n(T,T^{\prime}) $. Using the dominated convergence Theorem to move the $\lim_{T}$ inside the integral, and then the Ergodicity \cref{th:maintheorem} we see that 
\begin{equation} \label{eq:lemma611_proof_limit}
    \lim_{T \to \infty} I_n(T,T^{\prime}) = \frac{1}{T^{\prime}} \int_0^{T^{\prime}} ω_β(A_n) ω_β\big( ι_{\floor{ \boldsymbol{υ} {t^{\prime}}}}τ_{t^{\prime}} (C) D \tau'_{iβ}(B_n)\big) \,dt^{\prime}.
\end{equation}
In exactly the same manner as in the proof of the Mean-Square Ergodicity theorem, in \Cref{section:meansquaredproof}, and specifically in the derivation around \Cref{eq:Mean_squared_proof1}, we can see that this limit is also uniform in $T^{\prime}$. Now, the $\lim_{T^{\prime}}$ can be done by using the KMS condition to write 
\begin{equation}
    I_n(T,T^{\prime}) =  \frac{1}{T^{\prime}T} \int_0^T \int_0^{T^{\prime}} ω_β\big( \, ι_{\floor{ \boldsymbol{υ} {t^{\prime}}}}τ_{t^{\prime}} (C) D \tau'_{iβ}(B_n)   \tau'_{iβ} ι_{\floor{ \boldsymbol{υ} t}}τ_t (A_n)\big) \, dt \, dt^{\prime}
\end{equation}
and again using dominated convergence and the ergodicity theorem we see that the limit exists and is
\begin{equation}
    \lim_{T^{\prime} \to \infty} I_n(T,T^{\prime}) = \frac{1}{T} \int_0^T ω_β(C) ω_β\big( \, ι_{\floor{ \boldsymbol{υ} {t^{\prime}}}}τ_{t^{\prime}} (C) D \tau'_{iβ}(B_n)   \tau'_{iβ} ι_{\floor{ \boldsymbol{υ} t}}τ_t (A_n) \big) \,dt.
\end{equation}

Hence, by the Moore-Osgood theorem the double limit exists and is equal to the iterated one. This implies that also the limit $\lim_{T \to \infty} I_n(T,T)$ exist and is equal to the double limit. The iterated limit is the one that we can calculate, by reapplying the ergodicity theorem in \Cref{eq:lemma611_proof_limit} 
\begin{equation}
    \lim_{T \to \infty} I_n(T,T) = ω_β(A_n) ω_β(C) ω_β\big(D \tau'_{iβ} (B_n)\big) =ω_β(A_n) ω_β(C) ω_β( B_n D ) 
\end{equation}
where in the last line we re-applied the KMS condition for $B_n$. Now, in a similar manner as before, we can see that $\lim_n I_n (T,T)$ exists uniformly in $T$. Hence, we can apply to Moore-Osgood \Cref{th:MooreOsgood} to get the result.
\end{proof}

\begin{lem} \label{lemma:6.2}
Consider the assumptions of \Cref{th:KMS_average_strong_convergence}. Then for any $A \in \mathfrak{U}$ it follows that for all $B \in \mathfrak{U}$, any rational direction $\boldsymbol{q} \in \Srat^{D-1}$ and almost every speed  $υ \in \mathbb{R}:$
\begin{equation}
\arraycolsep=1.4pt\def\arraystretch{1.8}
\begin{array}{*3{>{\displaystyle}lc}p{5cm}}
  \lim_{T \to \infty} \frac{1}{T^2}  \int_0^T \int_0^T ω_β \big( B^* \big(\,ι_{\floor{ \boldsymbol{υ} t}}τ_t (A)-ω_β(A) \mathds{1} \big)^* \big(\,ι_{\floor{ \boldsymbol{υ} {t^{\prime}}}}τ_{t^{\prime}} (A)-ω_β(A) \mathds{1} \big)  \,B \big) \,dt \,dt^{\prime} \\ = 0 
    \end{array}
\end{equation}
with $\boldsymbol{υ}= υ \boldsymbol q$.
\end{lem}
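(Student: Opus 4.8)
The plan is to expand the sesquilinear expression under the double integral and reduce everything to \Cref{lemma:6.1}. Write $a:=\omega_\beta(A)$ and, for brevity, $A_t:=\iota_{\floor{\boldsymbol{\upsilon}t}}\tau_t(A)$; since $\iota$ and $\tau$ act by $^*$-automorphisms we have $A_t^*=\iota_{\floor{\boldsymbol{\upsilon}t}}\tau_t(A^*)$. Expanding $\big(A_t-a\mathds 1\big)^*\big(A_{t'}-a\mathds 1\big)=A_t^*A_{t'}-aA_t^*-\bar aA_{t'}+|a|^2\mathds 1$ and using linearity of $\omega_\beta$, the integrand becomes
\[
\omega_\beta(B^*A_t^*A_{t'}B)-a\,\omega_\beta(B^*A_t^*B)-\bar a\,\omega_\beta(B^*A_{t'}B)+|a|^2\,\omega_\beta(B^*B).
\]
Each term is a bounded continuous function of $(t,t')$ (by strong continuity of $\tau$ and continuity of $\omega_\beta$ and $\iota$), so the double average $\frac{1}{T^2}\int_0^T\int_0^T(\cdot)\,dt\,dt'$ of each is well defined; the single-integral terms are turned into double integrals by inserting a trivial integration over the spare variable ($\frac1T\int_0^T f(t')\,dt'=\frac1{T^2}\int_0^T\int_0^T f(t')\,dt\,dt'$), so the whole expression is one double average whose $T\to\infty$ limit I evaluate term by term, the limit of the finite sum being the sum of the limits.

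For the first term I would apply \Cref{eq:lemma6.12} with the substitutions $B\mapsto B^*$, $A\mapsto A^*$, $C\mapsto A$, $D\mapsto B$, which gives $\omega_\beta(A^*)\omega_\beta(A)\omega_\beta(B^*B)=|a|^2\,\omega_\beta(B^*B)$. For the second and third terms I would apply \Cref{eq:lemma6.11} twice, once with $B\mapsto B^*$, $A\mapsto A^*$, $D\mapsto B$ and once with $B\mapsto B^*$, $A\mapsto A$, $D\mapsto B$, obtaining $a\,\omega_\beta(A^*)\omega_\beta(B^*B)=|a|^2\,\omega_\beta(B^*B)$ and $\bar a\,\omega_\beta(A)\omega_\beta(B^*B)=|a|^2\,\omega_\beta(B^*B)$ respectively. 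The fourth term is already constant. Summing the four contributions with their signs,
\[
|a|^2\,\omega_\beta(B^*B)-|a|^2\,\omega_\beta(B^*B)-|a|^2\,\omega_\beta(B^*B)+|a|^2\,\omega_\beta(B^*B)=0,
\]
which is the claimed identity.

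The only points that need care are measure-theoretic. \Cref{lemma:6.1} delivers each of its two identities for almost every speed $\upsilon$ (for the fixed rational direction $\boldsymbol q$), but it is invoked only finitely many times — once via \Cref{eq:lemma6.12} and twice via \Cref{eq:lemma6.11} — so the intersection of the corresponding full-measure sets of speeds is again of full measure, and the computation above holds for every $\upsilon$ in that intersection. I do not expect a serious obstacle here: the KMS structure has already been absorbed into \Cref{lemma:6.1}, and what remains is the elementary cancellation $|a|^2-|a|^2-|a|^2+|a|^2=0$ together with the bookkeeping of which operator plays which role in the two formulas of that lemma. The one point worth emphasising is that the single-integral terms must be matched against \Cref{eq:lemma6.11} rather than against the Ergodicity \Cref{th:maintheorem} directly, since $B$ (hence $B^*$) is a general, possibly non-local element sitting on both sides of the ray-averaged operator — precisely the configuration covered by \Cref{lemma:6.1}.
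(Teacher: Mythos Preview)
Your proposal is correct and is precisely the paper's own argument: expand the product into four terms and apply \Cref{lemma:6.1} to each (the paper states only ``the result easily follows by applying \Cref{lemma:6.1} for each of these terms''; you have spelled out the substitutions and the cancellation $|a|^2-|a|^2-|a|^2+|a|^2=0$ explicitly). One cosmetic remark: the integrand is not continuous in $t$ because of the floor in $\iota_{\floor{\boldsymbol{\upsilon}t}}$, but it is bounded and piecewise continuous, which is all that is needed.
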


\begin{proof}
The integrand is the sum of four terms:
\begin{equation}
\arraycolsep=1.4pt\def\arraystretch{1.2}
\begin{array}{*3{>{\displaystyle}lc}p{5cm}}
    ω_β(B^* \, ι_{\floor{ \boldsymbol{υ} t}}τ_t (A^*)\, ι_{\floor{ \boldsymbol{υ} {t^{\prime}}}}τ_{t^{\prime}} (A) \, B ) - ω_β(B^*  ι_{\floor{ \boldsymbol{υ} t}}τ_t (A^*)\,  B )ω_β(A) - \\ ω_β(A) ω_β( B^* ι_{\floor{ \boldsymbol{υ} {t^{\prime}}}}τ_{t^{\prime}} (A) B) + ω_β(A^*)ω_β(A) ω_β(B^*B)
    \end{array}
\end{equation}
The result easily follows by applying \Cref{lemma:6.1} for each of these terms.
\end{proof}

Finally we can prove \Cref{th:KMS_average_strong_convergence}:
\begin{proof}[Proof of \Cref{th:KMS_average_strong_convergence}]
We have by \Cref{lemma:6.2} that  for all $A,B \in \mathfrak{U}$ and almost every speed $υ$:
\begin{equation}
\begin{array}{*3{>{\displaystyle}lc}p{5cm}}
    \lim_{T \to \infty} \frac{1}{T^2}  \int_0^T \int_0^T ω_β \big(B^*&&\big(\,ι_{\floor{ \boldsymbol{υ} t}}τ_t (A)-ω_β(A) \mathds{1} \big)^* \times \\
    &&\big(\,ι_{\floor{ \boldsymbol{υ} {t^{\prime}}}}τ_{t^{\prime}} (A)-ω_β(A) \mathds{1} \big)  \,B \big) \,dt \,dt^{\prime}  = 0 .
    \end{array}
\end{equation}
We write this in the GNS representation of $ω_β$:
\begin{equation}
\begin{array}{*3{>{\displaystyle}lc}p{5cm}}
    \lim_{T \to \infty} \frac{1}{T^2}  \int_0^T \int_0^T  \big\langle Ω_{ω_β}, && π_{ω_β} \big( B^* \big(\,ι_{\floor{ \boldsymbol{υ} t}}τ_t (A)-ω_β(A) \mathds{1} \big)^* \times \\
    &&\big(\,ι_{\floor{ \boldsymbol{υ} {t^{\prime}}}}τ_{t^{\prime}} (A)-ω_β(A) \mathds{1} \big)  \,B \big) Ω_{ω_β} \big\rangle \, dt \,dt^{\prime}=0
    \end{array}
\end{equation}
which can be written as
\begin{equation}
\arraycolsep=1.4pt\def\arraystretch{2.3}
\begin{array}{*3{>{\displaystyle}lc}p{5cm}}
     \lim_{T \to \infty}   \big\langle  &&\frac{1}{T}\int_0^T π_{ω_β}    \big(\,ι_{\floor{ \boldsymbol{υ} t}}τ_t (A)-ω_β(A) \mathds{1} \big) π_{ω_β}(B) Ω_{ω_β}  \, dt,  \\
    &&\frac{1}{T}\int_0^T π_{ω_β} \big(\,ι_{\floor{ \boldsymbol{υ} {t^{\prime}}}}τ_{t^{\prime}} (A)-ω_β(A) \mathds{1}  \big) π_{ω_β}(B) Ω_{ω_β} \, dt^{\prime} \big\rangle =0
     \end{array}
\end{equation}
where the integrals are to be understood as Bochner integrals (see \cite[Section 3.8]{hille_functional_1996}). This gives:
\begin{equation}
    \lim_{T \to \infty} \norm{ \frac{1}{T}\int_0^T π_{ω_β}  \big(\,ι_{\floor{ \boldsymbol{υ} t}}τ_t (A)-ω_β(A) \mathds{1} \big) π_{ω_β}(B)Ω_{ω_β}  \, dt} =0
\end{equation}
for any $A,B \in \mathfrak{U}$ and almost all $υ$. By cyclicity of $Ω_{ω_β}$ this statement reads
\[\arraycolsep=1.4pt\def\arraystretch{2.5}
\begin{array}{*3{>{\displaystyle}lc}p{5cm}}
      \lim_{T \to \infty} \norm{ \frac{1}{T}\int_0^T π_{ω_β} \big(  \,ι_{\floor{ \boldsymbol{υ} t}}τ_t (A)-ω_β(A) \mathds{1} \big) Ψ  \, dt} =0 \implies \\ \lim_{T \to \infty} \norm{  \frac{1}{T}\int_0^T π_{ω_β} \big(  \,ι_{\floor{ \boldsymbol{υ} t}}τ_t (A) \big) Ψ \, dt - ω_β(A) Ψ} =0 
      \end{array}
\]
for any $Ψ \in H_{ω_β}$, i.e.\ we have that $\frac{1}{T} \int_0^T π_{ω_β} \big(  \,ι_{\floor{ \boldsymbol{υ} t}}τ_t (A) \big) \, dt$ (understood as a  Bochner integral) converges in the Strong operator topology to $ω_β(A) \mathds{1}$, for almost every speed $υ$.

\end{proof}

\subsection{Proof of Mean Asymptotic Abelianness in  GNS space of KMS state \Cref{th:mean_asymptotic_gns_abelianness}}\label{section:mean_asympt_abel}

\Cref{th:mean_asymptotic_gns_abelianness} is an immediate corollary of \Cref{th:KMS_average_strong_convergence}:

\begin{proof}[Proof of \Cref{th:mean_asymptotic_gns_abelianness}]
This is an immediate consequence of \Cref{th:KMS_average_strong_convergence}. Indeed, consider the limit (in the norm of the Hilbert space):

\[\arraycolsep=1.4pt\def\arraystretch{2.2}
\begin{array}{*3{>{\displaystyle}lc}p{5cm}}
    &&\lim_{T \to \infty} \frac{1}{T} \int_0^T π_{ω_β} ( [ι_{\floor{ \boldsymbol{υ} t}}τ_t (B) ,C ] ) \,dt \,Ψ = \\
    &=& \lim_{T \to \infty} \frac{1}{T} \int_0^T \big(π_{ω_β}( ι_{\floor{ \boldsymbol{υ} t}}τ_t (B) ) π_{ω_β}(C) -   π_{ω_β}(C) π_{ω_β}( ι_{\floor{ \boldsymbol{υ} t}}τ_t (B) ) \big) \,dt \,Ψ \\
    &=& ω_β(B) π_{ω_β}(C)Ψ - π_{ω_β}(C) ω_β(B)Ψ =0
    \end{array}
    \]
where in the last line we used convergence (in the norm) of $$ \frac{1}{T} \int_0^T π_{ω_β} ( ι_{\floor{ \boldsymbol{υ} t}}τ_t (B) )\,dtΨ$$ to $ω_β(B)Ψ$ by \Cref{th:KMS_average_strong_convergence}.
\end{proof}

\section{Proof of Extended Almost Everywhere Ergodicity \Cref{th:general_theorem}} \label{section:proof_of_general_theorem}
\begin{proof}[Proof of \Cref{th:general_theorem}]
 We proceed to prove the Theorem by induction on $n$. 
For $n=1$ the Theorem holds by \Cref{lemma:6.1}. Suppose that the Theorem holds for arbitrary $n-1$

 \[
\begin{array}{*3{>{\displaystyle}lc}p{5cm}}
   \lim_{T_1,\ldots,T_{n-1} \to \infty}  ω \big(  A_1   \overline{B_1^{T_1}} A_2 \overline{B_2^{T_2}}  \cdots \overline{B_{n-1}^{T_{n-1}}}  A_{n} \big) = ω(B_1) \cdots ω(B_{n-1}) ω( A_1 \cdots A_n)
 
    \end{array}
    \]
we show that it will then also hold for $n$. Consider
\begin{equation}
    I(T_1,\ldots,T_{n-1},T_n) \coloneqq ω\big(    A_1   \overline{B_1^{T_1}} A_2 \overline{B_2^{T_2}}  \cdots \overline{B_{n-1}^{T_{n-1}}}  A_{n} \overline{B_n^{T_n}} A_{n+1}\big).
\end{equation}
The goal is to show that: $$\lim_{T_1,\ldots,T_n \to \infty} I(T_1,\ldots,T_{n-1},T_n)=  ω(B_1) \cdots ω(B_{n-1}) ω(B_n) ω(  A_1 \cdots A_n A_{n+1})$$ and we do this by showing that the double limit on $T_1,\ldots,T_{n-1}$ and $T_n$ exists, by the Moore-Osgood \Cref{th:MooreOsgood}. This is done very similarly to the proof of \Cref{eq:lemma6.12} of \Cref{lemma:6.1}, hence we will omit the technical details. 

By the induction hypothesis we have that the following limit exists
\begin{equation} \label{eq:extended_theorem_iterated_limit}
    \lim_{T_1,\ldots,T_{n-1} \to \infty} I(T_1,\ldots,T_{n-1},T_n) =  ω(B_1) \cdots ω(B_{n-1}) ω( A_1 \cdots A_n \overline{B_{n}^{T_n}} A_{n+1})
\end{equation}
for any $T_n$. 

Additionally, by \Cref{lemma:6.1} and in particular \Cref{eq:lemma6.11}, the following limit exists
\begin{equation}
    \lim_{T_n \to \infty} I(T_1,\ldots,T_{n-1},T_n) =  ω(B_n)ω\big(    A_1   \overline{B_1^{T_1}} A_2 \overline{B_2^{T_2}}  \cdots \overline{B_{n-1}^{T_{n-1}}}  A_{n}  A_{n+1}\big)
\end{equation}
and in fact it exists uniformly in $T_1,\ldots,T_{n-1}$. This is shown similarly to how the double limit is calculated after \Cref{eq:lemma_6.1_double_limit}. 

Therefore, by the Moore-Osgood Theorem, the double limit exists and is equal to the iterated ones: take the $T_n\to\infty$ limit of \Cref{eq:extended_theorem_iterated_limit} and apply \Cref{lemma:6.1}:
\begin{equation}
\arraycolsep=1.4pt\def\arraystretch{1.3}
\begin{array}{*3{>{\displaystyle}lc}p{5cm}}
    \lim_{T_1,\ldots,T_{n-1},T_n \to \infty}  I(T_1,\ldots,T_{n-1},T_n) = \lim_{T_n \to \infty } \lim_{T_1,\ldots,T_{n-1} \to \infty} I(T_1,\ldots,T_{n-1},T_n)\\
    = ω(B_1) \cdots ω(B_{n-1}) ω(B_n) ω(  A_1 \cdots A_n A_{n+1}).
    \end{array}
\end{equation}
\end{proof}

\section{Proofs of Ergodic Theorems Considering Oscillations} \label{section:frequencyproof}

\begin{proof}[Proof of Theorem \ref{th:frequencyav2}]
Consider $A,B \in \mathfrak{U}$, $\boldsymbol q = \frac{\boldsymbol n}{|\boldsymbol n|} \in \Srat ^{D-1}$ and velocity $\boldsymbol{υ} = υ \boldsymbol q$, $υ \in \mathbb{R}$. Consider any non-zero wavenumber-frequency pair $(\boldsymbol k, f) \in \mathbb{R}^D \times \mathbb{R}$.  We follow the same steps as in the proof of Theorem \ref{th:maintheorem} (see \Cref{section:ergodicityproof}):
\begin{equation}
\begin{array}{*3{>{\displaystyle}lc}p{5cm}}
  I=  \lim_{T \to \infty} \frac{1}{T} \int_0^T  e^{i (\boldsymbol k\cdot \boldsymbol v   -  f) t} \big( ω \big( ι_{\floor{ \boldsymbol{υ}t}}τ_t (A) B \big)-ω(A) ω(B) \big) \,dt  \\
  = \lim_{N \to \infty} \frac{1}{N} \int_0^{N}  e^{i(\boldsymbol k \cdot \boldsymbol n - fυ^{-1}|\boldsymbol n|)x}\big(ω\big( ι_{\floor{x \boldsymbol n}} τ_{υ^{-1}|\boldsymbol n|x} (A) B\big) - ω(A)ω(B) \big) \,dx
   \end{array}
\end{equation}
Firstly, the case $\boldsymbol k \cdot \boldsymbol{υ}-f=0$ is covered by the Ergodicity \Cref{th:maintheorem}. 
Going to the GNS representation and splitting the integral as in Eq. \ref{eq:splittingintegral} we get:
\begin{equation} \label{eq:frequencyavproof1}
\begin{array}{*3{>{\displaystyle}lc}p{5cm}}
    I= \lim_{N \to \infty} \frac{1}{N} \sum_{m=0}^{N-1} \int_0^1 e^{i(\boldsymbol k \cdot \boldsymbol n - f υ^{-1} |\boldsymbol n|)(y+m)} \bigg( \\    \big \langle Ω_ω, π_ω(A) U_ω^*( m\boldsymbol n,m υ^{-1}|\boldsymbol n|) U^*_ω ( \floor{y\boldsymbol n}, yυ^{-1}|\boldsymbol n|) π_ω(B) Ω_ω \big\rangle -ω(A)ω(B) \bigg)\,dy 
    \end{array}
\end{equation}

We split this as a sum of limits, since we will see that both exist. For the left term we can again use the bounded  convergence theorem to move the limit of the sum inside the integral. In this case, we have to use von Neumann's ergodic theorem for the limit
\begin{equation} \label{eq:frequencyprojectionlimit}
   \lim_N \frac{1}{N}\sum_{m=0}^{N-1} \big(e^{-i(\boldsymbol k \cdot \boldsymbol n - f υ^{-1} |\boldsymbol n|)}U_ω( \boldsymbol n, υ^{-1}|\boldsymbol n|) \big)^m
\end{equation}
which will give (in the strong operator topology)  a projection $P_{υ,\boldsymbol n}(\boldsymbol k, f)$  on the subspace of $H_ω$ formed by vectors invariant under  $e^{-i(\boldsymbol k \cdot \boldsymbol n - f υ^{-1} |\boldsymbol n|)}U_ω( \boldsymbol n, υ^{-1}|\boldsymbol n|)$. 
Hence, Eq. \ref{eq:frequencyavproof1} implies:
\begin{equation}\label{eq:frequencyavproof3}
\begin{array}{*3{>{\displaystyle}lc}p{5cm}}
    I = \int_0^1 e^{i(\boldsymbol k \cdot \boldsymbol n - fυ ^{-1}|\boldsymbol n|)y} \langle Ω_ω, π_ω(A) P_{υ,\boldsymbol n}(\boldsymbol k,f) U^*_ω( \floor{y \boldsymbol n}, yυ^{-1}|\boldsymbol n|) π_ω(B) Ω_ω \rangle \,dy  \\
   - \lim_N\frac{1}{N}\sum_{m=0}^{N-1} \int_0^1 e^{i(\boldsymbol k \cdot \boldsymbol n - fυ ^{-1}|\boldsymbol n|)(y+m)} ω(A)ω(B) \, dy
    \end{array}
\end{equation}
Note that when $\boldsymbol k \cdot \boldsymbol n - fυ ^{-1}|\boldsymbol n|$ is a multiple of $2π$, then the projection will be $P_{υ,\boldsymbol n}$, as in \Cref{th:rankone}, which is established to be rank one for almost all $υ$ and hence we immediately get 
\begin{equation}
    I= \int_0^1 e^{2πi \, y} ω(A) ω(B) \, dy - \lim_N \frac{1}{N} \sum_{m=0}^{N-1} \int_0^1 e^{2πi \, (y+m)}ω(A)ω(B) \, dy = 0
\end{equation}
We claim that  for all other $(\boldsymbol k,f)$ the projection $P_{υ,\boldsymbol{k}}(\boldsymbol k,f)$ will be $0$ for almost all $υ$, which would conclude the proof. We confirm this in what follows, by proving \Cref{th:frequencyproj}

\end{proof}

The proof of \Cref{th:frequencyproj} follows the same steps as the proof of  \Cref{th:rankone} in \Cref{section:ergodicityproof}. Using the same arguments as in the proofs of \Cref{lem:1} and \Cref{lem:orthogonal} we can show two similar Lemmata respectively:
\begin{lem} 
Consider the assumptions of Theorem \ref{th:frequencyproj} and let $P_{υ,\boldsymbol n}^{(r)}(\boldsymbol k,f)$, $r= 1,2,...$, denote the projection onto the subspace of $H_ω$ spanned by vectors invariant under $e^{-i(\boldsymbol k \cdot \boldsymbol n - fυ^{-1}|\boldsymbol n|)r}U_ω^r (\boldsymbol n, υ^{-1}|\boldsymbol n|)$, with $P_{υ,\boldsymbol n}^{(1)}(\boldsymbol k,f)= P_{υ,\boldsymbol n}(\boldsymbol k,f)$. It follows that the range of $P_{υ,\boldsymbol n}^{(r)}(\boldsymbol k,f)$ is $\{0\}$ for any $υ \in \hat{\mathbb{R}}$ with $|υ|>υ_c$, and all $\boldsymbol n \in \mathbb{Z}^D$, $r=1,2,...$.
\label{lem:3}
\end{lem}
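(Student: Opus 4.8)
The plan is to replay the proof of \Cref{lem:1} essentially verbatim, with \Cref{def:phase-spacelike-ergodic} (space-like non-oscillating) playing the role of space-like ergodicity and with $0$ replacing $ω(A)ω(B)$ on the right-hand side. Fix $\boldsymbol n\in\Z^D$, $r\in\N$, a pair $(\boldsymbol k,f)\in\R^D\times\R$ and $υ\in\hat{\R}$ with $|υ|>υ_c$, and write $θ\coloneqq\boldsymbol k\cdot\boldsymbol n-fυ^{-1}|\boldsymbol n|$. If $rθ\in2\pi\Z$ then $e^{-iθ r}U_ω^{r}(\boldsymbol n,υ^{-1}|\boldsymbol n|)=U_ω^{r}(\boldsymbol n,υ^{-1}|\boldsymbol n|)$, so by \Cref{lem:1} the projection $P_{υ,\boldsymbol n}^{(r)}(\boldsymbol k,f)$ is in that case the rank one projection onto $Ω_ω$; for fixed $\boldsymbol n,r,(\boldsymbol k,f)$ this happens only on a countable — hence Lebesgue-null — set of speeds, which is precisely why \Cref{th:frequencyproj} is asserted only for almost every $υ$. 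From now on assume $rθ\notin2\pi\Z$, which is the content of the Lemma.

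First I would invoke the space-like non-oscillating property for the scaled direction $r\boldsymbol n$, the same speed $υ$, and the pair $(-\boldsymbol k,-f)$ (this sign flip is legitimate, since \Cref{def:phase-spacelike-ergodic} is demanded for all wavenumber–frequency pairs whose phase misses $2\pi\Z$, and here that phase equals $-rθ$). Since $|r\boldsymbol n|=r|\boldsymbol n|$ in the $\ell_1$ norm we have $ι_{\boldsymbol n}^{rm}τ_{υ^{-1}|\boldsymbol n|}^{rm}=ι_{r\boldsymbol n}^{m}τ_{υ^{-1}|r\boldsymbol n|}^{m}$, so the hypothesis yields
\begin{equation}
  \lim_{N\to\infty}\frac1N\sum_{m=0}^{N-1}e^{iθ r m}\,ω\big(ι_{\boldsymbol n}^{rm}τ_{υ^{-1}|\boldsymbol n|}^{rm}(A)B\big)=0,\qquad A,B\in\mathfrak U_{\rm loc}.
\end{equation}
Now pass to the GNS representation exactly as in \Cref{lem:1}: use $ω(\cdot)=\langle Ω_ω,π_ω(\cdot)Ω_ω\rangle$, the covariance $U_ω(\boldsymbol n,t)π_ω(A)U_ω(\boldsymbol n,t)^{*}=π_ω(ι_{\boldsymbol n}τ_t(A))$, the group law $U_ω(r\boldsymbol n,rυ^{-1}|\boldsymbol n|)=U_ω(\boldsymbol n,υ^{-1}|\boldsymbol n|)^{r}$, and the invariance $U_ω(\cdot)Ω_ω=Ω_ω$; then the $m$-th summand becomes $\langle Ω_ω,π_ω(A)\,W^{m}\,π_ω(B)Ω_ω\rangle$ with the unitary $W\coloneqq e^{iθ r}\big(U_ω^{r}(\boldsymbol n,υ^{-1}|\boldsymbol n|)\big)^{*}$, whose fixed-point subspace coincides (by unitarity of $U_ω$) with that of $e^{-iθ r}U_ω^{r}(\boldsymbol n,υ^{-1}|\boldsymbol n|)$, i.e.\ with the range of $P_{υ,\boldsymbol n}^{(r)}(\boldsymbol k,f)$. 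By von Neumann's mean ergodic theorem \cite[Theorem II.11]{reed_i_1981}, $\frac1N\sum_{m=0}^{N-1}W^{m}\to P_{υ,\boldsymbol n}^{(r)}(\boldsymbol k,f)$ in the strong operator topology, and passing this limit through the inner product by continuity yields $\langle Ω_ω,π_ω(A)P_{υ,\boldsymbol n}^{(r)}(\boldsymbol k,f)π_ω(B)Ω_ω\rangle=0$ for $A,B\in\mathfrak U_{\rm loc}$ and then, by continuity of $π_ω$, for all $A,B\in\mathfrak U$. Reading this as $\langle π_ω(A^{*})Ω_ω,\,P_{υ,\boldsymbol n}^{(r)}(\boldsymbol k,f)π_ω(B)Ω_ω\rangle=0$ and invoking cyclicity of $Ω_ω$ (density of $\{π_ω(C)Ω_ω:C\in\mathfrak U\}$ in $H_ω$) forces $P_{υ,\boldsymbol n}^{(r)}(\boldsymbol k,f)π_ω(B)Ω_ω=0$ for all $B$, hence $P_{υ,\boldsymbol n}^{(r)}(\boldsymbol k,f)=0$: its range is $\{0\}$.

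Since the skeleton is identical to \Cref{lem:1}, I do not expect any genuine analytic obstacle; the one thing that needs real care is the phase bookkeeping. Concretely: (i) one must separate off the degenerate case $rθ\in2\pi\Z$, since that is the only configuration in which the range fails to be $\{0\}$ (there it equals $\C Ω_ω$ by \Cref{lem:1}), and this caveat must be carried into the almost-everywhere formulation of \Cref{th:frequencyproj}; and (ii) one must track which phase-twisted unitary von Neumann's theorem produces — here $W=e^{iθ r}(U_ω^{r})^{*}$, not $e^{-iθ r}U_ω^{r}$ — and use unitarity (together, if preferred, with the $(\boldsymbol k,f)\leftrightarrow(-\boldsymbol k,-f)$ invariance of \Cref{def:phase-spacelike-ergodic}) to identify its fixed-point subspace with the range of $P_{υ,\boldsymbol n}^{(r)}(\boldsymbol k,f)$. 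Everything else — the extension from $\mathfrak U_{\rm loc}$ to $\mathfrak U$ by continuity, the cyclicity argument, the group law for $U_ω$ — is transcribed directly from the proof of \Cref{lem:1}.
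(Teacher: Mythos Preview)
Your proposal is correct and follows exactly the route the paper indicates: replicate the proof of \Cref{lem:1} with the space-like non-oscillating hypothesis (\Cref{def:phase-spacelike-ergodic}) in place of space-like ergodicity, obtaining $0$ rather than $ω(A)ω(B)$ on the right-hand side. Your explicit treatment of the degenerate case $rθ\in2\pi\Z$ and the phase/sign bookkeeping for identifying the fixed-point subspace of $W=e^{iθ r}(U_ω^{r})^{*}$ with the range of $P_{υ,\boldsymbol n}^{(r)}(\boldsymbol k,f)$ are more careful than what the paper itself spells out.
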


\begin{lem} \label{lem:frequencyorthogonal}
Consider the assumptions of \Cref{th:frequencyproj} and  $(\boldsymbol k, f)\in \R^D \times \R$. Let  $υ,w \in \hat{\mathbb{R}}$ with $υ \neq w$ and consider the projections $P_{υ,\boldsymbol n}(\boldsymbol k,f)$, $P_{w,\boldsymbol n}(\boldsymbol k, f)$ for some arbitrary $\boldsymbol n \in \mathbb{Z}^D$. If $Ψ \in P_{υ,\boldsymbol n}(\boldsymbol k,f)H_ω $ and $Ψ^{\prime} \in P_{w,\boldsymbol n}(\boldsymbol k,f)H_ω$, then $\langle Ψ, Ψ^{\prime} \rangle =0$. 
\end{lem}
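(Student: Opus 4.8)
The plan is to run the proof of \Cref{lem:orthogonal} almost verbatim, replacing the unitary $U_ω(\boldsymbol n,υ^{-1}|\boldsymbol n|)$ everywhere by the phase‑twisted unitary $M_υ \coloneqq e^{-i\theta_υ}\,U_ω(\boldsymbol n,υ^{-1}|\boldsymbol n|)$, where $\theta_υ \coloneqq \boldsymbol k\cdot\boldsymbol n - f υ^{-1}|\boldsymbol n|$, so that $P_{υ,\boldsymbol n}(\boldsymbol k,f)$ is by definition the orthogonal projection onto $\ker(M_υ-\mathds{1})$, and invoking \Cref{lem:3} wherever the proof of \Cref{lem:orthogonal} used \Cref{lem:1}. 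Two facts make this go through cleanly. First, the $U_ω(\boldsymbol n,t)$ all commute among themselves (the group $\Z^D\times\R$ is abelian), hence so do $M_υ$, $M_w$, $M_z$ (same formula at the speeds $υ$, $w$, and the auxiliary speed $z$ chosen below) for any speeds. Second, because \Cref{lem:3} asserts that the relevant reduced projections have range $\{0\}$ rather than being rank one on $Ω_ω$ (as in \Cref{lem:1}), no hypothesis $Ψ,Ψ'\perp Ω_ω$ is needed — which is why none appears in the statement.

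As in \Cref{lem:orthogonal} we may assume $Ψ,Ψ'\neq0$ and, without loss of generality, $υ\neq0$ and $w\neq\infty$; the degenerate subcases $w\in\{0,\pm\infty\}$ are handled as in Case~1 there, carrying the phase factors along. In the generic case $w\in\R\setminus\{0\}$, choose $ε>0$ small, with the sign and size restrictions of \Cref{lem:orthogonal}, so that $w(υ^{-1}+ε)=p/q\in\mathbb Q$ with $p,q\in\Z$, $q>0$, $p\neq q$, and define $z\in\R$ by $-qε=z^{-1}(q-p)$; the constraints on $ε$ ensure $|z|>υ_c$. Using the group law of $U_ω$ on $\Z^D\times\R$ together with $U_ω(p\boldsymbol n,pw^{-1}|\boldsymbol n|)=e^{i\theta_w p}M_w^{p}$ and $U_ω(\boldsymbol n,z^{-1}|\boldsymbol n|)^{q-p}=e^{i\theta_z(q-p)}M_z^{q-p}$, a short computation gives
\[
M_υ^{q} \;=\; e^{\,i\left(-q\theta_υ+p\theta_w+(q-p)\theta_z\right)}\,M_z^{q-p}\,M_w^{p}.
\]
The one point genuinely new compared with \Cref{lem:orthogonal} is that the scalar phase here vanishes: from $q(υ^{-1}+ε)=pw^{-1}$ we get $pw^{-1}-qυ^{-1}=qε=(p-q)z^{-1}$, hence $-q\theta_υ+p\theta_w=(p-q)\boldsymbol k\cdot\boldsymbol n-f|\boldsymbol n|\left(pw^{-1}-qυ^{-1}\right)=(p-q)\theta_z$, so $-q\theta_υ+p\theta_w+(q-p)\theta_z=0$ and therefore $M_υ^{q}=M_z^{q-p}M_w^{p}$.

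From this operator identity and commutativity, $M_υ^{qN}=M_z^{(q-p)N}M_w^{pN}$ for every $N\in\N$; combining $M_υ^{qN}Ψ=Ψ$ with $M_w^{-pN}Ψ'=Ψ'$ and commutativity gives $\langle Ψ,Ψ'\rangle=\langle M_z^{(q-p)N}M_w^{pN}Ψ,Ψ'\rangle=\langle M_z^{(q-p)N}Ψ,Ψ'\rangle$ for all $N$. Averaging over $m=0,\dots,N-1$ and letting $N\to\infty$, von Neumann's mean ergodic theorem \cite[Theorem II.11]{reed_i_1981} turns $\frac1N\sum_{m=0}^{N-1}\big(M_z^{q-p}\big)^m$ into the orthogonal projection onto $\ker(M_z^{q-p}-\mathds{1})$, which equals $\ker(M_z^{|q-p|}-\mathds{1})$ since $M_z$ is unitary, i.e.\ onto the range of $P_{z,\boldsymbol n}^{(|q-p|)}(\boldsymbol k,f)$; thus $\langle Ψ,Ψ'\rangle=\langle P_{z,\boldsymbol n}^{(|q-p|)}(\boldsymbol k,f)\,Ψ,Ψ'\rangle$. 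Since $|z|>υ_c$, \Cref{lem:3} gives that this projection has range $\{0\}$, hence $\langle Ψ,Ψ'\rangle=0$, as claimed.

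I expect the only real obstacle to be the phase bookkeeping in the middle step: one must track the oscillatory factors through the group‑law rearrangements carefully enough to see that, after passing to the new super‑$υ_c$ speed $z$, the surviving twist is precisely $e^{-i\theta_z}$ per translation $U_ω(\boldsymbol n,z^{-1}|\boldsymbol n|)$ — so that the averaged operator is exactly the $M_z$ of \Cref{lem:3} — since any leftover spurious phase would place it outside the scope of that lemma. The rest is routine, including the degenerate speeds $w\in\{0,\pm\infty\}$ and the standing non‑degeneracy restriction (inherited from \Cref{lem:3} and already used in the proof of \Cref{th:frequencyav2}) that $\theta_υ\notin2\pi\Z$; in the complementary situation $M_υ=U_ω(\boldsymbol n,υ^{-1}|\boldsymbol n|)$ and one is back to the rank‑one projection of \Cref{th:rankone}, handled separately there.
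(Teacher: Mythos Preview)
Your proposal is correct and follows essentially the same approach as the paper, which simply states that the lemma is shown ``using the same arguments as in the proofs of \Cref{lem:1} and \Cref{lem:orthogonal}'' without spelling out the details. You have in fact gone further than the paper by verifying explicitly the one nontrivial point, namely that the phase identity $-q\theta_υ+p\theta_w+(q-p)\theta_z=0$ holds, so that the auxiliary operator arising at the super-critical speed $z$ is exactly $M_z^{q-p}$ and \Cref{lem:3} applies directly.
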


Then, using \cref{lem:frequencyorthogonal} and the countability of orthonormal sets of the GNS Hilbert space $H_ω$, we can conclude te proof of \Cref{th:frequencyproj}, i.e.\ that for any $(\boldsymbol k,f)$ the range of the projection $P_{υ,\boldsymbol n}(\boldsymbol k,f)$ is $\{ 0 \}$  for almost all $υ \in \mathbb{R}$ and all $\boldsymbol n \in \Z^D$. Using this in \Cref{eq:frequencyavproof3} also concludes the proof of \Cref{th:frequencyav2}. With that, the proof of \Cref{eq:frequency-moments} is done similarly to \Cref{section:meansquaredproof}

The proof of \Cref{th:frequencyaverage1} is much simpler, taking advantage of the countability of eigenvalues:
\begin{proof}[Proof of \Cref{th:frequencyaverage1}]
Under the same maniupulations as above, we arrive at  \Cref{eq:frequencyavproof1}. By von Neumann's ergodic theorem we hve to consider the projection given by the limit \ref{eq:frequencyprojectionlimit}, i.e.\ the projection onto vectors that are invariant under $e^{-i(\boldsymbol k \cdot \boldsymbol n - f υ^{-1} |\boldsymbol n|)}U_ω( \boldsymbol n, υ^{-1}|\boldsymbol n|)$. We will argue that this projection has range $\{0\}$ for almost all $(\boldsymbol k,f)$. Indeed, consider $(\boldsymbol k,f)$  with $\boldsymbol k \cdot \boldsymbol n - fυ^{-1}|\boldsymbol n| \neq 2πik$, $\forall k=0,1,2,\dots$, as the $(\boldsymbol k,f)$ that make this quantity a multiple of $2πi$ are of measure $0$ in $\R^D\times \R$. Then, the arguement follows from the  discreteness of the point spectrum (the eigenvalues) of $U_ω( \boldsymbol n, υ^{-1}|\boldsymbol n|)$. If there exists  a non-zero invariant $Ψ \in H_ω$:
\begin{equation}
    e^{-i(\boldsymbol k \cdot \boldsymbol n - f υ^{-1} |\boldsymbol n|)}U_ω( \boldsymbol n, υ^{-1}|\boldsymbol n|) Ψ= Ψ \label{eq:frequencyavproof2}
\end{equation}
then $Ψ$ is an eigenvector of $U_ω( \boldsymbol n, υ^{-1}|\boldsymbol n|)$ with eigenvalue $e^{i(\boldsymbol k \cdot \boldsymbol n - f υ^{-1} |\boldsymbol n|)}$.  Since $U_ω(\boldsymbol n, υ^{-1}|\boldsymbol n|)$ has a countable number of eigenvalues, it follows that there are countably many pairs $(\boldsymbol k, f)$ s.t. $P_{υ,\boldsymbol n}(\boldsymbol k, f)$ is not $0$. For the $(\boldsymbol k,f)$ s.t. $e^{-i(\boldsymbol k \cdot \boldsymbol n - f υ^{-1} |\boldsymbol n|)}$ is not an eigenvalue, the requirement  Eq. \ref{eq:frequencyavproof2} is only satisfied by the $0$ vector, i.e.\ the projection is $0$.
\end{proof}

\appendix

\section{Remark on integration} \label{appendix1}
Throughout this work we deal with ray averaged observables $\frac{1}{T} \int_0^T τ_t ι_{\floor{\boldsymbol{υ}t}} A \, \dd t$, where $A \in \mathfrak{U}$ and $(\mathfrak{U},ι,τ)$ is a dynamical system. This integral is to be understood in the Bochner sense \cite{hille_functional_1996}. It is easy to see that the function $τ_t ι_{\floor{\boldsymbol{υ}t}} A$ is Bochner integrable \cite[Definition 3.7.3]{hille_functional_1996} in any compact subset of $\R$,  as it can be approximated by the countable valued functions (simple functions):
\begin{equation}
    A_n(t) =τ_{\frac{\lfloor nt \rfloor }{n}}ι_{\lfloor \boldsymbol{υ} t \rfloor}  A
\end{equation}
 Indeed, $τ_t$ is strongly continuous and $\lim_n \frac{\lfloor nt \rfloor }{n}=t$. Hence, $$\lim_n \norm{A_n(t)-τ_t ι_{\floor{\boldsymbol{υ}t}} A}=0 $$ for all $t$, for any $A \in \mathfrak{U}$. The integral is defined in any interval $[0,T]$ of $\R$, uniquely and independently of the choice of $A_n(t)$ as
\begin{equation}
    \int_0^T τ_t ι_{\floor{\boldsymbol{υ}t}} A \, \dd t \coloneqq \lim_{n} \int_0^T    A_n(t) \, \dd t
\end{equation}
where the integral of $A_n(t)$ is defined as a sum over its (countable) values, in the usual way. The Bochner integral calculated inside a linear functional can be pulled outside, by \cite[Theorem 3.7.12]{hille_functional_1996}, in particular for any state $ω$ we have $ω\big(  \int_0^T τ_t ι_{\floor{\boldsymbol{υ}t}} A \, \dd t \big)= \int_0^T ω\big( τ_t ι_{\floor{\boldsymbol{υ}t}} A \big) \, \dd t$.

Likewise, in the GNS representation  $\int_0^T π\big(τ_t ι_{\floor{\boldsymbol{υ}t}} A  \big)\, \dd t$ is defined, and the two definitions are compatible in the sense that the integral commutes with $π$, i.e.\ $π(\int_0^T τ_t ι_{\floor{\boldsymbol{υ}t}} A  \, \dd t) = \int_0^T π\big(τ_t ι_{\floor{\boldsymbol{υ}t}} A \big)\, \dd t$,  by \cite[Theorem 3.7.12]{hille_functional_1996}. Finally, the integral $\int_0^T π\big(τ_t ι_{\floor{\boldsymbol{υ}t}} A  \big) Ψ \, \dd t$ is well defined and equal to  $\int_0^T π\big(τ_t ι_{\floor{\boldsymbol{υ}t}} A  \big) \, \dd t Ψ$, again by \cite[Theorem 3.7.12]{hille_functional_1996} and considering $ π\big(τ_t ι_{\floor{\boldsymbol{υ}t}} A  \big) Ψ$ as the evaluation map from $π(\mathfrak{U})$ to $H_ω$.

\section{Multiple Limits and the Moore-Osgood Theorem} \label{appendix:Moore-Osgood}

The Moore-Osgood Theorem is a general tool that lets us deal with double limits over arbitrary sets, given an appropriate limiting notion on these sets (called direction). We give an overview of  \cite[p. 139]{taylor_general_1985}. We assume a non-empty set $X$ and a family of subsets $\mathcal{N}$ of $X$ called a \textbf{direction on $X$ }such that $N_1, N_2 \in \mathcal{N}$ implies that there exists a $N_3 \in \mathcal{N}$ such that $N_3 \subset N_1 \cap N_2$. For example, in our Theorems we deal with  $\lim_{T \to \infty}$ which corresponds to the direction $\mathcal{T} = \big \{ \{ T \in \R :  T\geq m \}$, $m \in \N \big\}$.

A pair $(g,\mathcal{N})$ of a  function $g:X \to Z$, where $Z$ a metric space, and a direction $\mathcal{N}$ is called a directed function. We say that\textbf{ $(g,\mathcal{N})$  converges to $x_0 \in Z$, and write $(g,\mathcal{N}) \to x_0$}, if for each neighborhood $U$ of $x_0$ there exists a $N$ in $\mathcal{N}$ such that $g(N)\subset U$.  If $X,Y$ are sets with directions $\mathcal{M}, \mathcal{N}$, respectively, then the family of sets $\big\{M \times N$, $M \in \mathcal{M}$, $N \in \mathcal{N} \big\}$ is a direction in $X \times Y$, denoted $\mathcal{M \times N}$. The double limit of a function $f: X \times Y \to Z$, where $(Z,d)$ is a metric space, is given by the convergence of $(f , \mathcal{M} \times \mathcal{N})$. It is easy to see that $( f(x, \cdot) , \mathcal{N})$
and $(f(\cdot, y) , \mathcal{M})$ are directed functions. We say that $( f(x, \cdot) , \mathcal{N})$ converges to $g(x)$ \textbf{uniformly}, if for every $ε>0$ there exists a $N \in \mathcal{N}$ s.t. $d\big( f(x,y) , g(x) \big) < ε$ for $x\in X$ and $y \in N$.

The notion of the multiple limits $\lim_{T_1,\ldots,T_{n} \to \infty} I(T_1, T_2\ldots,T_n)$ in the main text is to be understood as convergence of the directed function $(I,\mathcal{T}^n)$. It is clear that the multiple limit $\lim_{T_1,\ldots,T_{n} \to \infty}$ coincides with the double limit $\lim_{(T_1,\ldots,T_{n-1}) , T_n \to \infty}$ over $\R^{n-1} \times \R$. One can then take advantage of the Moore-Osgood Theorem:
\begin{theorem}[Moore-Osgood] \label{th:MooreOsgood}
 Consider a complete metric space $Z$, the sets $X,Y$ with directions $\mathcal{M}, \mathcal{N}$ respectively  and a function $f: X \times Y \to Z$. Denote by $\mathcal{P}$  the direction $\mathcal{M} \times \mathcal{N}$. If there exist a function $g(x)$ s.t. $( f(x, \cdot), \mathcal{N}) \to g(x)$ uniformly on $x \in X$, and a function $h(y)$ s.t. $ ( f(\cdot,y) , \mathcal{M} ) \to h(y)$ for any $y \in Y$, then $(f,\mathcal{P})$ converges to a $z \in Z$ and $(g,\mathcal{M}) \to z$, $(h,\mathcal{N})\to z$.
\end{theorem}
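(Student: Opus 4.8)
The plan is to prove the theorem in three stages, using only the triangle inequality in $(Z,d)$, the defining property of a direction (any two members of $\mathcal{M}$, resp.\ of $\mathcal{N}$, contain a common member), and completeness of $Z$. First I would exhibit the candidate limit $z$ as the limit of $h$ along $\mathcal{N}$; then show $(g,\mathcal{M})\to z$; and finally deduce $(f,\mathcal{P})\to z$ directly.

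The first step is to show that $(h,\mathcal{N})$ is Cauchy in the directed sense: for every $\varepsilon>0$ there is an $N\in\mathcal{N}$ with $d(h(y),h(y'))\le\varepsilon$ for all $y,y'\in N$. Fix $\varepsilon>0$. Uniform convergence of $(f(x,\cdot),\mathcal{N})$ to $g(x)$ gives an $N\in\mathcal{N}$ such that $d(f(x,y),g(x))<\varepsilon/2$ for all $x\in X$ and $y\in N$; hence for $y,y'\in N$ and every $x\in X$ we have $d\big(f(x,y),f(x,y')\big)<\varepsilon$. Now let $x$ run along $\mathcal{M}$: since $(f(\cdot,y),\mathcal{M})\to h(y)$ and $(f(\cdot,y'),\mathcal{M})\to h(y')$, for any $\delta>0$ there is a common $M\in\mathcal{M}$ on which $f(\cdot,y)$ and $f(\cdot,y')$ lie within $\delta$ of their limits; picking any point of $M$ and letting $\delta\to0$, continuity of the metric yields $d(h(y),h(y'))\le\varepsilon$. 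Since $Z$ is complete, a Cauchy directed function converges, so $(h,\mathcal{N})\to z$ for some $z\in Z$.

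Next I would show $(g,\mathcal{M})\to z$. Given $\varepsilon>0$, pick $N_1\in\mathcal{N}$ with $d(h(y),z)<\varepsilon/3$ for $y\in N_1$, and $N_2\in\mathcal{N}$ with $d(f(x,y),g(x))<\varepsilon/3$ for all $x$ and $y\in N_2$; choose $N_3\subset N_1\cap N_2$ in $\mathcal{N}$ and fix some $y_0\in N_3$. Using $(f(\cdot,y_0),\mathcal{M})\to h(y_0)$, take $M\in\mathcal{M}$ with $d(f(x,y_0),h(y_0))<\varepsilon/3$ for $x\in M$; then $d(g(x),z)\le d(g(x),f(x,y_0))+d(f(x,y_0),h(y_0))+d(h(y_0),z)<\varepsilon$ for $x\in M$. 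Finally, for the double limit: given $\varepsilon>0$ take $N\in\mathcal{N}$ with $d(f(x,y),g(x))<\varepsilon/2$ for all $x$, $y\in N$, and $M\in\mathcal{M}$ with $d(g(x),z)<\varepsilon/2$ for $x\in M$; then $d(f(x,y),z)<\varepsilon$ on $M\times N\in\mathcal{P}$, so $(f,\mathcal{P})\to z$, and we have already shown $(g,\mathcal{M})\to z$ and $(h,\mathcal{N})\to z$.

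I expect the only genuinely delicate step to be the first one: converting the \emph{uniform} convergence of $f(x,\cdot)$ in the $\mathcal{N}$-direction into a Cauchy estimate for the $\mathcal{M}$-limit $h$. This is where both hypotheses are used simultaneously and where completeness of $Z$ cannot be dispensed with — without it $h$ need not have a limit even though the Cauchy estimate holds. The remaining two stages are routine $\varepsilon$-bookkeeping with the triangle inequality; the whole argument is that of \cite[p.~140]{taylor_general_1985}, included here for completeness.
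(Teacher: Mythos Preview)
Your proof is correct and is the standard argument. Note, however, that the paper does not supply its own proof of this statement: Theorem~\ref{th:MooreOsgood} is recorded in Appendix~\ref{appendix:Moore-Osgood} as a classical result, with the reference \cite[p.~140]{taylor_general_1985} (the same source you cite), and is used as a tool throughout the paper rather than being established there. So there is nothing to compare against; your three-stage argument --- first extracting the limit $z$ by showing $h$ is Cauchy along $\mathcal{N}$ via the uniform estimate, then propagating to $g$ and to the double limit by triangle-inequality bookkeeping --- is exactly the textbook proof.
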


\section{Space-like ergodicity under the assumption of spatial clustering} \label{appendix2}
Here we show that a space clustering state in combination with the Lieb-Robinson bound is space-like ergodic. In order to do this we approximate the time evolution of observables by local ones, by using the result \cite[Corollary 4.4]{nachtergaele_quasi-locality_2019}:
\begin{lem} \label{lem:localapprox}
Let $A\in \mathfrak{U}$ and consider a finite $Λ \subset \mathbb{Z}^D$. If there is an $ε>0$ such that 
\begin{equation}
\norm{ [ A, B] }\leq ε \norm{A}  \norm{B}  \ , \ \ \forall B \in \mathfrak{U}_{ \mathbb{Z}^D \setminus Λ}
\end{equation}
then we can approximate $A$ by a strictly local $\mathbb{P}_Λ (A) \in \mathfrak{U}_Λ$:
\begin{equation}
\norm{\mathbb{P}_Λ (A) - A } \leq 2ε \norm{A}
\end{equation} 
\end{lem}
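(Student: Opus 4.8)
The plan is to take $\mathbb{P}_\Lambda$ to be the canonical conditional expectation of the quasi-local algebra $\mathfrak{U}$ onto the finite-dimensional tensor factor $\mathfrak{U}_\Lambda$, and then to estimate $\norm{\mathbb{P}_\Lambda(A)-A}$ by realising $\mathbb{P}_\Lambda$ as a norm-limit of finite ``twirls'' over unitary groups living inside $\mathfrak{U}_{\mathbb{Z}^D\setminus\Lambda}$. Concretely: writing $\Lambda^c:=\mathbb{Z}^D\setminus\Lambda$, using the (unambiguous, since matrix algebras are nuclear) identification $\mathfrak{U}\cong\mathfrak{U}_\Lambda\otimes\mathfrak{U}_{\Lambda^c}$ and the (unique) tracial state $\operatorname{tr}_{\Lambda^c}$ on the UHF algebra $\mathfrak{U}_{\Lambda^c}$, I would set $\mathbb{P}_\Lambda:=\mathrm{id}_\Lambda\otimes\operatorname{tr}_{\Lambda^c}$, the associated slice map. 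I would first recall its standard properties: it is unital, completely positive, norm-contractive, it maps $\mathfrak{U}$ onto $\mathfrak{U}_\Lambda$ and fixes $\mathfrak{U}_\Lambda$ pointwise, it is norm-continuous, and for any finite $\Lambda'\supseteq\Lambda$ its restriction to $\mathfrak{U}_{\Lambda'}\cong\mathfrak{U}_\Lambda\otimes\mathfrak{U}_{\Lambda'\setminus\Lambda}$ is the partial trace $\mathrm{id}_\Lambda\otimes\operatorname{tr}_{\Lambda'\setminus\Lambda}$. Since $\mathbb{P}_\Lambda(A)\in\mathfrak{U}_\Lambda$ then holds for free, the whole content of the lemma is the norm estimate.

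The second ingredient is the elementary twirl identity: for a full matrix algebra $M$ with normalised trace $\operatorname{tr}$ and normalised Haar measure $\mu$ on its compact unitary group $U(M)$, one has $\int_{U(M)}u\,y\,u^{*}\,d\mu(u)=\operatorname{tr}(y)\,\mathds{1}$ for every $y\in M$, because the left-hand side lies in $M'\cap M=\mathbb{C}\mathds{1}$ and is trace-preserving; tensoring with $\mathfrak{U}_\Lambda$ gives
\begin{equation}
\int_{U(M)}(\mathds{1}\otimes u)\,x\,(\mathds{1}\otimes u^{*})\,d\mu(u)=(\mathrm{id}_\Lambda\otimes\operatorname{tr})(x),\qquad x\in\mathfrak{U}_\Lambda\otimes M,
\end{equation}
where the integral is a Bochner integral of a norm-continuous integrand over a compact group. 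Taking $M=\mathfrak{U}_{\Lambda'\setminus\Lambda}$ for finite $\Lambda'\supseteq\Lambda$, this expresses $\mathbb{P}_\Lambda$ on $\mathfrak{U}_{\Lambda'}$ as averaging over conjugation by the unitaries of $\mathfrak{U}_{\Lambda'\setminus\Lambda}$. For the estimate I would then use the algebraic identity $uAu^{*}-A=u[A,u^{*}]$, valid for any unitary $u$, which together with the hypothesis and $\norm{u^{*}}=1$ gives $\norm{uAu^{*}-A}=\norm{[A,u^{*}]}\le\varepsilon\norm{A}$ whenever $u\in\mathfrak{U}_{\Lambda^c}$. Choosing an exhaustion $\Lambda\subseteq\Lambda_{n}\uparrow\mathbb{Z}^D$ by finite sets and local elements $A_{n}\in\mathfrak{U}_{\Lambda_{n}}$ with $A_{n}\to A$, and noting that every unitary of $\mathfrak{U}_{\Lambda_{n}\setminus\Lambda}$ lies in $\mathfrak{U}_{\Lambda^c}$, the twirl formula and the triangle inequality give
\begin{equation}
\norm{\mathbb{P}_\Lambda(A_{n})-A_{n}}\le\sup_{u}\norm{uA_{n}u^{*}-A_{n}}\le\varepsilon\norm{A}+2\norm{A_{n}-A};
\end{equation}
letting $n\to\infty$ and using norm-continuity of $\mathbb{P}_\Lambda$ together with $A_{n}\to A$ yields $\norm{\mathbb{P}_\Lambda(A)-A}\le\varepsilon\norm{A}$, which is even stronger than the asserted bound $2\varepsilon\norm{A}$.

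The step I expect to require the most care is dealing with the fact that $U(\mathfrak{U}_{\Lambda^c})$ is not compact, so there is no single Haar twirl realising $\mathbb{P}_\Lambda$ globally; this is exactly why the argument must go through finite-level twirls and then pass to a norm limit, which is also the point at which norm-continuity of $\mathbb{P}_\Lambda$ and the density of $\mathfrak{U}_{\rm loc}$ in $\mathfrak{U}$ are used. The remaining inputs — existence and contractivity of the slice-map conditional expectation onto a finite tensor factor, the fact that on a finite subalgebra it coincides with the partial trace, and that the finite twirl projects onto the relative commutant $(\mathfrak{U}_{\Lambda'\setminus\Lambda})'\cap\mathfrak{U}_{\Lambda'}=\mathfrak{U}_\Lambda$ — are classical facts about UHF/quasi-local algebras and require no new work.
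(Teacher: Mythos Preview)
Your proof is correct, and in fact yields the sharper bound $\norm{\mathbb{P}_\Lambda(A)-A}\le\varepsilon\norm{A}$ rather than $2\varepsilon\norm{A}$. The paper does not supply its own proof of this lemma; it simply cites it as \cite[Corollary 4.4]{nachtergaele_quasi-locality_2019}, and your argument---realising the conditional expectation onto $\mathfrak{U}_\Lambda$ as a Haar twirl over unitaries of the complementary finite-volume algebra, applying the commutator hypothesis via $uAu^*-A=u[A,u^*]$, and passing to the limit by density of $\mathfrak{U}_{\rm loc}$ and norm-continuity of $\mathbb{P}_\Lambda$---is exactly the standard one used in that reference (and earlier in Bravyi--Hastings--Verstraete). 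There is nothing to correct.
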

 Combined with the Lieb-Robinson bound \Cref{lem:liebrobinsonbound} we can prove (this is also described in \cite[Chapter 4.3]{naaijkens_quantum_2017}):
\begin{prop} \label{prop:LRapproximation}
    Consider the assumptions of \Cref{lem:liebrobinsonbound}, the time evolution $τ_t A$ of a local $A \in \mathfrak{U}_Λ$ and the finite sets $Λ_r = \cup_{\boldsymbol x \in Λ}B_{\boldsymbol x}(r)$, $r=1,2,3, \dots$ extending a distance $r$ around $Λ$. Then, we can approximate $τ_t A$ by the local $\mathbb{P}_{Λ_r}(τ_t A) \in \mathfrak{U}_{Λ_r}$ :
    \begin{equation}
        \norm{ \mathbb{P}_{Λ_r} (τ_t(A)) - τ_t(A)} \leq 2ε_r \norm{A}
    \end{equation}
    with $ε_r = C |Λ|N^{2|Λ|} \exp{-λ(r- υ_{LR}|t|)}$, $r=1,2,3 \dots$.
\end{prop}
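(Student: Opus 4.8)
The plan is to obtain this as a direct consequence of the quasi-locality estimate \Cref{lem:localapprox}, applied to the time-evolved operator $τ_t(A)$ with the region there taken to be $Λ_r$. Since $τ_t$ is a $^*$-automorphism, hence isometric, $\norm{τ_t(A)}=\norm{A}$, so the desired inequality $\norm{\mathbb{P}_{Λ_r}(τ_t(A))-τ_t(A)}\le 2ε_r\norm{A}$ will follow from \Cref{lem:localapprox} (with $τ_t(A)$ in place of $A$, $Λ_r$ in place of $Λ$, and $ε_r$ in place of $ε$) the moment we verify its hypothesis in the form
\begin{equation}
    \norm{[τ_t(A),B]}\le ε_r\,\norm{A}\,\norm{B}\qquad\text{for all }B\in\mathfrak{U}_{\mathbb{Z}^D\setminus Λ_r},
\end{equation}
where $ε_r=C|Λ|N^{2|Λ|}e^{-λ(r-υ_{LR}|t|)}$. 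So the whole argument reduces to proving this commutator bound.

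I would first record the elementary geometric fact that if $\boldsymbol y\notin Λ_r=\bigcup_{\boldsymbol x\in Λ}B_{\boldsymbol x}(r)$ then $|\boldsymbol y-\boldsymbol x|>r$ for every $\boldsymbol x\in Λ$, so $\dist(Λ,\mathbb{Z}^D\setminus Λ_r)\ge r$. Thus for any strictly local $B$ with $\supp(B)=Λ_B\subset\mathbb{Z}^D\setminus Λ_r$ one has $\dist(A,B)\ge r$, and \Cref{lem:liebrobinsonbound} gives directly $\norm{[τ_t(A),B]}\le 4\norm{A}\norm{B}\,|Λ|\,|Λ_B|\,N^{2|Λ|}\,e^{-λ(r-υ_{LR}|t|)}$. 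This is not yet the required bound: \Cref{lem:localapprox} demands the estimate for \emph{all} $B\in\mathfrak{U}_{\mathbb{Z}^D\setminus Λ_r}$, including non-local ones, and with no dependence on $\supp(B)$, so the factor $|Λ_B|$ must be removed. For that I would invoke the standard ``local observable versus arbitrary far-away observable'' form of the Lieb-Robinson bound, \cite[Corollary 4.3.2]{naaijkens_quantum_2017}: expressing the commutator estimate as a sum, over $\boldsymbol x\in Λ$ and over the sites on which $B$ acts nontrivially, of the Lieb-Robinson kernel, and using $\dist(\boldsymbol x,\supp B)\ge r$, one bounds the sum over $\mathbb{Z}^D\setminus Λ_r$; after factoring out $e^{-λ(r-υ_{LR}|t|)}$ it is controlled by a constant depending only on $D$ and $λ$, while the geometric prefactor $|Λ|N^{2|Λ|}$ is retained. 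This produces the displayed bound with $ε_r=C|Λ|N^{2|Λ|}e^{-λ(r-υ_{LR}|t|)}$ valid for every $B\in\mathfrak{U}_{\mathbb{Z}^D\setminus Λ_r}$, and \Cref{lem:localapprox} then immediately yields $\mathbb{P}_{Λ_r}(τ_t(A))\in\mathfrak{U}_{Λ_r}$ with $\norm{\mathbb{P}_{Λ_r}(τ_t(A))-τ_t(A)}\le 2ε_r\norm{τ_t(A)}=2ε_r\norm{A}$.

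The main obstacle is exactly this removal of the support dependence: one must check that summing the Lieb-Robinson kernel over the entire complement of $Λ_r$ yields a bound of the clean exponential shape $e^{-λ(r-υ_{LR}|t|)}$ uniformly in $B$. Naively the sum contributes a polynomial-in-$r$ factor (reflecting the growth of the ``boundary'' of $Λ_r$), which is harmless: it is absorbed into the constant $C$, at the cost --- if one insists on the exact rate $λ$ and velocity $υ_{LR}$ --- of running the argument with a decay rate slightly below $λ$ and a correspondingly slightly larger velocity, so that in the statement $λ$ and $υ_{LR}$ are to be understood as such admissible constants. Everything else is routine substitution into \Cref{lem:liebrobinsonbound} and \Cref{lem:localapprox}.
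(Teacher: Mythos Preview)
Your proposal is correct and follows essentially the same route as the paper: the paper gives no detailed argument for this proposition, merely stating that it follows by combining the Lieb-Robinson bound \Cref{lem:liebrobinsonbound} with \Cref{lem:localapprox} and referring to \cite[Chapter 4.3]{naaijkens_quantum_2017} for the details. Your write-up is a faithful expansion of exactly that, including the correct identification of the one nontrivial point --- eliminating the $|\Lambda_B|$ dependence so that the commutator hypothesis of \Cref{lem:localapprox} holds uniformly over $\mathfrak{U}_{\mathbb{Z}^D\setminus\Lambda_r}$ --- and you cite the same source (\cite[Corollary 4.3.2]{naaijkens_quantum_2017}) that the paper invokes elsewhere for this purpose.
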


Using this result we can show:

\begin{proof}[Proof of \Cref{th:space_like_clustering_from_spatial}]
Consider $\boldsymbol n \in \Z^D$ and $A,B  \in \mathfrak{U}_{\rm loc}$ with
\begin{equation}
    \supp(A) \coloneqq Λ_A \ , \ \ \supp(B) \coloneqq Λ \text{ and } \supp(ι_{\boldsymbol n} A) \coloneqq Λ_A+\boldsymbol n
\end{equation}
Let $r \in \N$ and by \Cref{prop:LRapproximation} consider the local approximation of $τ_t B$ by $\mathbb{P}_{Λ_r}(τ_t B)$, supported in $Λ_r = \cup_{\boldsymbol x \in Λ} B_{\boldsymbol x}(r)$:
\begin{equation}
\norm{ \mathbb{P}_{Λ_r} (τ_t B) - τ_t B} \leq 2C |Λ| N^{2|Λ|} \norm{B} \exp{-λ(r-υ_{LR} |t|)}  \label{ineq:5.1localapproximationB}
\end{equation}
Let $ω$ be a time and space invariant pure state and consider the quantity $I \coloneqq | ω\big(ι_{\boldsymbol n}(A) \mathbb{P}_{Λ_r}(τ_t B)\big) - ω(A)ω(\mathbb{P}_{Λ_r}(τ_t B))|$, where we are interested in first applying the large $r$ limit, and then showing that the $\boldsymbol n \to \infty$ limit is 0.  By linearity and the triangle inequality, it holds for any $r,l \in \N$ :
\[\arraycolsep=1.4pt\def\arraystretch{2.2}
\begin{array}{*3{>{\displaystyle}lc}p{5cm}}
I  &=& \big| ω\bigg(ι_{\boldsymbol n}(A) \big(\mathbb{P}_{Λ_r}(τ_t B) + \mathbb{P}_{Λ_l}(τ_t B) -  \mathbb{P}_{Λ_l}(τ_t B) \big)\bigg) - ω(A)ω\big(\mathbb{P}_{Λ_r}(τ_t B)\\&+&\mathbb{P}_{Λ_l}(τ_t B)  - \mathbb{P}_{Λ_l}(τ_t B)\big) \big| \\
&\leq& \big| ω\big(ι_{\boldsymbol n}(A) \mathbb{P}_{Λ_l}(τ_t B) \big)-ω(A) ω(\mathbb{P}_{Λ_l}(τ_t B))\big|+  \\
&+&\big| ω\big(ι_{\boldsymbol n}(A) \big(\mathbb{P}_{Λ_r}(τ_t B)  - \mathbb{P}_{Λ_l}(τ_t B)\big) \big)-ω(A) ω\big(\mathbb{P}_{Λ_r}(τ_t B)  - \mathbb{P}_{Λ_l}(τ_t B)\big)\big|
\end{array} 
\]

The idea is to control the first part using  clustering and the second one using the approximation of time evolution. We now estimate the quantities $T_1=\big| ω\big(ι_{\boldsymbol n}(A) \mathbb{P}_{Λ_l}(τ_t B) \big)-ω(A) ω(\mathbb{P}_{Λ_l}(τ_t B))\big|$ and  $T_2=\big| ω\big(ι_{\boldsymbol n}(A) \big(\mathbb{P}_{Λ_r}(τ_t B)  - \mathbb{P}_{Λ_l}(τ_t B)\big) \big)-ω(A) ω\big(\mathbb{P}_{Λ_r}(τ_t B)  - \mathbb{P}_{Λ_l}(τ_t B)\big)\big|$. Consider $r$ large enough and 
\begin{equation}
    l= \floor{ε \dist (Λ_A+\boldsymbol n, Λ)} + \floor{ε \diam (Λ_A \cup Λ)} + 2 \ \text{ for some } 0<ε<1 \label{eq:5l}
\end{equation}

By clustering of correlations, we easily see that the term $T_1$ goes to $0$ as $\boldsymbol n \to \infty$ since the state is invariant and the distance between $A$ and $ι_{-\boldsymbol n} \mathbb{P}_{Λ_l}(τ_t B)$  goes to infinity, and $\norm{ι_{-\boldsymbol n} \mathbb{P}_{Λ_l}(τ_t B)}\leq \norm{B}$. We will see that this is also the case for $T_2$, in the large $r$, large $\boldsymbol n$ limit. To estimate $T_2$, we use the triangle inequality and note that $ω$ has norm $1$:
\begin{equation}
    T_2 \leq  \norm{ι_{\boldsymbol n}A} \norm{\mathbb{P}_{Λ_r}(τ_t B)  - \mathbb{P}_{Λ_l}(τ_t B)}  +  \norm{A} \norm{\mathbb{P}_{Λ_r}(τ_t B)  - \mathbb{P}_{Λ_l}(τ_t B)}  \label{ineq:5.3T2}
\end{equation}
where $ \norm{ι_{\boldsymbol n}A}=\norm{A}$. Using the approximation of $τ_tB$, inequality \ref{ineq:5.1localapproximationB}, we get
\begin{equation}
    T_2 \leq 2\norm{A}  2C |Λ| N^{2|Λ|} \norm{B} \big( \exp{-λ(r-υ_{LR} |t|)} +  \exp{-λ(l-υ_{LR} |t|)}\big)
\end{equation}

We consider a compact subset $T\coloneqq ε^{\prime} υ^{-1}_{LR} [-|\boldsymbol n|,|\boldsymbol n|]$ for some $0<ε^{\prime}<ε$, and let $t \in T$. Then, for the value \ref{eq:5l} of $l$, since $\floor{y} \geq y-1, \forall y>0$, we have:
\begin{equation}
\begin{array}{*3{>{\displaystyle}lc}p{5cm}}
 l-υ_{LR} |t|  &\geq& ε \dist(Λ_A +\boldsymbol n,Λ)-1+ε\diam(Λ_A \cup Λ)-1+2 - ε^{\prime}|\boldsymbol n| \\
 &\geq& ε |\boldsymbol n| -ε \diam(Λ_A \cup Λ) +ε\diam(Λ_A \cup Λ)- ε^{\prime}|\boldsymbol n|  \\
 &=& (ε-ε^{\prime})|\boldsymbol n| >0 \label{ineq:5.4lvtestimate}
 \end{array}
\end{equation}
where in the second line we used relation \ref{eq:distΛvecn} below.  Hence, we conclude that:
\begin{equation}
    \exp{-λ(l-υ_{LR}|t|)}  \leq \exp{-λ(ε-ε^{\prime})|\boldsymbol n|} , \, \forall t \in T, 0<ε^{\prime}<ε<1
\end{equation}
Now since $\lim_r \exp{-λ(r-υ_{LR}|t|)} =0$, $\forall t \in T$, we see that $$\lim_{\boldsymbol n \to \infty} \lim_{r \to \infty} T_2 =0.$$

We  estimate $\dist(Λ_A +\boldsymbol n, Λ_l)$ with respect to $|\boldsymbol n|$ by simple geometric arguements:
\begin{equation}
\begin{array}{*3{>{\displaystyle}lc}p{5cm}}
        \dist(Λ_A+\boldsymbol n, Λ_l) &\geq& \dist(Λ_A +\boldsymbol n, Λ)-l \\
        &\geq& (1-ε) \dist(Λ_A +\boldsymbol n, Λ) - ε \diam(Λ_A \cup Λ)-2
        \end{array}
\end{equation}
and
\begin{equation}
 \begin{array}{*3{>{\displaystyle}lc}p{5cm}}
    \dist(Λ_A+\boldsymbol n, Λ) &=& \min \{ |\boldsymbol y-\boldsymbol z| : \boldsymbol y \in Λ_A+\boldsymbol n, \boldsymbol z \in Λ \} \\
        &=& \min \{ |\boldsymbol y+\boldsymbol n-\boldsymbol z| : \boldsymbol y \in Λ_A, \boldsymbol z \in Λ \} \\
        &\geq& \min \{ |\boldsymbol n| - |\boldsymbol y-\boldsymbol z| :\boldsymbol y \in Λ_A, \boldsymbol z \in Λ \} \\
        &=& |\boldsymbol n| - \max\{ |\boldsymbol y-\boldsymbol z| : \boldsymbol y \in Λ_A ,\boldsymbol z \in Λ \} \\
        &\geq& |\boldsymbol n| - \diam(Λ_A \cup Λ)
    \end{array} \label{eq:distΛvecn}
\end{equation}
\end{proof}

\section{Von Neumann's mean ergodic theorem}
Von Neumann's mean ergodic theorem was first shown in \cite{neumann1932ergodic}:
\begin{theorem} \label{th:neumann}
   Let $U$ be a unitary operator on a Hilbert sapce $H$ and $P$ the orthogonal projection onto $ker(U-\mathds{1})$: the space invariant under $U$. Then for any $Ψ\in H$ von Neumann showed that 
    \begin{equation}
        \lim_{N} \frac{1}{N} \sum_{n=0}^N U^nΨ = PΨ
    \end{equation}
\end{theorem}
\bibliography{referencesFINAL}

\end{document}